\def\input@path{{D:/Documents/MATLAB/Beams/FBMR/M_S_Paper_LaTex_Template_2017/Arxiv/}}
\providecommand{\tabularnewline}{\\}
\providecommand{\algorithmname}{Algorithm}
\numberwithin{equation}{section}
\numberwithin{figure}{section}
\theoremstyle{plain}
\newtheorem{thm}{\protect\theoremname}
  \theoremstyle{definition}
  \newtheorem{example}[thm]{\protect\examplename}
  \theoremstyle{plain}
  \newtheorem{lem}[thm]{\protect\lemmaname}
  \theoremstyle{plain}
  \newtheorem{cor}[thm]{\protect\corollaryname}
  \theoremstyle{plain}
  \newtheorem{prop}[thm]{\protect\propositionname}
\definecolor{string}{rgb}{0.7,0.0,0.0}
 \definecolor{comment}{rgb}{0.13,0.54,0.13}
 \definecolor{keyword}{rgb}{0.0,0.0,1.0}
  \providecommand{\corollaryname}{Corollary}
  \providecommand{\examplename}{Example}
  \providecommand{\lemmaname}{Lemma}
  \providecommand{\propositionname}{Proposition}
\providecommand{\theoremname}{Theorem}
  \providecommand{\corollaryname}{Corollary}
  \providecommand{\examplename}{Example}
  \providecommand{\lemmaname}{Lemma}
  \providecommand{\propositionname}{Proposition}
\providecommand{\theoremname}{Theorem}
\begin{document}

\title{Sampling for approximating $R$-limited functions}

\author{\textbf{Can Evren Yarman}}

\address{Schlumberger, High Cross, Madingley Road, Cambridge,CB3 0EL, United
Kingdom (cyarman@slb.com)}
\begin{abstract}
$R$-limited functions are multivariate generalization of band-limited
functions whose Fourier transforms are supported within a compact
region $R\subset\mathbb{R}^{n}$. In this work, we generalize sampling
and interpolation theorems for band-limited functions to $R$-limited
functions. More precisely, we investigated the following question:
``For a function compactly supported within a region similar to $R$,
does there exist an $R$-limited function that agrees with the function
over its support for a desired accuracy?''. Starting with the Fourier
domain definition of an $R$-limited function, we write the equivalent
convolution and a discrete Fourier transform representations for $R$-limited
functions through approximation of the convolution kernel using a
discrete subset of Fourier basis. The accuracy of the approximation
of the convolution kernel determines the accuracy of the discrete
Fourier representation. Construction of the discretization can be
achieved using the tools from approximation theory as demonstrated
in the appendices. The main contribution of this work is proving the
equivalence between the discretization of the Fourier and convolution
representations of $R$-limited functions. Here discrete convolution
representation is restricted to shifts over a compactly supported
region similar to $R$. We show that discrete shifts for the convolution
representation are equivalent to the spectral parameters used in discretization
of the Fourier representation of the convolution kernel. This result
is a generalization of the cardinal theorem of interpolation of band-limited
functions. The error corresponding to discrete convolution representation
is also bounded by the approximation of the convolution kernel using
discretized Fourier basis. 
\end{abstract}

\maketitle

\section{Introduction}

$R$-limited functions are functions whose Fourier transforms are
supported within a region $R\subset\mathbb{R}^{n}$. They are multivariate
generalization of band-limited functions. The terminology was coined
by Slepian in \cite{slepian1964prolate}. In this work, we generalize
sampling and interpolation theorems for band-limited functions to
$R$-limited functions. Specifically, we explore answers to the following
questions: ``For a function compactly supported within a region similar
to $R$, does there exist an $R$-limited function that agrees with
the function over its support? If so, how shall we sample the function
to construct an $R$-limited function that approximates the original
function within a desired accuracy?''. The first question has been
answered in \cite{slepian1964prolate}. Combining these results with
methods from approximation theory, we answer the second question.
Answering how to sample also provides a guide for where to sample.

In our exposition we choose an approximation theory perspective which
provides an alternative insight to understanding of band-limited functions
through discretization of the sine cardinal function as well as a
deterministic framework for constructing sampling schemes for $R$-limited
functions. Starting with the Fourier domain definition of a $R$-limited
function, we write the equivalent convolution representation and write
a discrete Fourier transform representation for $R$-limited functions
through approximation of the convolution kernel using a discrete subset
of Fourier basis. The accuracy of the approximation of the kernel
determines the accuracy of the discrete Fourier representation. Construction
of the discretization can be achieved using the tools from approximation
theory such as generalization of Pad{é} approximation, which is
summarized in Appendix \ref{sec:Generalization-of-Pad=00003D0000E9}.

Our main results, Theorems \ref{thm:discrete-Fourier-approximation-R-limited}
and \ref{thm:sampling-theorem-R_A-limited-functions}, prove the equivalence
between the discretization of the Fourier and convolution representations
that approximate a compactly supported function within a region similar
to $R$. We show that discrete shifts for the convolution representation
are equivalent to the spectral parameters used in discretization of
the Fourier representation of the convolution kernel. Discretization
of the convolution representation is also referred to as sampling
and interpolation theorem.

Theorem \ref{thm:sampling-theorem-R_A-limited-functions} is a generalization
of the sampling and interpolation theorem for band-limited functions
summarized in Theorem \ref{thm:sampling-theorem-bandlimited-functions}.
It also provides a way to analyze and approximate the resulting error.
We show that the error corresponding to discrete convolution representation
is bounded by error obtained from discretization of the Fourier transform
of the convolution kernel. In single dimension, it provides a new
way to prove truncation of discrete Fourier series as well as sinc
interpolation formula. Furthermore, in single dimension, our result
indicates that for a support of interest, instead of uniform sampling,
improvement in discrete the Fourier representation of band-limited
functions can be obtained using Gauss-Legendre type quadratures (see
Appendix \ref{sec:Generalization-of-Pad=00003D0000E9}). This also
raises the questions on what is the most cost efficient way to implement
fast Fourier transforms using Gaussian quadratures which may be addressed
using the ideas from \cite{dutt1993fast,duijndam1999nonuniform,edelman1998future,greengard2004accelerating}
and left for a future discussion. Similar discretizations are obtained
for special cases of $R$-limited functions in Appendix \ref{sec:Kernel-for-T-bandlimited-functions}
and \ref{sec:Cone-limited-functions} where we make use of cascaded
quadratures that are equivalent to Gauss-Legendre or Clenshaw-Curtis
quadratures. While the body of the manuscript contains our main results,
the Appendices also provide as valuable information by providing a
constructive way for computing quadratures to discretize convolution
kernels which can be utilized in the sampling and interpolation theorems.

The outline of the manuscript is as follows. In Section \ref{sec:Conventions},
we present the conventions used in the rest of our discussion. To
motivate the multivariate case, in Section \ref{sec:Bandlimited-functions},
we study discretization of Fourier transform and sinc interpolation
formula for one-dimensional (univariate) band-limited functions, or
band-limited projection of compactly supported functions. Both discrete
Fourier transform and sinc interpolation formulas have been studied
in the literature with many books devoted to this topic. We refer
the reader to \cite{jerri1977shannon,unser2000sampling,stankovic2006some,jerri2013gibbs}
for a comprehensive list of references on these topics. We give an
alternative exposition, which leads to proof of the equivalence of
sampling in the domain of the function (Theorem \ref{thm:sampling-theorem-bandlimited-functions})
and its Fourier transform (Theorem \ref{thm:discrete_Fourier_approx_bandlimited_functions}).
The necessary background material for Section \ref{sec:Bandlimited-functions}
is provided in Appendices \ref{sec:Generalization-of-Pad=00003D0000E9}
and \ref{sec:On-approximations-of-sinc} which discuss Generalization
of Pade approximation and approximations to sine cardinal function.
Compared to band-limited function, sampling and representation of
multivariate functions whose Fourier transforms' support are not similar
to a hypercube is studied and understood less. In Section \ref{sec:R-limited-functions},
we extend our results for band-limited function (Theorems \ref{thm:discrete_Fourier_approx_bandlimited_functions}
and \ref{thm:sampling-theorem-bandlimited-functions}) to R-limited
functions (Theorems \ref{thm:discrete-Fourier-approximation-R-limited}
and \ref{thm:sampling-theorem-R_A-limited-functions}). Examples of
special cases of convolution kernels for $R$-limited functions are
presented in Appendices \ref{sec:Kernel-for-T-bandlimited-functions}
and \ref{sec:Cone-limited-functions}. In Appendix \ref{sec:Kernel-for-T-bandlimited-functions},
we provide a method to construct quadratures for isosceles triangle
and trirectangular tetrahedron which are used to construct quadratures
for equilateral triangle and regular tetrahedron. Similar method is
used in Appendix \ref{sec:Cone-limited-functions} to construct quadratures
for a finite cone and a ball in three dimensions which have practical
importance in multidimensional signal processing seismic data, image
processing and video processing.

\section{Conventions\label{sec:Conventions}}

We employ the following conventions of Fourier transform, inverse
Fourier transform and convolution.

The \emph{Fourier transform}\textbf{\emph{ }}$\mathcal{F}\left[f\right]\left(\mathbf{k}\right)$
of $f\left(\mathbf{x}\right)$, an absolutely integrable function
for $\mathbf{x},\,\mathbf{k}\in\mathbb{R}^{N}$, which we denote by
$\hat{f}\left(\mathbf{k}\right)$, is defined by 
\begin{align}
\mathcal{F}\left[f\right]\left(\mathbf{k}\right)=\hat{f}\left(\mathbf{k}\right) & =\int_{\mathbb{R}^{N}}f\left(\mathbf{x}\right)\mathrm{e}^{-\mathrm{i}2\pi\mathbf{k}\cdot\mathbf{x}}d\mathbf{x}\label{eq:Fourier_transform}
\end{align}
The \emph{inverse Fourier transform} is defined by 
\begin{align}
\mathcal{F}^{-1}\left[\hat{f}\right]\left(\mathbf{x}\right)=f\left(\mathbf{x}\right) & =\int_{\mathbb{R}^{N}}\hat{f}\left(\mathbf{k}\right)\mathrm{e}^{\mathrm{i}2\pi\mathbf{k}\cdot\mathbf{x}}d\mathbf{k}\label{eq:inverse_Fourier_transform}
\end{align}

Denoting the \emph{convolution} operator by $\ast$, convolution of
two functions is defined by 
\begin{align}
\left(f\ast g\right)\left(\mathbf{x}\right) & =\int_{\mathbb{R}^{N}}f\left(\mathbf{y}\right)g\left(\mathbf{x}-\mathbf{y}\right)d\mathbf{y}\label{eq:convolution}
\end{align}
The Fourier transform of the convolutions is the product of the Fourier
transforms:

\begin{align}
\mathcal{F}\left[f\ast g\right]\left(\mathbf{k}\right) & =\hat{f}\left(\mathbf{k}\right)\hat{g}\left(\mathbf{k}\right)\label{eq:Plencherel}
\end{align}
also referred to as convolution theorem.

\section{Band-limited functions \label{sec:Bandlimited-functions}}

We say that $f_{B}\left(t\right)$, for $t\in\mathbb{R}$, is a \emph{band-limited
}function with band-limit $B$ if there exists an $\hat{f_{B}}\left(\omega\right)$
such that 
\begin{align}
f_{B}\left(t\right) & =\int_{-B}^{B}\hat{f}_{B}\left(\omega\right)\mathrm{e}^{\mathrm{i}2\pi t\omega}d\omega\label{eq:band-limited_inverse_Fourier_transform}\\
 & =B\int_{-1}^{1}\hat{f}_{B}\left(B\omega\right)\mathrm{e}^{\mathrm{i}2\pi Bt\omega}d\omega\nonumber 
\end{align}

Given a function $f\left(t\right)$, its \emph{band-limited projection}
$P_{B}\left[f\right]\left(t\right)$, denoted by $f_{B}\left(t\right)$
for short, is defined by 
\begin{align}
P_{B}\left[f\right]\left(t\right)=f_{B}\left(t\right) & =\int_{-B}^{B}\hat{f}\left(\omega\right)\mathrm{e}^{\mathrm{i}2\pi t\omega}d\omega\nonumber \\
 & =B\int_{-1}^{1}\hat{f}\left(B\omega\right)\mathrm{e}^{\mathrm{i}2\pi Bt\omega}d\omega\label{eq:band-limited-projection-Fourier-transform}
\end{align}
or, equivalently, in the convolution representation using Parseval's
theorem 
\begin{align}
P_{B}\left[f\right]\left(t\right)=f_{B}\left(t\right) & =\int_{-\infty}^{\infty}f\left(\tau\right)\,2B\,\mathrm{sinc}\left(2\pi B\left[t-\tau\right]\right)d\tau\label{eq:band-limited-projection-sinc-interp}
\end{align}
where 
\begin{align}
\mathrm{sinc}\left(Bt\right) & =\frac{1}{2B}\int_{-B}^{B}\mathrm{e}^{\mathrm{i}\omega t}d\omega=\int_{0}^{1}\cos\left(B\omega t\right)d\omega=\frac{\mathrm{sin(Bt)}}{Bt}
\end{align}
is the sinc function normalized with band-limit $B$. Note that 
\begin{align}
\hat{f}_{B}\left(\omega\right) & =\hat{f}\left(\omega\right),\quad\omega\in\left[-B,B\right]
\end{align}

\subsection{Discrete Fourier representation of band-limited approximation of
compactly supported functions: }

In this section we derive discrete Fourier approximations of band-limited
projection of compactly supported functions starting from their convolution
representation.

Consider a discretization of the integral representation of sinc (see
Figures \ref{fig:Approximation-of-sinc_in_cosines} and \ref{fig:Approximation-of-sinc_in_cosines-uniform-sampling}
for two examples. Another example in Section 8 of \cite{beylkin2002generalized}.)

\begin{align}
2B\,\mathrm{sinc}\left(Bt\right) & =2\int_{0}^{B}\cos\left(\omega t\right)d\omega\nonumber \\
 & =\sum_{m=1}^{M}2\alpha_{m}\cos\left(B\omega_{m}t\right)+\alpha_{0}+\epsilon_{B}\left(t\right)
\end{align}
for a given $B\in\mathbb{R}^{+}$, with $\alpha_{m}\in\mathbb{R}^{+}$
and $\omega_{m}\in\left[0,1\right]$. Equivalently, using exponentials
instead of cosines, we write 
\begin{align}
2B\,\mathrm{sinc}\left(Bt\right)=\int_{-B}^{B}\mathrm{e}^{\mathrm{i}t\omega}d\omega & =\sum_{m=-M}^{M}\alpha_{m}\mathrm{e}^{\mathrm{i}B\omega_{m}t}+\epsilon_{B}\left(t\right)\label{eq:sinc-in-exp}
\end{align}
where $-\omega_{m}=\omega_{-m}$ and $\alpha_{-m}=\alpha_{m}$. Now
we can prove: 
\begin{thm}
\label{thm:discrete_Fourier_approx_bandlimited_functions}Given a
compactly supported function $f\left(t\right)$ over $\left[-T,T\right]$,
restriction of its band-limited projection, $f_{B}\left(t\right)$,
onto interval $\left[-T,T\right]$ can be approximated as a discrete
sum of Fourier basis by 
\begin{align}
f_{B}\left(t\right) & =\sum_{m=-M}^{M}\alpha_{m}\hat{f}\left(B\omega_{m}\right)\mathrm{e}^{\mathrm{i}2\pi B\omega_{m}t}+\epsilon_{f}\left(t\right)\label{eq:f_B_discrete_Fourier_approx}
\end{align}
using the approximation (\ref{eq:sinc-in-exp}) with the error bound
\begin{align}
\max_{t\in\left[-T,T\right]}\left|\epsilon_{f}\left(t\right)\right| & \le2T\max_{t\in\left[-T,T\right]}\left|f\left(t\right)\right|\max_{t\in\left[-2T,2T\right]}\left|\epsilon_{B}\left(2\pi t\right)\right|.\label{eq:max_epsilon_f}
\end{align}
\end{thm}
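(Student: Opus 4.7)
The plan is to start from the convolution representation (\ref{eq:band-limited-projection-sinc-interp}) of $f_B$, use the compact support of $f$ to restrict the integration domain to $[-T,T]$, substitute the discrete sinc approximation (\ref{eq:sinc-in-exp}) for the convolution kernel, and interpret each resulting oscillatory integral as a value of $\hat{f}$.

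Concretely, I would first collapse the convolution integral in (\ref{eq:band-limited-projection-sinc-interp}) to $\int_{-T}^{T}$ using the fact that $f$ vanishes outside $[-T,T]$, then replace $2B\,\mathrm{sinc}(2\pi B[t-\tau])$ by the right-hand side of (\ref{eq:sinc-in-exp}) evaluated at the rescaled argument $s=2\pi(t-\tau)$. Because the sum in (\ref{eq:sinc-in-exp}) is finite, the interchange of sum and integral is immediate, and the factor $\mathrm{e}^{\mathrm{i}2\pi B\omega_{m}t}$ can be pulled outside the $\tau$-integration. The key step — and really the only conceptual content — is recognizing that, because $f$ is compactly supported in $[-T,T]$, the truncated integral $\int_{-T}^{T}f(\tau)\mathrm{e}^{-\mathrm{i}2\pi B\omega_{m}\tau}d\tau$ coincides with the full-line Fourier transform $\hat{f}(B\omega_{m})$ defined in (\ref{eq:Fourier_transform}). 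This delivers the representation (\ref{eq:f_B_discrete_Fourier_approx}) and identifies the remainder as
\begin{align*}
\epsilon_{f}(t) & =\int_{-T}^{T}f(\tau)\,\epsilon_{B}\!\left(2\pi(t-\tau)\right)d\tau.
\end{align*}

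For the error bound (\ref{eq:max_epsilon_f}), I would apply the triangle inequality inside the integral, pull out $\max_{\tau\in[-T,T]}|f(\tau)|$, and observe that for $t,\tau\in[-T,T]$ the shift $t-\tau$ ranges over $[-2T,2T]$, so $|\epsilon_{B}(2\pi(t-\tau))|$ is controlled uniformly by $\max_{s\in[-2T,2T]}|\epsilon_{B}(2\pi s)|$. The remaining factor $2T$ is just the length of the integration interval.

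The main obstacle, to the extent there is one, is the identification of the truncated oscillatory integrals with exact values of $\hat{f}$: it is precisely the compact-support hypothesis on $f$ that makes the discretization nodes $\omega_{m}$ of the kernel translate into the Fourier-domain sample locations $B\omega_{m}$, and that lets the error inherit the quality of the sinc approximation $\epsilon_{B}$ without requiring any regularity of $f$ beyond boundedness. The remainder of the argument is bookkeeping — a rescaling, a finite-sum interchange, and a triangle-inequality estimate.
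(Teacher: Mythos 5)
Your proposal is correct and follows essentially the same route as the paper: substitute the discrete approximation (\ref{eq:sinc-in-exp}) into the convolution representation (\ref{eq:band-limited-projection-sinc-interp}), use compact support to identify the truncated integrals with $\hat{f}\left(B\omega_{m}\right)$, and obtain $\epsilon_{f}\left(t\right)=\int_{-T}^{T}f\left(\tau\right)\epsilon_{B}\left(2\pi\left[t-\tau\right]\right)d\tau$. Your explicit triangle-inequality derivation of the bound (\ref{eq:max_epsilon_f}) is a detail the paper leaves implicit, and it is carried out correctly.
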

\begin{proof}
For a function $f\left(\tau\right)$ compactly supported on $\tau\in\left[-T,T\right]$,
substituting (\ref{eq:sinc-in-exp}) into (\ref{eq:band-limited-projection-sinc-interp}),
its band-limited projection can be approximated by (\ref{eq:f_B_discrete_Fourier_approx})
where 
\begin{align}
\epsilon_{f}\left(t\right) & =\int_{-T}^{T}f\left(\tau\right)\epsilon_{B}\left(2\pi\left[t-\tau\right]\right)d\tau\label{eq:e_f(t)}
\end{align}
(\ref{eq:f_B_discrete_Fourier_approx}) provides a discretization
of (\ref{eq:band-limited-projection-Fourier-transform}) through approximation
of the sinc function as a sum of cosines. 
\end{proof}
\begin{quote}
\end{quote}
\begin{example}
Choosing 
\begin{align}
\left(\alpha_{m},\omega_{m}\right)_{m=-M}^{M} & =\left(\frac{2B}{2M+1},\frac{2m}{2M+1}\right)_{m=-M}^{M}\label{eq:DFT-quadratures}
\end{align}
for some $M\ge0$, (\ref{eq:f_B_discrete_Fourier_approx}) becomes
the discrete Fourier transform representation of $f_{B}\left(t\right)$:
\begin{align}
f_{B}\left(t\right) & =\frac{2B}{2M+1}\sum_{m=-M}^{M}\hat{f}\left(B\frac{2m}{M+1}\right)\mathrm{e}^{\mathrm{i}2\pi B\frac{2m}{2M+1}t}+\epsilon_{f}\left(t\right)\label{eq:discrete-inverse-Fourier-transform}
\end{align}
The summation term is referred to as the\emph{ discrete inverse Fourier
transform} of $\hat{f}$. (\ref{eq:discrete-inverse-Fourier-transform})
is a Riemann sum approximation of the integral (\ref{eq:band-limited_inverse_Fourier_transform})
for uniform sampling of the interval $\left[-B,B\right]$. 
\end{example}
In practice measurements are performed over a finite duration. Thus
it is desirable to have the band-limited projection of a compactly
supported function approximately agree with the function at least
over $t\in\left[-T,T\right]$. Thus, by (\ref{eq:max_epsilon_f}),
for $f_{B}\left(t\right)$ to approximate accurately $f\left(t\right)$
over $\left[-T,T\right]$, one needs to build up an approximation
to $\mathrm{sinc}\left(Bt\right)$ that is accurate over the interval
$\left[-4\pi T,4\pi T\right]$. In Appendix \ref{sec:On-approximations-of-sinc},
we present two different approximations in the form of (\ref{eq:sinc-in-exp})
(see Figures \ref{fig:Approximation-of-sinc_in_cosines} and \ref{fig:Approximation-of-sinc_in_cosines-uniform-sampling}),
one using Gauss-Legendre quadratures (see Figure \ref{fig:alpha_m+omega_m-B0=00003D00003D20})
and the other using uniform sampling. We show that, for a desired
interval and bandwidth, a discrete representation of sinc that is
accurate upto machine precision can be achieved using Gauss-Legendre
quadratures without requiring as many uniform samples.
\begin{example}
From a finite duration measurement, only finite number of samples
are utilized for digital signal processing. This raises a natural
question: ``What should be the sampling rate for a band-limited measurement
such that band-limited projection of the the discrete measurement
agree with with the discrete measurement?''. In this regard, consider
the following model for a discrete measurement 
\begin{align}
f\left(t\right) & =\sum_{k=-K}^{K}f_{k}\delta\left(t-\frac{2k}{2K+1}T\right)\label{eq:f-sum-of-deltas}
\end{align}
where $2T/(2K+1)$ is the sampling period. Then 
\begin{align}
\hat{f}\left(\omega\right) & =\sum_{k=-K}^{K}f_{k}\mathrm{e}^{-\mathrm{i}2\pi\omega\frac{2k}{2K+1}T}\label{eq:discrete-Fourier-transform}
\end{align}
Assuming that the measurement has band-limit $B$, let $\omega_{m}=B\,2m\,(2M+1)^{-1}$.
By (\ref{eq:discrete-Fourier-transform}), we rewrite (\ref{eq:discrete-inverse-Fourier-transform})
in terms of $\hat{f}\left(\omega_{m}\right)$ and obtain 
\begin{multline}
f_{B}\left(t\right)=\frac{2B}{2M+1}\sum_{k=-K}^{K}f_{k}\left(\sum_{m=-M}^{M}\mathrm{e}^{\mathrm{i}2\pi B\frac{2m}{2M+1}\left(t-\frac{2k}{2K+1}T\right)}\right)\\
+\epsilon_{f}\left(t\right)
\end{multline}
which, for $t=2l/\left(2K+1\right)T$, $l=-K,\ldots,K$, becomes 
\begin{multline}
f_{B}\left(\frac{2l}{2K+1}T\right)=\frac{2B}{2M+1}\sum_{k=-K}^{K}f_{k}\left(\sum_{m=-M}^{M}\mathrm{e}^{\mathrm{i}2\pi B\frac{2m}{2M+1}\frac{2T}{2K+1}\left(l-k\right)}\right)\\
+\epsilon_{f}\left(\frac{2l}{2K+1}T\right)\label{eq:f_B-of-f-sum-of-deltas}
\end{multline}
(\ref{eq:discrete-Fourier-transform}) is known as the \emph{discrete
Fourier transform }of the vector $\left\{ f_{k}\right\} _{k=-K}^{K}$. 
\end{example}
For $T=\left(2K+1\right)/(4B)$, (\ref{eq:f-sum-of-deltas}), (\ref{eq:discrete-Fourier-transform})
for $\omega=B\,2m\,(2M+1)^{-1}$, and (\ref{eq:f_B-of-f-sum-of-deltas})
become 
\begin{align}
f\left(t\right) & ={\textstyle \sum_{k=-K}^{K}f_{k}\delta\left(t-\frac{k}{2B}\right)}\label{eq:f-sum-of-nyquist-deltas}
\end{align}
\begin{align}
\hat{f}\left(B\frac{2m}{2M+1}\right) & ={\textstyle \sum_{k=-K}^{K}f_{k}\mathrm{e}^{-\mathrm{i}2\pi\frac{m}{2M+1}k}}
\end{align}
\begin{align}
f_{B}\left(\frac{l}{2B}\right) & =\frac{2B}{2M+1}\sum_{k=-K}^{K}f_{k}\left(\sum_{m=-M}^{M}\mathrm{e}^{\mathrm{i}2\pi\frac{m}{2M+1}\left(l-k\right)}\right)+\epsilon_{f}\left(\frac{l}{2B}\right)
\end{align}
which, for $\left|l-k\right|\le2M$, is 
\begin{align}
f_{B}\left(\frac{l}{2B}\right) & =2B\sum_{k=-K}^{K}f_{k}\delta_{kl}+\epsilon_{f}\left(\frac{l}{2B}\right)=2Bf_{l}+\epsilon_{f}\left(\frac{l}{2B}\right),\label{eq:nyquist-sampling-quadratures}
\end{align}
where $\delta_{kl}$ is the Kronecker delta function. \footnote{Consider (\ref{eq:f-sum-of-deltas}) for $t=lT(K+1)^{-1}$: 
\[
{\textstyle f\left(l\frac{T}{K+1}\right)=\sum_{k=-K}^{K}f_{k}\frac{K+1}{T}\delta\left(l-k\right).}
\]
For $T=(K+1)/(2B)$, $f\left(\frac{l}{2B}\right)=2B\sum_{k=-K}^{K}f_{k}\delta\left(l-k\right)$.
Thus the factor $2B$ in front of the sum in (\ref{eq:nyquist-sampling-quadratures})
is due to difference between continious and discrete nature of Dirac
delta, $\delta\left([l-k](2B)^{-1}\right)=2B\,\delta\left(l-k\right)$,
and Kronecker delta, $\delta_{lk}$, functions. } The inequality $\left|l-k\right|\le2M$ imposes that $M\ge K$. For
$T=(2K+1)/(4B)$, by (\ref{eq:f-sum-of-deltas}) and (\ref{eq:sinc-in-exp}),
(\ref{eq:e_f(t)}) becomes 
\begin{align}
\epsilon_{f}\left(\frac{l}{2B}\right) & =\int_{-T}^{T}f\left(\tau\right)\epsilon_{B}\left(2\pi\left[t-\frac{l}{2B}\right]\right)d\tau\nonumber \\
 & =\sum_{k=-K}^{K}f_{k}\epsilon_{B}\left(2\pi\left[\frac{k}{2B}-\frac{l}{2B}\right]\right)\nonumber \\
 & =\sum_{k=-K}^{K}f_{k}\left[\begin{array}{l}
2B\mathrm{sinc}\left(\pi\left[k-l\right]\right)\\
-\left(\frac{2B}{2M+1}\sum_{m=-M}^{M}\mathrm{e}^{\mathrm{i}2\pi\frac{m}{2M+1}\left(l-k\right)}\right)
\end{array}\right]\nonumber \\
 & =\sum_{k=-K}^{K}f_{k}\left[2B\delta_{k,l}-\frac{2B}{2M+1}\left(2M+1\right)\delta_{k,l}\right]\nonumber \\
 & =0.
\end{align}
The special sampling rate $1/(2B)$ that gave rise to the band-limited
function $f_{B}\left(t\right)$ whose values are equal to the original
function at $t=l/(2B)$. This sampling rate is referred to as the
Nyquist rate. By sinc interpolation, also known as Whittaker-Shannon
interpolation formula \cite{stankovic2006some}, 
\begin{align}
f_{B}\left(t\right) & =\sum_{k=-\infty}^{\infty}f_{B}\left(\frac{k}{2B}\right)\,\mathrm{sinc}\left(2\pi B\left[t-\frac{k}{2B}\right]\right),\label{eq:sinc-interpolation-formula}
\end{align}
a band-limited function can be exactly determined from samples obtained
using Nyquist rate. In digital signal processing, because finitely
many samples are measured, which are modelled by (\ref{eq:f-sum-of-nyquist-deltas}),
(\ref{eq:sinc-interpolation-formula}) is approximated by 
\begin{align}
f_{B}\left(t\right) & \approx\sum_{k=-K}^{K}f\left(\frac{k}{2B}\right)\,\mathrm{sinc}\left(2\pi B\left[t-\frac{k}{2B}\right]\right),\label{eq:approx-sinc-interpolation-formula}
\end{align}
which is band-limited projection of (\ref{eq:f-sum-of-deltas}). This
approximation agrees with the discrete measurements. However, there
is an approximation error between the sample locations as a result
of implicit imposition $f\left(k(2B)^{-1}\right)$ equal to zero for
$\left|k\right|>K$. This imposition is eliminated when, instead of
discrete Fourier basis, prolate spheroidal wave functions (PSWFs)
are used as a basis to represent band-limited functions. Expressing
band-limited function in terms of PSWFs doesn't directly answer how
a band-limited function should be sampled but provides the necessary
foundation to answer ``How accurately can we approximate a band-limited
function from its samples given over a compact support?'', which
is addressed in Theorem \eqref{thm:sampling-theorem-bandlimited-functions}.

\subsection{Band-limited projections of compactly supported function and prolate
spheroidal wave functions}

In this section we present prolate spheroidal wave functions, their
properties and two methods on how we can numerically compute them.
For the rest of our discussion we will assume that $T=1$. This can
be compensated by choosing the band-limit to be $T$ times more.

\subsubsection{Prolate spheroidal wave functions (PSWF)}

\emph{Prolate spheroidal wave functions }(PSWF), $\varphi_{n}\left(t\right)$
can be defined as the eigenfunctions of the band-limited projection
operator restricted to a compact support, which, without loss of generality,
is given by \cite{slepian1961prolate,osipov2013prolate} 
\begin{align}
\int_{-1}^{1}2B\,\mathrm{sinc}\left(2\pi B\left(t-\tau\right)\right)\varphi_{n}\left(\tau\right)d\tau & =\mu_{n}\varphi_{n}\left(t\right),\quad t\in\mathbb{R}\label{eq:PSWF-definition}
\end{align}
PSWF form a basis for band-limited functions as well as $L_{2}\left(\left[-1,1\right]\right)$,
and satisfy the following properties \cite{slepian1961prolate,osipov2013prolate}: 
\begin{enumerate}
\item PSWF are real valued and corresponding eigenvalues $\mu_{n}$ are
positive: $\varphi_{n}\left(t\right)\in\mathbb{R}$, $\mu_{n}\in\mathbb{R}^{+}$. 
\item PSWF are orthogonal within the interval $t\in\left[-1,1\right]$ as
well as over the real line: 
\begin{align}
\mu_{n}\int_{-\infty}^{\infty}\varphi_{n}\left(t\right)\varphi_{m}\left(t\right)dt & =\int_{-1}^{1}\varphi_{n}\left(t\right)\varphi_{m}\left(t\right)dt=\delta_{m,n}\label{eq:PSWF-orthogonality}
\end{align}
where $\delta_{m,n}$ is the Kronocker delta function equal to $1$
for $m=n$ and zero otherwise. 
\item PSWF are eigenfunctions of Fourier operator restricted to the interval
$\left[-1,1\right]$: 
\begin{align}
\int_{-1}^{1}\varphi_{n}\left(t\right)\mathrm{e}^{\mathrm{i}2\pi B\omega t}dt & =\lambda_{n}\varphi_{n}\left(\omega\right),\quad\omega\in\left[-1,1\right],\lambda_{n}\in\mathbb{C}\label{eq:PSWF-eigenfunction-Fourier}
\end{align}
\end{enumerate}
The second property implies that if a band-limited function is known
within an interval then it is known

The eigenvalues satisfy the following properties: 
\begin{enumerate}
\item Multiplying both sides of (\ref{eq:PSWF-eigenfunction-Fourier}) with
$\mathrm{e}^{-\mathrm{i}2\pi B\omega t}$, integrating with respect
to $\omega$ and comparing the result with (\ref{eq:PSWF-definition})
one obtains $\mu_{n}=B\left|\lambda_{n}\right|^{2}$ (see 3.48 in
\cite{osipov2013prolate}) 
\item {[}\cite{landau1980eigenvalue},Theorem 3.14 in \cite{osipov2013prolate}{]}
Let $N\left(\alpha\right)$ denote the number of eigenvalues $\mu_{n}>\alpha$
for some $0<\alpha<1$. Then 
\begin{align}
N\left(\alpha\right) & =4B+\left(\frac{1}{\pi^{2}}\log\left(\frac{1-\alpha}{\alpha}\right)\right)\log\left(2\pi B\right)+O\left(\log\left(2\pi B\right)\right)\label{eq:N(alpha)}
\end{align}
Thus there are about $4B$ eigenvalues $\mu_{n}$ that are close to
one, on the order of $\log\left(2\pi B\right)$ eigenvalues that decay
rapidly, and the rest of them are very close to zero. 
\end{enumerate}
For a comprehensive review on PSWF, we refer the reader to \cite{osipov2013prolate}. 
\begin{lem}
\label{lem:f_B-f} Given a compactly supported function $f\left(t\right)$
on $t\in\left[-1,1\right]$, it can be expressed in terms of PSWFs
by 
\begin{align}
f\left(t\right) & =\sum_{n}f_{B,n}\varphi_{n}\left(t\right)
\end{align}
where 
\begin{align}
f_{B,n} & =\int_{-1}^{1}f\left(t\right)\varphi_{n}\left(t\right)dt\nonumber \\
 & =\frac{1}{\mu_{n}}\int_{-1}^{1}f_{B}\left(t\right)\varphi_{n}\left(t\right)dt\nonumber \\
 & =\int_{-\infty}^{\infty}f_{B}\left(t\right)\varphi_{n}\left(t\right)dt\label{eq:f_B,n}
\end{align}
\end{lem}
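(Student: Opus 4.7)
The plan is to prove the expansion first and then verify the three representations of the coefficients $f_{B,n}$ in turn, relying only on the listed PSWF properties.

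For the expansion itself, I would invoke the fact (stated just before the lemma) that $\{\varphi_n\}$ is an orthonormal basis of $L_2([-1,1])$ under the inner product $\int_{-1}^{1}(\cdot)(\cdot)\,dt$. Since $f$ is compactly supported on $[-1,1]$, expanding it in this basis immediately gives $f(t)=\sum_n f_{B,n}\,\varphi_n(t)$ with $f_{B,n}=\int_{-1}^{1}f(t)\varphi_n(t)\,dt$, which is the first of the three formulas.

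For the second formula I would substitute the convolution representation (\ref{eq:band-limited-projection-sinc-interp}) of $f_B$ into $\int_{-1}^{1}f_B(t)\varphi_n(t)\,dt$, use the compact support of $f$ to restrict the inner integral to $[-1,1]$, swap the order of integration (justified by absolute integrability on the compact product domain), and apply the evenness of $\mathrm{sinc}$ together with the defining eigenvalue equation (\ref{eq:PSWF-definition}). Schematically,
\begin{align*}
\int_{-1}^{1} f_B(t)\varphi_n(t)\,dt
&=\int_{-1}^{1} f(\tau)\left[\int_{-1}^{1} 2B\,\mathrm{sinc}(2\pi B(t-\tau))\varphi_n(t)\,dt\right]d\tau\\
&=\mu_n\int_{-1}^{1} f(\tau)\varphi_n(\tau)\,d\tau=\mu_n f_{B,n},
\end{align*}
which rearranges to the desired identity.

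For the third formula I would use the fact that PSWFs also form a basis for band-limited functions on $\mathbb{R}$, so write $f_B(t)=\sum_m c_m\varphi_m(t)$ with convergence on all of $\mathbb{R}$. Applying orthonormality on $[-1,1]$ identifies $c_m=\mu_m f_{B,m}$ via the second formula just proved. Then, using the other orthogonality relation in (\ref{eq:PSWF-orthogonality}), namely $\int_{-\infty}^{\infty}\varphi_n\varphi_m\,dt=\delta_{m,n}/\mu_n$, I obtain
\begin{equation*}
\int_{-\infty}^{\infty} f_B(t)\varphi_n(t)\,dt=\sum_m c_m\frac{\delta_{m,n}}{\mu_n}=\frac{c_n}{\mu_n}=f_{B,n},
\end{equation*}
completing the proof. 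The only mildly delicate step is justifying the termwise integration of the $\mathbb{R}$-expansion of $f_B$ against $\varphi_n$; this follows from $L_2(\mathbb{R})$ convergence and $\varphi_n\in L_2(\mathbb{R})$, both of which are standard consequences of the properties quoted. No other step presents a real obstacle — everything reduces to a direct application of (\ref{eq:PSWF-definition}) and (\ref{eq:PSWF-orthogonality}) together with Fubini.
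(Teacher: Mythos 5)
Your proposal is correct and follows essentially the same route as the paper: both arguments reduce everything to the eigenvalue equation (\ref{eq:PSWF-definition}) and the double orthogonality (\ref{eq:PSWF-orthogonality}). The only cosmetic difference is that the paper first expands the kernel itself as $2B\,\mathrm{sinc}\left(2\pi B\left(t-\tau\right)\right)=\sum_{n}\mu_{n}\varphi_{n}\left(\tau\right)\varphi_{n}\left(t\right)$ and substitutes into the convolution representation, whereas you apply the eigenvalue equation inside the inner product via self-adjointness of the projection --- the same computation organized differently.
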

\begin{proof}
Because $2B\,\mathrm{sinc}\left(2\pi B\left(t-\tau\right)\right)$
is band-limited, it can be expanded as a sum of PSWF, 
\begin{align}
2B\,\mathrm{sinc}\left(2\pi B\left(t-\tau\right)\right) & =\sum_{n}a_{n}\left(\tau\right)\varphi_{n}\left(t\right),
\end{align}
where 
\begin{align}
a_{n}\left(\tau\right) & =\int_{-1}^{1}2B\,\mathrm{sinc}\left(2\pi B\left(t-\tau\right)\right)\varphi_{n}\left(t\right)dt=\mu_{n}\varphi_{n}\left(\tau\right)
\end{align}
leading to the decomposition of sinc in terms of PSWF: 
\begin{align}
2B\,\mathrm{sinc}\left(2\pi B\left(t-\tau\right)\right) & =\sum_{n}\mu_{n}\varphi_{n}\left(\tau\right)\varphi_{n}\left(t\right)\label{eq:sinc-in-PSWF}
\end{align}
Given a compactly supported function $f\left(t\right)$ on $t\in\left[-1,1\right]$,
it's band-limited projection $f_{B}\left(t\right)$ can be expanded
in term of PSWF by substituting (\ref{eq:sinc-in-PSWF}) in (\ref{eq:band-limited-projection-sinc-interp})
\begin{align}
f_{B}\left(t\right) & =\sum_{n}\mu_{n}f_{B,n}\varphi_{n}\left(t\right).\label{eq:f_B-in-PSWF}
\end{align}
Then, by (\ref{eq:PSWF-orthogonality}), the coefficients can be computed
by either of the three ways in (\ref{eq:f_B,n}). 
\end{proof}

\subsubsection{Approximating PSWF as eigenvectors of $\mathrm{e}^{-\mathrm{i}2\pi B\omega_{k}\omega_{m}}$\label{subsec:approx-PSWF-1}}

Substituting (\ref{eq:sinc-in-exp}) into (\ref{eq:PSWF-definition}),
and recalling $\mu_{n}=B\left|\lambda_{n}\right|^{2}$, by (\ref{eq:PSWF-eigenfunction-Fourier}),
we obtain 
\begin{align}
\varphi_{n}\left(t\right) & =\frac{1}{\mu_{n}}\int_{-1}^{1}2B\,\mathrm{sinc}\left(2\pi B\left(t-\tau\right)\right)\varphi_{n}\left(\tau\right)d\tau\nonumber \\
 & =\frac{1}{B\lambda_{n}}\sum_{m=-M}^{M}\alpha_{m}\mathrm{e}^{\mathrm{i}2\pi B\omega_{m}t}\varphi_{n}\left(\omega_{m}\right)+\epsilon_{\varphi_{n}}\left(t\right)\label{eq:PSWF_cont_2_discrete}
\end{align}
where 
\begin{align}
\epsilon_{\varphi_{n}}\left(t\right) & =\frac{1}{\mu_{n}}\int_{-1}^{1}\epsilon_{B}\left(2\pi\left[t-\tau\right]\right)\varphi_{n}\left(\tau\right)d\tau\label{eq:epsilon_var_phi_n(t)}
\end{align}
with \footnote{By Hölder's inequality, 
\begin{align*}
\left|\epsilon_{\varphi_{n}}\left(t\right)\right|^{2} & \le\frac{1}{\mu_{n}^{2}}\int_{-1}^{1}\left|\epsilon_{B}\left(2\pi\left[t-\tau\right]\right)\right|^{2}\left|\varphi_{n}\left(\tau\right)\right|^{2}d\tau\\
 & \le\frac{1}{\mu_{n}^{2}}\max_{\tau\in\left[t-1,t+1\right]}\left|\epsilon_{B}\left(2\pi\tau\right)\right|^{2}\underbrace{\int_{-1}^{1}\left|\varphi_{n}\left(\tau\right)\right|^{2}d\tau}_{=1}
\end{align*}
} 
\begin{align}
\max_{t\in\left[-1,1\right]}\left|\epsilon_{\varphi_{n}}\left(t\right)\right| & \le\frac{1}{\mu_{n}}\,\max_{t\in\left[-2,2\right]}\left|\epsilon_{B}\left(2\pi t\right)\right|\label{eq:max-bound-epsilon_var_phi_n(t)}
\end{align}

In \cite{beylkin2002generalized} (see equation (8.19) and (8.20)
in \cite{beylkin2002generalized}), (\ref{eq:PSWF_cont_2_discrete})
was used to build approximate PSWFs by first solving the eigensystem
\begin{align}
\varphi_{n}\left(\omega_{k}\right) & =\frac{1}{B\lambda_{n}}\sum_{m=-M}^{M}\alpha_{m}\mathrm{e}^{\mathrm{i}2\pi B\omega_{m}\omega_{k}}\varphi_{n}\left(\omega_{m}\right)\label{eq:approx_PSWF_1}
\end{align}
for the eigenvector $\varphi_{n}\left(\omega_{k}\right)$, followed
by substituting $\varphi_{n}\left(\omega_{k}\right)$ back in (\ref{eq:PSWF_cont_2_discrete}):
\begin{align}
\varphi_{n}\left(t\right) & =\frac{1}{B\lambda_{n}}\sum_{m=-M}^{M}\alpha_{m}\mathrm{e}^{\mathrm{i}2\pi B\omega_{m}t}\varphi_{n}\left(\omega_{m}\right)+\epsilon_{\varphi_{n}}\left(t\right)
\end{align}
where $\epsilon_{\varphi_{n}}\left(\omega_{k}\right)=0$, for $k=-M,\ldots,M$.

Thus, the eigensystem (\ref{eq:approx_PSWF_1}) provides an approximation
to PSWF over the interval $\left[-1,1\right]$ bounded by (\ref{eq:max-bound-epsilon_var_phi_n(t)}).
Because (\ref{eq:approx_PSWF_1}) is a system of $2M+1$ equations,
it has $2M+1$ eigenvalues, which we will denote by $\mu_{n=0,\ldots,2M}$.
By (\ref{eq:N(alpha)}), in order to capture the dominant eigenvalues,
i.e. eigenvalues around 1, one shall have $M\ge\left\lceil 2B-1/2\right\rceil $.

For sake of simplicity of the discussion, we will assume that $M\gg\left\lceil 2B-1/2\right\rceil $,
$\mu_{2M}\ll1$. Thus the corresponding $2M+1$ approximate PSWFs
provides a sufficiently accurately approximate band-limited functions
over the interval $\left[-1,1\right]$ and, by the same token, sinc
function over $\left[-2,2\right]$. Thus we treat, (\ref{eq:sinc-in-PSWF})
is equivalent to its truncated version: 
\begin{align}
2B\,\mathrm{sinc}\left(2\pi B\left(t-\tau\right)\right) & \approx\sum_{n=-M}^{M}\mu_{n}\varphi_{n}\left(\tau\right)\varphi_{n}\left(t\right),t,\tau\in\left[-1,1\right],
\end{align}
and similarly all the infinite sums over PSWFs as finite sums. 
\begin{example}
Consider (\ref{eq:PSWF-eigenfunction-Fourier}) and the quadratures
of discrete inverse Fourier transform (\ref{eq:DFT-quadratures})
for $B=\frac{2M+1}{4}$ 
\begin{align}
\left(\alpha_{m},\omega_{m}\right)_{m=-M}^{M} & =\left(\frac{1}{2},\frac{2m}{2M+1}\right)_{m=-M}^{M}
\end{align}
for some positive integer $M$. Then (\ref{eq:PSWF_cont_2_discrete})
becomes 
\begin{align}
\varphi_{n}\left(t\right) & =\frac{1}{\lambda_{n}}\frac{2}{2M+1}\sum_{m=-M}^{M}\mathrm{e}^{\mathrm{i}\pi mt}\varphi_{n}\left(\frac{2m}{2M+1}\right)+\epsilon_{\varphi_{n}}\left(t\right)
\end{align}
Consequently, the eigensystem for approximate PSWF is 
\begin{align}
\varphi_{n}\left(\frac{2k}{2M+1}\right) & =\frac{1}{\lambda_{n}}\frac{2}{2M+1}\sum_{m=-M}^{M}\mathrm{e}^{\mathrm{i}2\pi\frac{mk}{2M+1}}\varphi_{n}\left(\frac{2m}{2M+1}\right)
\end{align}
for $k=-M,\ldots,M$, which for $M\ge2$ has four distinct eigenvalues,
$\pm2\sqrt{2M+1}$ and $\pm\mathrm{i}2\sqrt{2M+1}$ with multiplicities
(see page 32 of \cite{briggs1995dft}). 
\end{example}

\subsubsection{Approximating PSWF as eigenvectors of $\mathrm{sinc}\left(2\pi B\left(\omega_{m}-\omega_{k}\right)\right)$\label{subsec:Approximating-PSWF-from-shifts-of-sinc}}

An alternative to the method presented in Section \ref{subsec:approx-PSWF-1},
PSWF can be approximated through discretization of (\ref{eq:PSWF-definition}).

Starting with (\ref{eq:PSWF_cont_2_discrete}) and using (\ref{eq:PSWF-eigenfunction-Fourier}),
we have 
\begin{align}
\varphi_{n}\left(\omega\right) & =\frac{1}{\overline{\lambda_{n}}}\int_{-1}^{1}\mathrm{e}^{-\mathrm{i}2\pi B\omega t}\varphi_{n}\left(t\right)dt\label{eq:varphi_n(t)-in-sinc}\\
 & =\frac{1}{B\mu_{n}}\sum_{m=-M}^{M}\hspace{-0.2cm}\alpha_{m}2B\,\mathrm{sinc}\left(2\pi B\left(\omega_{m}-\omega\right)\right)\varphi_{n}\left(\omega_{m}\right)+\varepsilon_{\varphi_{n}}\left(\omega\right)
\end{align}
where 
\begin{align}
\varepsilon_{\varphi_{n}}\left(\omega\right) & =\frac{1}{B\overline{\lambda_{n}}}\int_{-1}^{1}\mathrm{e}^{-\mathrm{i}2\pi B\omega t}\epsilon_{\varphi_{n}}\left(t\right)dt
\end{align}
Because $\mu_{n}=B\left|\lambda_{n}\right|^{2}$, by (\ref{eq:max-bound-epsilon_var_phi_n(t)}),
\begin{align}
\max_{t\in\left[-1,1\right]}\left|\varepsilon_{\varphi_{n}}\left(t\right)\right| & \le\frac{2}{\left|\lambda_{n}\right|}\max_{t\in\left[-1,1\right]}\left|\epsilon_{\varphi_{n}}\left(t\right)\right|\\
 & \le\frac{2\sqrt{B}}{\mu_{n}^{3/2}}\,\max_{t\in\left[-2,2\right]}\left|\epsilon_{B}\left(2\pi t\right)\right|\label{eq:varepsilon_varphi_n(t)}
\end{align}

Similar to the method of \cite{beylkin2002generalized}, (\ref{eq:varphi_n(t)-in-sinc})
can be used to build approximate PSWF by first solving the eigensystem
\begin{align}
\varphi_{n}\left(\omega_{m}\right) & =\frac{1}{B\mu_{n}}\sum_{k=-M}^{M}\alpha_{k}2B\,\mathrm{sinc}\left(2\pi B\left(\omega_{m}-\omega_{k}\right)\right)\varphi_{n}\left(\omega_{k}\right)\label{eq:varphi_n(omega_m)}
\end{align}
for the eigenvector $\varphi_{n}\left(\omega_{k}\right)$, followed
by substituting $\varphi_{n}\left(\omega_{k}\right)$ back in (\ref{eq:varphi_n(t)-in-sinc}):
\begin{align}
\varphi_{n}\left(t\right) & =\frac{1}{B\mu_{n}}\sum_{k=-M}^{M}\alpha_{k}2B\,\mathrm{sinc}\left(2\pi B\left(t-\omega_{k}\right)\right)\varphi_{n}\left(\omega_{k}\right)+\varepsilon_{\varphi_{n}}\left(t\right),
\end{align}
where $\varepsilon_{\varphi_{n}}\left(\omega_{m}\right)=0$, for $m=-M,\ldots,M$.
The eigenvectors $\phi_{n}(\omega_{k})$ are generalizations of discrete
prolate spheroidal sequences (DPSS) \cite{slepian1978prolate}. When
$\omega_{k}$ are uniformly sampled they are equivalent to DPSS, which
asymptotically approximate PSWF \cite{slepian1978prolate}. Similar
to the discussion in Section \ref{subsec:approx-PSWF-1}, we say the
eigensystem (\ref{eq:varphi_n(omega_m)}) provides an approximation
to PSWF. Because it can only capture $2M+1$ of the eigenvalues, which
we denote by $\mu_{n=0,\ldots,2M}$, by (\ref{eq:N(alpha)}), one
shall choose $M\ge\left\lceil 2B-1/2\right\rceil $ in order to capture
all the eigenvalues close to one and some of the eigenvalues in the
transition zone from one to zero, depending on the desired accuracy
of the approximation. 
\begin{example}
Consider the quadratures of discrete inverse Fourier transform for
$B=\frac{2M+1}{4}$ 
\begin{align}
\left(\alpha_{m},\omega_{m}\right)_{m=-M}^{M} & =\left(\frac{1}{2},\frac{2m}{2M+1}\right)_{m=-M}^{M}
\end{align}
for some positive integer $M$. Then (\ref{eq:varphi_n(t)-in-sinc})
becomes 
\begin{align}
\varphi_{n}\left(t\right) & =\frac{1}{\mu_{n}}\sum_{k=-M}^{M}\mathrm{sinc}\left(\pi\left[\frac{2M+1}{2}t-k\right]\right)\varphi_{n}\left(\frac{2k}{2M+1}\right)+\varepsilon_{\varphi_{n}}\left(t\right)
\end{align}
Consequently, the eigensystem for approximate PSWF is 
\begin{align}
\varphi_{n}\left(\frac{2m}{2M+1}\right) & =\frac{1}{\mu_{n}}\sum_{k=-M}^{M}\mathrm{sinc}\left(\pi\left[m-k\right]\right)\varphi_{n}\left(\frac{2k}{2M+1}\right)\nonumber \\
 & =\frac{1}{\mu_{n}}\sum_{k=-M}^{M}\delta_{m,k}\varphi_{n}\left(\frac{2k}{2M+1}\right)\nonumber \\
 & =\frac{1}{\mu_{n}}\varphi_{n}\left(\frac{2m}{2M+1}\right)
\end{align}
for $m=-M,\ldots,M$ which implies that $\mu_{n}=1$ for $n=0,\ldots,2M$.
As mentioned above, $\varphi_{n}\left(\frac{2m}{M+1}\right)$ are
referred to as discrete prolate spheroidal sequences and were studied
in \cite{slepian1978prolate} along with their relationship to periodic
discrete prolate spheroidal sequences (P-DPSS). This example shows
that, similar to P-DPSS \cite{xu1984periodic}, eigenvalues of DPSS
are not necessarily simple and therefore definition of DPSS can be
non unique.
\end{example}

\subsection{Discrete convolution representation of band-limited approximation
of compactly supported functions}
\begin{thm}
\label{thm:sampling-theorem-bandlimited-functions} Consider a function
$f\left(t\right)$ compactly supported on $t\in\left[-1,1\right]$.
Its band-limited projection $f_{B}\left(t\right)$ can be computed
by 
\begin{align}
f_{B}\left(t\right) & =\sum_{k=-M}^{M}2B\,\mathrm{sinc}\left(2\pi B\left(t-\omega_{k}\right)\right)f_{k}+\epsilon_{f_{B}}\left(t\right)\label{eq:f_B_approx_interp}
\end{align}
where 
\begin{align}
f_{k} & ={\displaystyle \sum_{m=-M}^{M}f\left(\omega_{m}\right)\alpha_{m}R_{m}\left(\omega_{k}\right)}
\end{align}
and $\left(\alpha_{m},\omega_{m}\right)$ satisfy (\ref{eq:sinc-in-exp})
and 
\begin{align}
R_{m}\left(t\right) & =\sum_{n=0}^{2M}\mu_{n}^{-1}\varphi_{n}\left(\omega_{m}\right)\varphi_{n}\left(t\right)
\end{align}
\end{thm}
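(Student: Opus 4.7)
The plan is to rewrite both sides of \eqref{eq:f_B_approx_interp} in the PSWF basis and match coefficients, using Lemma~\ref{lem:f_B-f} on the left and the definition of $R_m$ on the right. By Lemma~\ref{lem:f_B-f}, $f_B(t)=\sum_n\mu_n f_{B,n}\varphi_n(t)$ with $f_{B,n}=\int_{-1}^{1}f(\tau)\varphi_n(\tau)\,d\tau$, and truncating the sum at $n=2M$ is justified by the eigenvalue distribution \eqref{eq:N(alpha)} once $M\gg\lceil 2B-1/2\rceil$, as set up in Section~\ref{subsec:approx-PSWF-1}.

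For the right-hand side, I would substitute the definition of $R_m$ into $f_k$ and reorder the sums to obtain
$$f_k=\sum_{n=0}^{2M}\mu_n^{-1}\varphi_n(\omega_k)\Bigl[\sum_{m=-M}^{M}\alpha_m f(\omega_m)\varphi_n(\omega_m)\Bigr].$$
The bracketed expression is what one gets by applying the sinc quadrature \eqref{eq:sinc-in-exp} to $\int_{-1}^{1}f(\tau)\varphi_n(\tau)\,d\tau=f_{B,n}$, so it equals $B f_{B,n}$ up to an $\epsilon_B$-sized residual. Plugging this back into $\sum_k 2B\,\mathrm{sinc}(2\pi B(t-\omega_k))f_k$, replacing $2B\,\mathrm{sinc}(2\pi B(t-\omega_k))$ by its truncated PSWF expansion \eqref{eq:sinc-in-PSWF}, and invoking the discrete orthogonality $B^{-1}\sum_k\alpha_k\varphi_n(\omega_k)\varphi_{n'}(\omega_k)\approx\delta_{n,n'}$ (obtained by applying the same sinc quadrature to \eqref{eq:PSWF-orthogonality}), the triple sum collapses. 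The $\mu_n\cdot\mu_n^{-1}$ built into $R_m$ cancels, leaving $\sum_{n=0}^{2M}\mu_n f_{B,n}\varphi_n(t)$, which is precisely the truncated Lemma~\ref{lem:f_B-f} expansion of $f_B(t)$.

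The residual $\epsilon_{f_B}(t)$ then splits into the PSWF tail $\sum_{n>2M}\mu_n f_{B,n}\varphi_n(t)$ (small by \eqref{eq:N(alpha)}) plus the propagated quadrature errors from the two invocations of \eqref{eq:sinc-in-exp}. The main obstacle is that the $\mu_n^{-1}$ factor in $R_m$ amplifies both error sources: a residual of size $\epsilon_B$ entering the bracket becomes $\mu_n^{-1}\epsilon_B$ in $f_k$, and the discrete-orthogonality error picks up a factor $\mu_{n'}\mu_n^{-1}$ after the $\sum_k$ reduction. Keeping $\epsilon_{f_B}$ of the same product form as \eqref{eq:max_epsilon_f} therefore requires the standing assumption $M\gg\lceil 2B-1/2\rceil$ from the preceding subsection, so that the retained eigenvalues $\mu_0,\dots,\mu_{2M}$ stay bounded away from zero. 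Under this assumption the proof reduces to careful bookkeeping of the two residuals, both of which inherit their bounds from \eqref{eq:max-bound-epsilon_var_phi_n(t)} and the analogous estimate applied to the inner-product quadrature.
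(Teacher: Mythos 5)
Your overall strategy---expanding both sides in the PSWF basis and matching coefficients---is a legitimate reorganization, but it is not the paper's route, and it stalls at one concrete step. The paper derives (\ref{eq:f_B_approx_interp}) by successive substitution: starting from $f_{B}=\sum_{n}\mu_{n}f_{B,n}\varphi_{n}$ in (\ref{eq:f_B-in-PSWF}), it replaces $\varphi_{n}\left(t\right)$ by its discrete sinc-shift representation (\ref{eq:varphi_n(t)-in-sinc}) to obtain (\ref{eq:f_B(t)-in-sinc}), and then substitutes (\ref{eq:varphi_n(t)-in-sinc}) once more into the coefficient integral (\ref{eq:f_B,n}) to reach (\ref{eq:f_Bn-in-f_B(omega_k)}), which is where $R_{m}$ emerges. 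The crucial feature of that second substitution is that the exact integral $\int_{-1}^{1}f\left(t\right)2B\,\mathrm{sinc}\left(2\pi B\left(t-\omega_{k}\right)\right)dt$ equals $f_{B}\left(\omega_{k}\right)$ because $f$ is supported in $\left[-1,1\right]$, so the only residual is $\int f\,\varepsilon_{\varphi_{n}}$, controlled through (\ref{eq:max-bound-epsilon_var_phi_n(t)}) and (\ref{eq:varepsilon_varphi_n(t)}), hence ultimately by $\epsilon_{B}$. In particular the paper's coefficients involve samples of the band-limited projection, $f_{B}\left(\omega_{m}\right)$, never raw samples of $f$.

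The genuine gap in your argument is the claim that $\sum_{m}\alpha_{m}f\left(\omega_{m}\right)\varphi_{n}\left(\omega_{m}\right)=Bf_{B,n}$ up to an ``$\epsilon_{B}$-sized residual.'' The quadrature $\left(\alpha_{m},\omega_{m}\right)$ of (\ref{eq:sinc-in-exp}) is only guaranteed to integrate the exponentials $\mathrm{e}^{\mathrm{i}B\omega_{m}t}$ accurately for $t$ in a fixed window, so its accuracy transfers only to integrands that are superpositions of such exponentials --- band-limited integrands like $\varphi_{n}\varphi_{n'}$ in your discrete orthogonality relation, which is fine. But $f\varphi_{n}$ is not band-limited for an arbitrary compactly supported $f$: a function $f$ vanishing at every node $\omega_{m}$ yet with $\int_{-1}^{1}f\varphi_{n}\neq0$ makes the quadrature error equal to the entire integral, with no $\epsilon_{B}$ control. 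This is precisely the point where the paper does something different (and where the theorem's display of $f\left(\omega_{m}\right)$ rather than $f_{B}\left(\omega_{m}\right)$ fails to match the paper's own derivation). Two smaller problems: your collapse of the triple sum via $B^{-1}\sum_{k}\alpha_{k}\varphi_{n}\left(\omega_{k}\right)\varphi_{n'}\left(\omega_{k}\right)\approx\delta_{n,n'}$ requires the weight $\alpha_{k}$ on the outer sum of sinc shifts, which is present in the paper's (\ref{eq:f_B(t)-in-sinc}) but absent from (\ref{eq:f_B_approx_interp}) as stated, so direct verification of the literal display cannot close; and your final assertion that taking $M\gg\left\lceil 2B-1/2\right\rceil$ keeps $\mu_{0},\ldots,\mu_{2M}$ bounded away from zero is backwards --- the paper's standing assumption is $\mu_{2M}\ll1$, and enlarging $M$ only appends more near-zero eigenvalues, so the $\mu_{n}^{-1}$ amplification you correctly identify is not resolved this way.
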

\begin{proof}
Substituting (\ref{eq:varphi_n(t)-in-sinc}) in (\ref{eq:f_B-in-PSWF}),
we obtain 
\begin{align}
f_{B}\left(t\right) & =\sum_{k=-M}^{M}\alpha_{k}2B\,\mathrm{sinc}\left(2\pi B\left(t-\omega_{k}\right)\right)\tilde{f}_{B}\left(\omega_{k}\right)+\epsilon_{f_{B}}\left(t\right)\label{eq:f_B(t)-in-sinc}
\end{align}
where 
\begin{align}
\tilde{f}_{B}\left(t\right) & =\frac{1}{B}\sum_{n=0}^{2M}f_{B,n}\varphi_{n}\left(t\right)\label{eq:hat_f_Bk}\\
\epsilon_{f_{B}}\left(t\right) & =\sum_{n=0}^{2M}\mu_{n}f_{B,n}\varepsilon_{\varphi_{n}}\left(t\right)\label{eq:epsilon_f_B}
\end{align}
with $f_{B,n}$ defined in (\ref{eq:f_B,n}). Similarly, substituting
(\ref{eq:varphi_n(t)-in-sinc}) in (\ref{eq:f_B,n}), we obtain 
\begin{align}
f_{B,n} & =\frac{1}{B\mu_{n}}\sum_{k=-M}^{M}\alpha_{k}\varphi_{n}\left(\omega_{k}\right)f_{B}\left(\omega_{k}\right)+\frac{1}{\mu_{n}}\int_{-1}^{1}f_{B}\left(t\right)\varepsilon_{\varphi_{n}}\left(t\right)dt,\label{eq:f_Bn-in-f_B(omega_k)}
\end{align}
and substituting (\ref{eq:f_Bn-in-f_B(omega_k)}) in (\ref{eq:hat_f_Bk}),
we obtain 
\begin{align}
\tilde{f}_{B}\left(t\right) & =\frac{1}{B^{2}}\sum_{m=-M}^{M}\alpha_{m}f_{B}\left(\omega_{m}\right)R_{m}\left(t\right)+\sum_{n=0}^{2M}\varphi_{n}\left(t\right)\left(\int_{-\infty}^{\infty}f_{B}\left(\tau\right)\varepsilon_{\varphi_{n}}\left(\tau\right)d\tau\right).\label{eq:hat_f_Bk-in-f_B(omega_k)}
\end{align}
\end{proof}
\begin{cor}
\label{cor:epsilon_f_B} The error $\epsilon_{f_{B}}\left(t\right)$,
for $t\in\left[-1,1\right]$ is bounded by 
\begin{align}
\max_{t\in\left[-1,1\right]}\left|\epsilon_{f_{B}}\left(t\right)\right| & \le C\max_{t\in\left[-1,1\right]}\left|f_{B}\left(t\right)\right|\max_{t\in\left[-2,2\right]}\left|\epsilon_{B}\left(2\pi t\right)\right|
\end{align}
for some constant $C$. 
\end{cor}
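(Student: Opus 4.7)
The plan is to start from the explicit PSWF expansion of the remainder already recorded inside the proof of Theorem~\ref{thm:sampling-theorem-bandlimited-functions}, namely
\[
\epsilon_{f_B}(t) \;=\; \sum_{n=0}^{2M} \mu_n\, f_{B,n}\, \varepsilon_{\varphi_n}(t),
\]
and to bound the two factors $\mu_n f_{B,n}$ and $\varepsilon_{\varphi_n}(t)$ separately before summing via the triangle inequality.

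First I would bound the PSWF coefficients using the middle identity of \eqref{eq:f_B,n}, $\mu_n f_{B,n} = \int_{-1}^{1} f_B(t)\varphi_n(t)\,dt$. Cauchy--Schwarz together with the $[-1,1]$ normalization $\|\varphi_n\|_{L^2([-1,1])}=1$ from \eqref{eq:PSWF-orthogonality} yields
\[
|\mu_n f_{B,n}| \;\le\; \sqrt{2}\,\max_{t\in[-1,1]}|f_B(t)|.
\]
Next I would invoke the pointwise bound \eqref{eq:varepsilon_varphi_n(t)} already established for the PSWF approximation error, $\max_{t\in[-1,1]}|\varepsilon_{\varphi_n}(t)| \le (2\sqrt{B}/\mu_n^{3/2})\,\max_{t\in[-2,2]}|\epsilon_B(2\pi t)|$, and combine the two estimates to obtain
\[
\max_{t\in[-1,1]}|\epsilon_{f_B}(t)| \;\le\; \Bigl(2\sqrt{2B}\sum_{n=0}^{2M}\mu_n^{-3/2}\Bigr)\,\max_{t\in[-1,1]}|f_B(t)|\,\max_{t\in[-2,2]}|\epsilon_B(2\pi t)|,
\]
which has the claimed form with the explicit choice $C := 2\sqrt{2B}\sum_{n=0}^{2M}\mu_n^{-3/2}$ depending only on $B$ and on the truncation level $M$.

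The main obstacle is that $\mu_n^{-3/2}$ blows up as $n$ approaches $2M$ because the PSWF eigenvalues decay to zero, so this crude constant is useful only in the regime identified in Section~\ref{subsec:Approximating-PSWF-from-shifts-of-sinc}: $M \ge \lceil 2B - 1/2\rceil$ with the truncation kept close to the $O(\log B)$ transition zone predicted by \eqref{eq:N(alpha)}, so that the smallest retained $\mu_n$ is still bounded away from zero. For a sharper constant one could instead apply Cauchy--Schwarz in $\ell^2$ to the expansion of $\epsilon_{f_B}$ together with the identity $\sum_n \mu_n |f_{B,n}|^2 = \|f_B\|_{L^2(\mathbb{R})}^2$, which follows from expanding \eqref{eq:f_B-in-PSWF} and applying the real-line part of \eqref{eq:PSWF-orthogonality}; this trades the $[-1,1]$ sup norm of $f_B$ for an $L^2(\mathbb{R})$ norm and replaces $\sum \mu_n^{-3/2}$ by $\sqrt{\sum \mu_n^{-2}}$. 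Either route yields the stated corollary.
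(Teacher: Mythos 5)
Your proof is correct, and it shares the paper's skeleton: both arguments start from the expansion $\epsilon_{f_{B}}\left(t\right)=\sum_{n=0}^{2M}\mu_{n}f_{B,n}\varepsilon_{\varphi_{n}}\left(t\right)$ of \eqref{eq:epsilon_f_B}, insert the bound \eqref{eq:varepsilon_varphi_n(t)} on $\varepsilon_{\varphi_{n}}$, and sum by the triangle inequality. Where you genuinely diverge is in bounding the coefficients. The paper estimates $\left|f_{B,n}\right|$ through the discretized representation \eqref{eq:f_Bn-in-f_B(omega_k)}, which requires the auxiliary facts $\sum_{k}\alpha_{k}=2B-\epsilon_{B}\left(0\right)$ from \eqref{eq:sinc-in-exp}, the unit normalization of the eigenvectors $\varphi_{n}\left(\omega_{m}\right)$, and a second application of \eqref{eq:varepsilon_varphi_n(t)} to the residual integral; its constant $C$ consequently carries an extra additive term proportional to $\max_{t\in\left[-2,2\right]}\left|\epsilon_{B}\left(2\pi t\right)\right|$ itself. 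You instead use the exact middle identity of \eqref{eq:f_B,n}, $\mu_{n}f_{B,n}=\int_{-1}^{1}f_{B}\left(t\right)\varphi_{n}\left(t\right)dt$, with Cauchy--Schwarz and the normalization $\left\Vert \varphi_{n}\right\Vert _{L^{2}\left(\left[-1,1\right]\right)}=1$ from \eqref{eq:PSWF-orthogonality} to get $\left|\mu_{n}f_{B,n}\right|\le\sqrt{2}\max_{t\in\left[-1,1\right]}\left|f_{B}\left(t\right)\right|$, with no reference to the quadrature at all. This buys a shorter argument and the cleaner constant $C=2\sqrt{2B}\sum_{n=0}^{2M}\mu_{n}^{-3/2}$, at the cost of losing the explicit dependence on the quadrature data $\left(\alpha_{m},\omega_{m}\right)$ that the paper's version displays; in the uniform-sampling example that follows the corollary (where $\mu_{n}=1$ and $\epsilon_{B}\left(0\right)=0$) both constants are $\mathcal{O}\left(\left(2M+1\right)\sqrt{B}\right)$, so the downstream conclusion is unaffected. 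Your caveat that $\mu_{n}^{-3/2}$ degrades as the retained eigenvalues approach zero applies equally to the paper's constant, and your closing remark on sharpening via $\sum_{n}\mu_{n}\left|f_{B,n}\right|^{2}=\left\Vert f_{B}\right\Vert _{L^{2}\left(\mathbb{R}\right)}^{2}$ is a correct consequence of \eqref{eq:f_B-in-PSWF} and the real-line orthogonality in \eqref{eq:PSWF-orthogonality}.
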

\begin{proof}
By (\ref{eq:epsilon_f_B}), for $t\in\left[-1,1\right]$ 
\begin{align}
\max_{t\in\left[-1,1\right]}\left|\epsilon_{f_{B}}\left(t\right)\right| & \le\sum_{n=0}^{2M}\mu_{n}\left|f_{B,n}\right|\left|\varepsilon_{\varphi_{n}}\left(t\right)\right|\nonumber \\
 & \le\sum_{n=0}^{2M}\mu_{n}\left|f_{B,n}\right|2\sqrt{B}\mu_{n}^{-3/2}\max_{t\in\left[-2,2\right]}\left|\epsilon_{B}\left(2\pi t\right)\right|
\end{align}
where we used (\ref{eq:varepsilon_varphi_n(t)}) to write the second
inequality. By (\ref{eq:f_Bn-in-f_B(omega_k)}) and (\ref{eq:varepsilon_varphi_n(t)}),
because $\alpha_{k}\in\mathbb{R}^{+}$, we have 
\begin{align}
\left|f_{B,n}\right| & \le\mu_{n}^{-1}\hspace{-0.2cm}\max_{t\in\left[-1,1\right]}\left|f_{B}\left(t\right)\right|\left[\begin{array}{l}
B^{-1}\sum_{k=-M}^{M}\alpha_{k}\\
+4\sqrt{B}\mu_{n}^{-1/2}\max_{t\in\left[-2,2\right]}\left|\epsilon_{B}\left(2\pi t\right)\right|
\end{array}\right]\nonumber \\
 & =\mu_{n}^{-1}\hspace{-0.2cm}\max_{t\in\left[-1,1\right]}\left|f_{B}\left(t\right)\right|\left[\begin{array}{l}
B^{-1}\left(2B-\epsilon_{B}\left(0\right)\right)\\
+4\sqrt{B}\mu_{n}^{-1/2}\max_{t\in\left[-2,2\right]}\left|\epsilon_{B}\left(2\pi t\right)\right|
\end{array}\right]
\end{align}
where we used the fact that $\sum_{k=-M}^{M}\alpha_{k}=2B-\epsilon_{B}(0)$
by (\ref{eq:sinc-in-exp}), $\varphi_{n}\left(\omega_{m}\right)$
are eigenvector obtained by solving (\ref{eq:varphi_n(omega_m)}),
hence have unit norm, i.e. $\sum_{m}\left|\varphi_{n}\left(\omega_{m}\right)\right|^{2}=1$,
and $\max_{m}\left|\varphi_{n}\left(\omega_{m}\right)\right|\le1$.
Then 
\begin{align}
\max_{t\in\left[-1,1\right]}\left|\epsilon_{f_{B}}\left(t\right)\right| & \le C\max_{t\in\left[-2,2\right]}\left|\epsilon_{B}\left(2\pi t\right)\right|\max_{t\in\left[-1,1\right]}\left|f_{B}\left(t\right)\right|
\end{align}
where the constant $C$ is given by 
\begin{align}
C & =\sum_{n=0}^{2M}\left[\begin{array}{l}
2B^{-1/2}\mu_{n}^{-3/2}\left(2B-\epsilon_{B}\left(0\right)\right)\\
+8B\mu_{n}^{-2}\max_{t\in\left[-2,2\right]}\left|\epsilon_{B}\left(2\pi t\right)\right|
\end{array}\right].
\end{align}
\end{proof}
\begin{example}
Considering the quadratures $\left(\alpha_{m},\omega_{m}\right)=\left(\frac{2B}{2M+1},\frac{2m}{2M+1}\right)_{m=-M}^{M}$
of the discrete inverse Fourier transform for fixed $B$, $\mu_{n}=1$
and, by (\ref{eq:sin(Bx)/sin(x)}) \footnote{$\lim_{x\rightarrow0}\mathrm{sinc}\left(Bx\right)=\lim_{x\rightarrow0}\frac{\sin\left(Bx\right)\left(2M+1\right)^{-1}}{\sin\left(Bx\left(2M+1\right)^{-1}\right)}=1$},
$\epsilon_{B}\left(0\right)=0$. Then, using (\ref{eq:f_B(t)-in-sinc})
and Corollary \ref{cor:epsilon_f_B}, the error between the nodes
is bounded by 
\begin{align}
\left|\epsilon_{f_{B}}\left(t\right)\right| & \le\left[\begin{array}{l}
\max_{t\in\left[-1,1\right]}\left|f_{B}\left(t\right)\right|\,\max_{t\in\left[-2,2\right]}\left|\epsilon_{B}\left(2\pi t\right)\right|\\
\times\sum_{n=0}^{2M}\left[4B^{1/2}+8B\max_{t\in\left[-2,2\right]}\left|\epsilon_{B}\left(2\pi t\right)\right|\right]
\end{array}\right]\nonumber \\
 & =\left[\begin{array}{l}
\max_{t\in\left[-1,1\right]}\left|f_{B}\left(t\right)\right|\,\max_{t\in\left[-2,2\right]}\left|\epsilon_{B}\left(2\pi t\right)\right|\\
\times\left(2M+1\right)\left[4B^{1/2}+8B\max_{t\in\left[-2,2\right]}\left|\epsilon_{B}\left(2\pi t\right)\right|\right]
\end{array}\right]
\end{align}
for $t\in\left[-1,1\right]$. By (\ref{eq:eps_B(x)_O(2N+1)^(-2)}),
one can achieve $\max_{t\in\left[-2,2\right]}\left|\epsilon_{B}\left(2\pi t\right)\right|=\mathcal{O}\left(\left(2M+1\right)^{-2}\right)$
consequently $\left|\epsilon_{f_{B}}\left(t\right)\right|\le\max_{t\in\left[-1,1\right]}\left|f_{B}\left(t\right)\right|\mathcal{O}\left(\left(2M+1\right)^{-1}\right)$
which is in the order of the truncation errors presented in Section
VI of \cite{jerri1977shannon}. 
\end{example}

\section{$R$-limited functions\label{sec:R-limited-functions}}

In this section we introduce an equivalent of $R$-limited functions
with respect to a linear transformation and $R$-Slepian functions
which are multivariate generalization of band-limited functions and
prolate spheroidal wave functions, respectively. Then we prove the
generalizations of Theorems \ref{thm:discrete_Fourier_approx_bandlimited_functions}
and \ref{thm:sampling-theorem-bandlimited-functions} to $R$-limited
functions.

Let $GL\left(\mathbb{R},N\right)$ denote the general linear group,
the set of invertible matrices in $\mathbb{R}^{N\times N}$, and 
\begin{align*}
R_{A} & =\left\{ \mathbf{k}=A\mathbf{x}\,|\,\mathbf{x}\in R\subset\mathbb{R}^{N},\,A\in GL\left(\mathbb{R},N\right)\right\} 
\end{align*}
for some compact $R\subset\mathbb{R}^{N}$. Employing the terminology
introduced in \cite{slepian1964prolate}, we define $R_{B}$-limited
functions by 
\begin{align}
f_{B}\left(\mathbf{x}\right) & =\int_{R_{B}}\hat{f}_{B}\left(\mathbf{k}\right)\mathrm{e}^{\mathrm{i}2\pi\mathbf{k}\cdot\mathbf{x}}d\mathbf{k}\label{eq:R-limited-definition}
\end{align}
where $B\in GL\left(\mathbb{R},N\right)$ is a real symmetric matrix
and 
\begin{align}
\hat{f}_{B}\left(\mathbf{k}\right) & =\int_{\mathbb{R}^{N}}f_{B}\left(\mathbf{x}\right)\mathrm{e}^{-\mathrm{i}2\pi\mathbf{k}\cdot\mathbf{x}}d\mathbf{x}.
\end{align}
Here $B$ is a multidimensional analogue of band-limit. When $B$
is the identity matrix, $R_{I}=R$, one obtains definition of $R$-limited
functions of \cite{slepian1964prolate}.

Alternatively, we can write 
\begin{align}
f_{B}\left(\mathbf{x}\right) & =\int_{R_{B}}\int_{\mathbb{R}^{N}}f_{B}\left(\mathbf{y}\right)\mathrm{e}^{\mathrm{i}2\pi\mathbf{k}\cdot\left(\mathbf{x}-\mathbf{y}\right)}d\mathbf{y}d\mathbf{k}\nonumber \\
 & =\int_{\mathbb{R}^{N}}f_{B}\left(\mathbf{y}\right)\left[\left|\det\left(B\right)\right|\,\int_{R}\mathrm{e}^{\mathrm{i}2\pi B\mathbf{k}\cdot\left(\mathbf{x}-\mathbf{y}\right)}d\mathbf{k}\right]d\mathbf{y}\nonumber \\
 & =\int_{\mathbb{R}^{N}}f_{B}\left(\mathbf{y}\right)\left|\det\left(B\right)\right|K\left(2\pi B\left(\mathbf{x}-\mathbf{y}\right)\right)d\mathbf{y}\label{eq:R-limited-convolution-representation}
\end{align}
where $\det\left(B\right)$ denotes the determinant of $B$ and 
\begin{align}
K\left(\mathbf{x}\right) & =\int_{R}\mathrm{e}^{\mathrm{i}\mathbf{k}\cdot\mathbf{x}}d\mathbf{k}.\label{eq:K-in-exp}
\end{align}
Given a function $f\left(\mathbf{x}\right)$, its \emph{$R_{B}$-limited
projection} $P_{B}\left[f\right]\left(\mathbf{x}\right)$, denoted
by $f_{B}\left(\mathbf{x}\right)$ for short, is defined by 
\begin{align}
P_{B}\left[f\right]\left(\mathbf{x}\right)=f_{B}\left(\mathbf{x}\right) & =\int_{R_{B}}\hat{f}\left(\mathbf{k}\right)\mathrm{e}^{\mathrm{i}2\pi\mathbf{k}\cdot\mathbf{x}}d\mathbf{k}
\end{align}
or, equivalently, in the convolution representation using convolution
theorem 
\begin{align}
P_{B}\left[f\right]\left(\mathbf{x}\right) & =\int_{\mathbb{R}^{N}}f\left(\mathbf{y}\right)\left|\det\left(B\right)\right|K\left(2\pi B\left(\mathbf{x}-\mathbf{y}\right)\right)d\mathbf{y}.\label{eq:R-limited-projection_of_f}
\end{align}
Note that $\hat{f}_{B}\left(\mathbf{k}\right)=\hat{f}\left(\mathbf{k}\right)$,
for $\mathbf{k}\in R_{B}$.

\subsection{Discrete Fourier representation of $R$-limited approximation of
compactly supported functions}
\begin{thm}
\label{thm:discrete-Fourier-approximation-R-limited}Consider a discretization
of the integral representation (\ref{eq:K-in-exp}) of $K\left(\mathbf{x}\right)$
\begin{align}
\left|\det\left(B\right)\right|K\left(B\mathbf{x}\right) & =\sum_{m}\alpha_{m}\mathrm{e}^{\mathrm{i}\mathbf{k}_{m}\cdot B\mathbf{x}}+\epsilon_{K}\left(\mathbf{x}\right)\label{eq:approx_K}
\end{align}
for $\mathbf{k}_{m}\in R$. For a function $f\left(\mathbf{x}\right)$
compactly supported within a region $X\subset\mathbb{R}^{n}$, its
R-limited projection can be approximated by 
\begin{align}
f_{B}\left(\mathbf{x}\right) & =\sum_{m}\alpha_{m}\mathrm{e}^{\mathrm{i}2\pi B\mathbf{k}_{m}\cdot\mathbf{x}}\hat{f}\left(B\mathbf{k}_{m}\right)+\epsilon_{f}\left(\mathbf{x}\right)\label{eq:Discrete-Fourier-approx-R-limited-f}
\end{align}
and (\ref{eq:Discrete-Fourier-approx-R-limited-f}) provides a discretization
of (\ref{eq:R-limited-definition}) with 
\begin{align}
\max_{\mathbf{x}\in X}\left|\epsilon_{f}\left(\mathbf{x}\right)\right| & \le\left|X\right|\max_{\mathbf{x}\in X}\left|f\left(\mathbf{x}\right)\right|\max_{\mathbf{x}\in X+X}\left|\epsilon_{K}\left(2\pi\mathbf{x}\right)\right|,\label{eq:error-discrete-fourier-approx-R-limited}
\end{align}
where $X+X=\left\{ \mathbf{x}|\mathbf{x}=\mathbf{x}_{1}+\mathbf{x}_{2},\,\mathbf{x}_{1},\mathbf{x}_{2}\in X\right\} $
and $\left|X\right|=\int_{X}d\mathbf{x}$. 
\end{thm}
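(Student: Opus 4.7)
The plan is to mirror the proof of Theorem~\ref{thm:discrete_Fourier_approx_bandlimited_functions} in the multivariate setting, substituting the kernel discretization (\ref{eq:approx_K}) into the convolution representation (\ref{eq:R-limited-projection_of_f}) and then reading off the Fourier transform of $f$ on the grid $\{B\mathbf{k}_m\}$. Concretely, since $f$ is supported in $X$, we write
\begin{align*}
f_B(\mathbf{x}) & = \int_X f(\mathbf{y})\,|\det(B)|\,K\bigl(2\pi B(\mathbf{x}-\mathbf{y})\bigr)\,d\mathbf{y},
\end{align*}
and then apply (\ref{eq:approx_K}) with argument $2\pi(\mathbf{x}-\mathbf{y})$ in place of $\mathbf{x}$, yielding the expansion
\begin{align*}
|\det(B)|\,K\bigl(2\pi B(\mathbf{x}-\mathbf{y})\bigr) & = \sum_m \alpha_m \mathrm{e}^{\mathrm{i}2\pi \mathbf{k}_m\cdot B(\mathbf{x}-\mathbf{y})} + \epsilon_K\bigl(2\pi(\mathbf{x}-\mathbf{y})\bigr).
\end{align*}

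Next I would substitute this into the integral, exchange the (finite) sum with the integral, and use the fact that $B$ is real symmetric so that $\mathbf{k}_m\cdot B\mathbf{y}=(B\mathbf{k}_m)\cdot\mathbf{y}$. The term with the sum then becomes
\begin{align*}
\sum_m \alpha_m \mathrm{e}^{\mathrm{i}2\pi B\mathbf{k}_m\cdot\mathbf{x}} \int_X f(\mathbf{y})\, \mathrm{e}^{-\mathrm{i}2\pi B\mathbf{k}_m\cdot \mathbf{y}}\,d\mathbf{y} = \sum_m \alpha_m \mathrm{e}^{\mathrm{i}2\pi B\mathbf{k}_m\cdot\mathbf{x}} \hat{f}(B\mathbf{k}_m),
\end{align*}
where the last equality uses the compact support of $f$ to extend the integral to $\mathbb{R}^N$. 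This gives exactly (\ref{eq:Discrete-Fourier-approx-R-limited-f}) with the residual
\begin{align*}
\epsilon_f(\mathbf{x}) & = \int_X f(\mathbf{y})\,\epsilon_K\bigl(2\pi(\mathbf{x}-\mathbf{y})\bigr)\,d\mathbf{y}.
\end{align*}

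Finally, to establish (\ref{eq:error-discrete-fourier-approx-R-limited}) I would take absolute values inside the integral and pull out sup-norms:
\begin{align*}
|\epsilon_f(\mathbf{x})| & \le \max_{\mathbf{y}\in X}|f(\mathbf{y})|\,\int_X \bigl|\epsilon_K\bigl(2\pi(\mathbf{x}-\mathbf{y})\bigr)\bigr|\,d\mathbf{y} \le |X|\,\max_{\mathbf{y}\in X}|f(\mathbf{y})|\,\max_{\mathbf{z}\in X+X}|\epsilon_K(2\pi\mathbf{z})|,
\end{align*}
for $\mathbf{x}\in X$. The only genuine subtlety is the last bound: for $\mathbf{x},\mathbf{y}\in X$ we really have $\mathbf{x}-\mathbf{y}\in X-X$, so the step $X-X\subseteq X+X$ is used, which is valid when $X$ is symmetric about the origin (as in the ``region similar to $R$'' setup of the paper and its 1D prototype $[-T,T]$). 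This symmetry observation is the one place care is required; the rest of the argument is a direct multivariate transcription of the band-limited case, with $2\pi B$ playing the role of the scalar factor $2\pi B$ and $R$ replacing the symmetric interval.
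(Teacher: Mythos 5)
Your proposal is correct and follows the same route as the paper's (one-line) proof: substitute the discretization (\ref{eq:approx_K}) into the convolution representation (\ref{eq:R-limited-projection_of_f}), identify $\hat{f}\left(B\mathbf{k}_{m}\right)$ using the compact support of $f$ and the symmetry of $B$, and read off the residual $\epsilon_{f}\left(\mathbf{x}\right)=\int_{X}f\left(\mathbf{y}\right)\epsilon_{K}\left(2\pi\left[\mathbf{x}-\mathbf{y}\right]\right)d\mathbf{y}$, which gives the bound by pulling out sup-norms. Your observation that $\mathbf{x}-\mathbf{y}$ actually ranges over $X-X$ rather than $X+X$, so that the stated bound implicitly requires $X=-X$ (or an even error $\epsilon_{K}$), is a genuine subtlety the paper passes over silently, consistent with its symmetric one-dimensional prototype $\left[-T,T\right]$.
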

\begin{proof}
Substituting (\ref{eq:approx_K}) into (\ref{eq:R-limited-projection_of_f}),
we obtain (\ref{eq:Discrete-Fourier-approx-R-limited-f}) where 
\begin{align}
\epsilon_{f}\left(\mathbf{x}\right) & =\int_{X}f\left(\mathbf{y}\right)\epsilon_{K}\left(2\pi\left[\mathbf{x}-\mathbf{y}\right]\right)d\mathbf{y}.
\end{align}
\end{proof}
For $f_{B}\left(\mathbf{x}\right)$ to accurately approximate the
compactly supported function $f\left(\mathbf{x}\right)$ over $\mathbf{x}\in X$,
by (\ref{eq:error-discrete-fourier-approx-R-limited}), one needs
to build up an approximation of $K\left(B\mathbf{x}\right)$ that
is accurate over the set $2\pi\left(X+X\right)=\left\{ \mathbf{y}|\mathbf{y}=2\pi\mathbf{x},\,\mathbf{x}\in X+X\right\} $.

\subsection{$R$-Slepian functions: A multivariate generalization of PSWF}

We can generalize the PSWF and their approximations presented in Sections
\ref{subsec:approx-PSWF-1} and \ref{subsec:Approximating-PSWF-from-shifts-of-sinc}
to multiple-variables to define $R$-Slepian functions and construct
their approximations.

\subsubsection{$R$-Slepian functions}

Consider the $R_{B}$-limited projection of a compactly supported
function with support $S\in\mathbb{R}^{N}$: 
\begin{align}
P_{B}\left[f\right]\left(\mathbf{x}\right)=f_{B}\left(\mathbf{x}\right) & =\int_{S}f\left(\mathbf{y}\right)\left|\det\left(B\right)\right|K\left(2\pi B\left(\mathbf{x}-\mathbf{y}\right)\right)d\mathbf{y}\label{eq:P_B=00003D00005Bf=00003D00005D-R-limited}
\end{align}
restricted to $\mathbf{x}\in S$. $P_{B}\left[f\right]\left(\mathbf{x}\right)$
is a positive definite operator. Furthermore, if $R$ is symmetric,
i.e. $R=-R=\left\{ \mathbf{-x}\,|\,\mathbf{x}\in R\right\} $, then
$K\left(\mathbf{x}\right)=K\left(-\mathbf{x}\right)$ is real, 
\begin{align}
K\left(\mathbf{x}\right) & =\int_{R}\cos\left(\mathbf{k}\cdot\mathbf{x}\right)d\mathbf{k},\label{eq:K-in-cos}
\end{align}
$P_{B}$ is a positive definite real symmetric operator, 
\begin{align}
 & \int_{S}\int_{S}f\left(\mathbf{y}\right)\left|\det\left(B\right)\right|K\left(2\pi B\left(\mathbf{x}-\mathbf{y}\right)\right)d\mathbf{y}\,f\left(\mathbf{x}\right)d\mathbf{x}\nonumber \\
 & =\int_{S}\int_{S}f\left(\mathbf{y}\right)\left|\det\left(B\right)\right|K\left(2\pi B\left(\mathbf{y}-\mathbf{x}\right)\right)d\mathbf{y}\,f\left(\mathbf{x}\right)d\mathbf{x}\nonumber \\
 & =\int_{R_{B}}\left|\hat{f}\left(\mathbf{k}\right)\right|^{2}d\mathbf{k}\ge0,
\end{align}
and, consequently, by Mercer's theorem (see page 245 \cite{riesz1990functional}),
accepts a discrete eigendecomposition 
\begin{align}
\mu_{n}\varphi_{n}\left(\mathbf{x}\right) & =\int_{S}\varphi_{n}\left(\mathbf{y}\right)\left|\det\left(B\right)\right|K\left(2\pi B\left(\mathbf{x}-\mathbf{y}\right)\right)d\mathbf{y}\label{eq:K-prolates}
\end{align}
with positive eigenvalues $\mu_{n}$ and real eigenfunctions $\varphi_{n}\left(\mathbf{x}\right)\in\mathbb{R}$
for $\mathbf{x}\in S$. We refer to eigenfunctions $\varphi_{n}\left(\mathbf{x}\right)$
as the $R$-Slepian functions. For the sake of simplicity of the discussion
we will consider symmetric $R$. The case of non-symmetric $R$ can
be reduced to the symmetric case by translation of $R$ away from
the origin to exclude origin and consider $R\cup-R$.

Consider, $S=R$ and solutions $\psi\left(\mathbf{x}\right)$ of the
equation 
\begin{align}
\lambda\psi\left(\mathbf{x}\right) & =\int_{R}\psi\left(\mathbf{k}\right)\mathrm{e}^{\mathrm{i}2\pi B\mathbf{k}\cdot\mathbf{x}}d\mathbf{k},\quad\mathbf{x}\in R\label{eq:R-prolates}
\end{align}
Define $\psi_{e}\left(\mathbf{x}\right)=\left[\psi\left(\mathbf{x}\right)+\psi\left(-\mathbf{x}\right)\right]/2$
and $\psi_{o}\left(\mathbf{x}\right)=\left[\psi\left(\mathbf{x}\right)-\psi\left(-\mathbf{x}\right)\right]/2$
as the even and odd parts of $\psi\left(\mathbf{x}\right)$. For symmetric
$R$, i.e. $R=-R$, we have 
\begin{align}
\lambda\left[\psi_{e}\left(\mathbf{x}\right)+\psi_{o}\left(\mathbf{x}\right)\right] & =\int_{R}\psi\left(\mathbf{k}\right)\mathrm{e}^{\mathrm{i}2\pi B\mathbf{k}\cdot\mathbf{x}}d\mathbf{k}\nonumber \\
 & =\left\{ \begin{array}{l}
\int_{R}\psi_{e}\left(\mathbf{k}\right)\cos\left(2\pi B\mathbf{k}\cdot\mathbf{x}\right)d\mathbf{k}\\
+\mathrm{i}\int_{R}\psi_{o}\left(\mathbf{k}\right)\sin\left(2\pi B\mathbf{k}\cdot\mathbf{x}\right)d\mathbf{k}
\end{array}\right\} 
\end{align}
Considering the equations 
\begin{align}
\beta_{e}\psi_{e}\left(\mathbf{x}\right) & =\int_{R}\psi_{e}\left(\mathbf{k}\right)\cos\left(2\pi B\mathbf{k}\cdot\mathbf{x}\right)d\mathbf{k}\label{eq:psi_e}\\
\beta_{o}\psi_{o}\left(\mathbf{x}\right) & =\int_{R}\psi_{o}\left(\mathbf{k}\right)\sin\left(2\pi B\mathbf{k}\cdot\mathbf{x}\right)d\mathbf{k},\label{eq:psi_o}
\end{align}
which have real symmetric kernel with real eigenvalues and eigenfunctions,
the real and imaginary eigenvalues of (\ref{eq:R-prolates}) are associated
with eigenfunctions of (\ref{eq:psi_e}) and (\ref{eq:psi_o}), respectively.
Completeness follow from Fourier theory using the same arguments as
in \cite{slepian1964prolate,riesz1990functional}.

Eigenfunctions $\varphi_{n}\left(\mathbf{x}\right)$ of (\ref{eq:R-prolates}),
\begin{align}
\lambda_{n}\varphi_{n}\left(\mathbf{x}\right) & =\int_{R}\varphi_{n}\left(\mathbf{k}\right)\mathrm{e}^{\mathrm{i}2\pi B\mathbf{k}\cdot\mathbf{x}}d\mathbf{k},\quad\mathbf{x}\in R,\label{eq:R-prolates-1}
\end{align}
are also eigenfunctions of (\ref{eq:K-prolates}): 
\begin{align}
 & \int_{R}\varphi_{n}\left(\mathbf{y}\right)\left|\det\left(B\right)\right|K\left(2\pi B\left(\mathbf{x}-\mathbf{y}\right)\right)d\mathbf{y}\nonumber \\
 & =\left|\det\left(B\right)\right|\int_{R}\mathrm{e}^{\mathrm{i}2\pi B\mathbf{x}\cdot\mathbf{k}}\left[\int_{R}\varphi_{n}\left(\mathbf{y}\right)\mathrm{e}^{-\mathrm{i}2\pi B\mathbf{y}\cdot\mathbf{k}}d\mathbf{y}\right]d\mathbf{k}\nonumber \\
 & =\overline{\lambda}_{n}\left|\det\left(B\right)\right|\int_{R}\mathrm{e}^{\mathrm{i}2\pi B\mathbf{x}\cdot\mathbf{k}}\varphi_{n}\left(\mathbf{k}\right)d\mathbf{k}\nonumber \\
 & =\left|\lambda_{n}\right|^{2}\left|\det\left(B\right)\right|\varphi_{n}\left(\mathbf{x}\right)
\end{align}
with 
\begin{align}
\mu_{n} & =\left|\det\left(B\right)\right|\left|\lambda_{n}\right|^{2}.\label{eq:mu_n-lambda_n-4-R-Slepian}
\end{align}
Similar to the case of PSWF, $R$-Slepian functions satisfy double
orthogonality relation \cite{slepian1964prolate} 
\begin{align}
\mu_{n}\int_{\mathbb{R}^{N}}\varphi_{n}\left(\mathbf{x}\right)\varphi_{m}\left(\mathbf{x}\right)d\mathbf{x} & =\int_{R}\varphi_{n}\left(\mathbf{x}\right)\varphi_{m}\left(\mathbf{x}\right)d\mathbf{x}=\delta_{m,n}\label{eq:R-Slepian-orthogonality}
\end{align}

\subsubsection{Approximating $R$-Slepian functions as eigenfunction of $\mathrm{e}^{\mathrm{i}2\pi B\mathbf{k}_{m}\cdot\mathbf{x}_{l}}$}

Similar to the single dimensional case, if one can build up an approximation
to (\ref{eq:K-in-cos}) or, equivalently, (\ref{eq:K-in-exp}), 
\begin{align}
K\left(\mathbf{x}\right) & =\sum_{m=1}^{M}\alpha_{m}\exp\left(\mathrm{i}\mathbf{k}_{m}\cdot\mathbf{x}\right)+\epsilon_{K}\left(\mathbf{x}\right)\label{eq:K-in-cos-1}
\end{align}
for some $\mathbf{k}_{m}\in R$, then, substituting (\ref{eq:K-in-cos-1})
in (\ref{eq:K-prolates}), we obtain 
\begin{align}
\lambda_{n}\varphi_{n}\left(\mathbf{x}\right) & =\sum_{m=1}^{M}\alpha_{m}\mathrm{e}^{\mathrm{i}2\pi B\mathbf{k}_{m}\cdot\mathbf{x}}\varphi_{n}\left(\mathbf{k}_{m}\right)+\epsilon_{\varphi_{n}}\left(\mathbf{x}\right)\label{eq:approx-R-prolates}
\end{align}
where 
\begin{align}
\epsilon_{\varphi_{n}}\left(\mathbf{x}\right) & =\int_{R}\varphi_{n}\left(\mathbf{y}\right)\epsilon_{K}\left(2\pi B\left[\mathbf{x}-\mathbf{y}\right]\right).
\end{align}
We can approximate $R$-Slepian functions by substituting the eigenvectors
$\varphi_{n}\left(\mathbf{x}_{l}\right)$ of the equation 
\begin{align}
\lambda_{n}\,\varphi_{n}\left(\mathbf{x}_{l}\right) & =\sum_{m=1}^{M}\alpha_{m}\mathrm{e}^{\mathrm{i}2\pi B\mathbf{k}_{m}\cdot\mathbf{x}_{l}}\varphi_{n}\left(\mathbf{k}_{m}\right),\quad\mathbf{x}_{l}\in\bigcup_{m}\left\{ \mathbf{k}_{m}\right\} 
\end{align}
into (\ref{eq:approx-R-prolates}) for $n,l=1,\ldots,M$.

Because there are $\left|\det\left(B\right)\right|\left|S\right|\left|R\right|$
number of eigenvalues $\mu_{n}$ close to one (see Theorem 3 in \cite{landau1975szego}\footnote{Recently this theorem is rediscovered in \cite{franceschetti2015landau}.}),
in order to capture all the eigenvalues close to one, one shall have
$M\ge\left\lceil \left|\det\left(B\right)\right|\left|S\right|\left|R\right|\right\rceil $.
A detailed analysis of the characterization of the eigenvalues of
the projection operator defined in (\ref{eq:P_B=00003D00005Bf=00003D00005D-R-limited})
around one, zero and the transition zone can be found in \cite{sobolev2013pseudo}.

\subsubsection{Approximating $R$-Slepian functions as eigenfunction of $K\left(2\pi B\left[\mathbf{k}_{m}-\mathbf{k}_{l}\right]\right)$
\label{subsec:Approximating--Slepian-functions-from-shifts-of-K}}

Multiplying both sides of (\ref{eq:approx-R-prolates}) by $\left|\det\left(B\right)\right|\mathrm{e}^{-\mathrm{i}2\pi B\mathbf{k}\cdot\mathbf{x}}$,
integrating over $\mathbf{x}$ and using (\ref{eq:mu_n-lambda_n-4-R-Slepian}),
we obtain 
\begin{align}
\left|\lambda_{n}\right|^{2}\left|\det\left(B\right)\right|\varphi_{n}\left(\mathbf{k}\right) & =\mu_{n}\varphi_{n}\left(\mathbf{k}\right)\nonumber \\
 & =\sum_{m=1}^{M}\alpha_{m}\left|\det\left(B\right)\right|K\left(2\pi B\left[\mathbf{k}-\mathbf{k}_{m}\right]\right)\varphi_{n}\left(\mathbf{k}_{m}\right)+\varepsilon_{\varphi_{n}}\left(\mathbf{k}\right)\label{eq:sinc_approx_Slepian_func}
\end{align}
where 
\begin{align}
\varepsilon_{\varphi_{n}}\left(\mathbf{k}\right) & =\int_{R}\epsilon_{\varphi_{n}}\left(\mathbf{x}\right)\mathrm{e}^{-\mathrm{i}2\pi B\mathbf{k}\cdot\mathbf{x}}\,d\mathbf{x}
\end{align}
Consequently, by (\ref{eq:sinc_approx_Slepian_func}), we can approximate
$R$-Slepian functions by substituting the eigenvectors $\varphi_{n}\left(\mathbf{k}_{m}\right)$
of the equation 
\begin{align}
\mu_{n}\varphi_{n}\left(\mathbf{k}_{l}\right) & =\sum_{m=1}^{M}\alpha_{m}\left|\det\left(B\right)\right|K\left(2\pi B\left[\mathbf{k}_{l}-\mathbf{k}_{m}\right]\right)\varphi_{n}\left(\mathbf{k}_{m}\right)
\end{align}
for $n=1,\ldots M$.

\subsection{Discrete convolution representation of $R$-limited approximation
of compactly supported functions}

Using $R$-Slepian functions, we can prove a generalization of the
sampling and interpolation theorem, Theorem \ref{thm:sampling-theorem-bandlimited-functions},
for $R_{A}$-limited functions. To do this, we first show the sampling
theorem of $R_{B}$-limited functions for symmetric $B$ and then
generalize it to $R_{A}$-functions for an arbitrary $A\in GL\left(N,\mathbb{R}\right)$. 
\begin{lem}
Given a symmetric $B\in GL\left(N,\mathbb{R}\right)$, i.e. $B=B^{T}$
and a function $f\left(\mathbf{x}\right)$, $R_{B}$-limited projection
$f_{B}\left(\mathbf{x}\right)$ of $f\left(\mathbf{x}\right)$ can
be approximated by 
\begin{align}
f_{B}\left(\mathbf{x}\right) & \approx\sum_{m=1}^{M}f_{m}\left|\det\left(B\right)\right|K\left(2\pi B\left(\mathbf{x}-\mathbf{k}_{m}\right)\right)
\end{align}
where 
\begin{align}
f_{m} & =\sum_{n=1}^{M}f_{B}\left(\mathbf{k}_{n}\right)\alpha_{n}\left|\det\left(B\right)\right|K\left(2\pi B\left(\mathbf{k}_{m}-\mathbf{k}_{n}\right)\right)\alpha_{m}.
\end{align}
\end{lem}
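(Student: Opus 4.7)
The plan is to mirror the $R$-Slepian-based strategy used in the proof of Theorem~\ref{thm:sampling-theorem-bandlimited-functions} and then append one additional quadrature step to convert the ``sample of $f$'' that appears naturally there into a ``sample of $f_B$'' as required here. Three ingredients drive the argument. (i) The Mercer-type expansion $|\det(B)|K(2\pi B(\mathbf{x}-\mathbf{y})) = \sum_n \mu_n \varphi_n(\mathbf{y})\varphi_n(\mathbf{x})$ for $\mathbf{x},\mathbf{y}\in R$, coming from the spectral decomposition of the positive-definite real-symmetric operator $P_B|_R$; the symmetry $R=-R$ together with $B=B^T$ is what makes $K$ real and the operator genuinely symmetric. (ii) The multivariate analog of Lemma~\ref{lem:f_B-f}: for $f$ (effectively) supported in $R$, substituting (i) into (\ref{eq:R-limited-projection_of_f}) yields $f(\mathbf{x})=\sum_n f_{B,n}\varphi_n(\mathbf{x})$ for $\mathbf{x}\in R$ and $f_B(\mathbf{x})=\sum_n \mu_n f_{B,n}\varphi_n(\mathbf{x})$ for $\mathbf{x}\in\mathbb{R}^N$, with $f_{B,n}=\int_R f\,\varphi_n\,d\mathbf{x}$. (iii) The $K$-shift discretization $\mu_n\varphi_n(\mathbf{x}) \approx \sum_m \alpha_m|\det(B)|K(2\pi B(\mathbf{x}-\mathbf{k}_m))\varphi_n(\mathbf{k}_m)$ from Section~\ref{subsec:Approximating--Slepian-functions-from-shifts-of-K}, which is the direct multivariate analog of the sinc-shift eigensystem that drove the one-dimensional proof.

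First I would substitute (iii) into the $f_B$-expansion of (ii) and interchange the two summations. The remaining inner $n$-sum is $\sum_n f_{B,n}\varphi_n(\mathbf{k}_m)$, which by (ii) collapses to $f(\mathbf{k}_m)$ whenever $\mathbf{k}_m\in R$. This produces the intermediate identity
\begin{align*}
f_B(\mathbf{x}) \;\approx\; \sum_m \alpha_m\, f(\mathbf{k}_m)\,|\det(B)|\,K(2\pi B(\mathbf{x}-\mathbf{k}_m)),
\end{align*}
which already has the advertised interpolation form but is still written in terms of the unavailable values $f(\mathbf{k}_m)$.

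The second step exchanges these for the observable $f_B(\mathbf{k}_n)$ using the self-projection identity $f_B(\mathbf{k}_m) = \int_{\mathbb{R}^N} f_B(\mathbf{y})|\det(B)|K(2\pi B(\mathbf{k}_m-\mathbf{y}))\,d\mathbf{y}$, which holds because $f_B$ is $R_B$-limited. Restricting the integral to $R$ (where $f_B\approx f$ is concentrated, which is precisely the content of Theorem~\ref{thm:discrete-Fourier-approximation-R-limited}) and applying the quadrature rule (\ref{eq:approx_K}) once more to this integral yields
\begin{align*}
f(\mathbf{k}_m) \;\approx\; f_B(\mathbf{k}_m) \;\approx\; \sum_n \alpha_n\, f_B(\mathbf{k}_n)\,|\det(B)|\,K(2\pi B(\mathbf{k}_m-\mathbf{k}_n)).
\end{align*}
Plugging this back and absorbing the outer $\alpha_m$ into the coefficient produces exactly $f_m=\alpha_m\sum_n \alpha_n f_B(\mathbf{k}_n)|\det(B)|K(2\pi B(\mathbf{k}_m-\mathbf{k}_n))$, as claimed. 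The hard part will be the cumulative error analysis: two layers of kernel discretization must each be bounded in terms of $\epsilon_K$ in the spirit of Corollary~\ref{cor:epsilon_f_B}. This requires the multivariate version of the Slepian-approximation bound (\ref{eq:varepsilon_varphi_n(t)}) (with $|\det(B)|^{1/2}$ replacing $\sqrt{B}$ and $R+R$ replacing $[-2,2]$), together with the Landau-type estimate that only $O(|\det(B)||S||R|)$ eigenvalues $\mu_n$ lie near one and therefore contribute meaningfully to both substitutions.
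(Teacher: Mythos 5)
Your proposal reaches the paper's formula by the same overall architecture -- expand $f_{B}$ in $R$-Slepian functions, discretize the eigensystem via the $K$-shift relation (\ref{eq:sinc_approx_Slepian_func}), and identify the resulting node values $\sum_{n}f_{B,n}\varphi_{n}\left(\mathbf{k}_{m}\right)$ -- so the first half coincides with the paper's derivation of (\ref{eq:f_B(t)-in-K}). Where you genuinely diverge is the last step. The paper substitutes the quadrature form of $f_{B,n}$ into $\tilde{f}_{B}$, which produces the ill-conditioned kernel $R_{m}\left(\mathbf{k}\right)=\sum_{n}\mu_{n}^{-1}\varphi_{n}\left(\mathbf{k}_{m}\right)\varphi_{n}\left(\mathbf{k}\right)$, and then declares $\left|\det\left(B\right)\right|K\left(2\pi B\left(\mathbf{k}-\mathbf{k}_{m}\right)\right)$ (whose Mercer expansion carries $\mu_{n}$, not $\mu_{n}^{-1}$) to be a ``regularized approximation'' of $R_{m}$. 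You instead bypass $R_{m}$ entirely: you set $\tilde{f}_{B}\left(\mathbf{k}_{m}\right)=f\left(\mathbf{k}_{m}\right)\approx f_{B}\left(\mathbf{k}_{m}\right)$ and apply the quadrature a second time to the reproducing integral $f_{B}\left(\mathbf{k}_{m}\right)=\int f_{B}\left(\mathbf{y}\right)\left|\det\left(B\right)\right|K\left(2\pi B\left(\mathbf{k}_{m}-\mathbf{y}\right)\right)d\mathbf{y}$. Both routes land on the same $f_{m}$, and both rest on the same unstated hypothesis -- that $f$ has negligible energy in the Slepian modes whose eigenvalues are not close to one (for you this is what makes $f\approx f_{B}$ on $R$; for the paper it is what makes $\mu_{n}^{-1}\approx\mu_{n}$ on the retained modes) -- so your version is no less rigorous, and arguably more transparent about where the approximation lives. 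Two caveats: the claim that ``$f_{B}\approx f$ on $R$ is precisely the content of Theorem \ref{thm:discrete-Fourier-approximation-R-limited}'' is a misattribution, since that theorem bounds the discretization error of the Fourier representation of $f_{B}$, not the projection error $f-f_{B}$ (which the paper nowhere bounds and which requires the eigenvalue concentration argument just described); and your second quadrature step needs the nodes $\left(\alpha_{n},\mathbf{k}_{n}\right)$ to integrate the product $f_{B}\left(\cdot\right)K\left(2\pi B\left(\mathbf{k}_{m}-\cdot\right)\right)$ over $R$ accurately, which is a slightly stronger demand than accuracy on the exponentials in (\ref{eq:approx_K}) and should be stated as such. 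Your closing remarks on the cumulative error analysis go beyond the paper, which offers no error bound for this lemma at all.
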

\begin{proof}
Consider a symmetric $B\in GL\left(N,\mathbb{R}\right)$. Because
$\varphi_{n}\left(\mathbf{x}\right)$ are complete for compactly supported
functions as well as $R_{B}$-limited functions, we can expand any
$R_{B}$-limited function $f_{B}$ using (\ref{eq:sinc_approx_Slepian_func})
as follows: 
\begin{align}
f_{B}\left(\mathbf{x}\right) & =\int_{R}f\left(\mathbf{x}\right)\left|\det\left(B\right)\right|K\left(2\pi B\left(\mathbf{x}-\mathbf{y}\right)\right)\nonumber \\
 & =\sum_{n=1}^{M}\mu_{n}f_{B,n}\varphi_{n}\left(\mathbf{x}\right)\nonumber \\
 & =\left\{ \hspace{-0.5cc}\begin{array}{l}
\sum_{m=1}^{M}\alpha_{m}\left|\det\left(B\right)\right|K\left(2\pi B\left(\mathbf{x}-\mathbf{k}_{m}\right)\right)\tilde{f}_{B}\left(\mathbf{k}_{m}\right)\\
+\sum_{n=1}^{M}\mu_{n}f_{B,n}\varepsilon_{\varphi_{n}}\left(\mathbf{x}\right)
\end{array}\hspace{-0.5cc}\right\} \label{eq:f_B(t)-in-K}
\end{align}
where 
\begin{align}
\tilde{f}_{B}\left(\mathbf{k}\right) & =\sum_{n=1}^{M}f_{B,n}\varphi_{n}\left(\mathbf{k}\right)\label{eq:hat_f_Bk-1}
\end{align}
(\ref{eq:hat_f_Bk-1}) is a multivariate generalization of (\ref{eq:f_B-in-PSWF})

We rewrite $f_{B,n}$ as 
\begin{align}
f_{B,n} & =\frac{1}{\mu_{n}}\sum_{m=1}^{M}\alpha_{m}\varphi_{n}\left(\mathbf{k}_{m}\right)f_{B}\left(\mathbf{k}_{m}\right)+\int_{\mathbb{R}^{N}}f_{B}\left(\mathbf{x}\right)\varepsilon_{\varphi_{n}}\left(\mathbf{x}\right)d\mathbf{x}.\label{eq:f_B,n_in-K}
\end{align}
Substituting (\ref{eq:f_B,n_in-K}) in (\ref{eq:hat_f_Bk-1}), we
have 
\begin{align}
\tilde{f}_{B}\left(\mathbf{k}\right) & =\sum_{m=1}^{M}\alpha_{m}f_{B}\left(\mathbf{k}_{m}\right)R_{m}\left(\mathbf{k}\right)+\left(\sum_{n=1}^{M}\varphi_{n}\left(\mathbf{k}\right)\int_{\mathbb{R}^{N}}f_{B}\left(\mathbf{x}\right)\varepsilon_{\varphi_{n}}\left(\mathbf{x}\right)d\mathbf{x}\right)
\end{align}
where 
\begin{align}
R_{m}\left(\mathbf{k}\right) & =\sum_{n=1}^{M}\frac{1}{\mu_{n}}\varphi_{n}\left(\mathbf{k}_{m}\right)\varphi_{n}\left(\mathbf{k}\right).
\end{align}
Similar to the single dimensional case, a regularized approximation
to $R_{m}\left(\mathbf{k}\right)$ is given by $\left|\det\left(B\right)\right|K\left(2\pi B\left(\mathbf{k}-\mathbf{k}_{m}\right)\right)$,
leading to the approximate interpolation formula 
\begin{align}
f_{B}\left(\mathbf{x}\right) & \approx\sum_{m=1}^{M}\left\{ \begin{array}{l}
\alpha_{m}\left|\det\left(B\right)\right|K\left(2\pi B\left(\mathbf{x}-\mathbf{k}_{m}\right)\right)\\
\times\left[\sum_{n=1}^{M}\alpha_{n}f_{B}\left(\mathbf{k}_{n}\right)\left|\det\left(B\right)\right|\,K\left(2\pi B\left(\mathbf{k}_{m}-\mathbf{k}_{n}\right)\right)\right]
\end{array}\right\} 
\end{align}
\end{proof}
\begin{thm}
\label{thm:sampling-theorem-R_A-limited-functions} Consider the $R_{A}$-limited
function \textup{ 
\begin{align}
f_{A}\left(\mathbf{x}\right) & =\int_{AR}f\left(\mathbf{y}\right)\left|\det\left(A\right)\right|K\left(2\pi A^{T}\left(\mathbf{x}-\mathbf{y}\right)\right)d\mathbf{y}
\end{align}
for some, not necessarily symmetric, $A\in GL\left(N,\mathbb{R}\right)$.
Then 
\begin{multline}
f_{A}\left(\mathbf{x}\right)=\sum_{m}\alpha_{m}\left|\det\left(B\right)\right|K\left(2\pi A^{T}\left(\mathbf{x}-A\mathbf{k}_{m}\right)\right)\tilde{f}_{A}\left(A\mathbf{k}_{m}\right)\\
+\sum_{n}\mu_{n}g_{B,n}\varepsilon_{\varphi_{n}}\left(A^{-1}\mathbf{x}\right),\quad\mathbf{x}\in\mathbb{R}^{N},\,A\mathbf{k}_{m}\in R_{A}
\end{multline}
where }$\tilde{f}_{A}\left(A\mathbf{k}_{m}\right)$ and $g_{B,n}$
are defined by (\ref{eq:tilde_f_A}) and (\ref{eq:g_B,n}), respectively. 
\end{thm}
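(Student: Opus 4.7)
The plan is to reduce to the symmetric-$B$ lemma just established by absorbing the non-symmetric part of $A$ through a linear change of variables. Concretely, I would set $B := A^{T}A$, which is symmetric positive definite and therefore lies in $GL(N,\mathbb{R})$, and introduce
\begin{align*}
g(\mathbf{z}) & := f_{A}(A\mathbf{z}), \qquad h(\mathbf{w}) := f(A\mathbf{w}).
\end{align*}
Substituting $\mathbf{y}=A\mathbf{w}$ in the convolution representation of $f_{A}$ and using $A^{T}A\,\mathbf{p}\cdot\mathbf{z} = A\mathbf{p}\cdot A\mathbf{z}$ in the exponent of $K$, one checks that
\begin{align*}
g(\mathbf{z}) & = \int_{\mathbb{R}^{N}} h(\mathbf{w})\,|\det(B)|\,K\bigl(2\pi B(\mathbf{z}-\mathbf{w})\bigr)\,d\mathbf{w},
\end{align*}
so $g$ is exactly the $R_{B}$-limited projection of $h$ with a symmetric $B$, placing us in the setting of the preceding lemma. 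The algebraic identity driving the whole reduction is $BA^{-1}=A^{T}$, which is what makes the kernel $K(2\pi A^{T}(\cdot))$ of the theorem appear when we eventually invert the change of variables.

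Next, I would apply the symmetric lemma verbatim to $g$, yielding
\begin{align*}
g(\mathbf{z}) & = \sum_{m} \alpha_{m}\,|\det(B)|\,K\bigl(2\pi B(\mathbf{z}-\mathbf{k}_{m})\bigr)\,\tilde{g}_{B}(\mathbf{k}_{m}) + \sum_{n} \mu_{n}\,g_{B,n}\,\varepsilon_{\varphi_{n}}(\mathbf{z}),
\end{align*}
and then substitute $\mathbf{z}=A^{-1}\mathbf{x}$. Using $BA^{-1}=A^{T}$ gives
\begin{align*}
K\bigl(2\pi B(A^{-1}\mathbf{x}-\mathbf{k}_{m})\bigr) & = K\bigl(2\pi A^{T}(\mathbf{x}-A\mathbf{k}_{m})\bigr),
\end{align*}
so the quadrature nodes $\mathbf{k}_{m}\in R$ for the symmetric problem become $A\mathbf{k}_{m}\in R_{A}$ for the original problem, and I would define $\tilde{f}_{A}(A\mathbf{k}_{m}):=\tilde{g}_{B}(\mathbf{k}_{m})$ (which is the content of the equation labelled \eqref{eq:tilde_f_A}) together with $g_{B,n}$ inherited from the decomposition of $g$ in the $R$-Slepian basis associated with $B$. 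This reproduces both summands in the theorem statement, with the residual appearing as $\varepsilon_{\varphi_{n}}(A^{-1}\mathbf{x})$ exactly as claimed.

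The main obstacle -- and really the only subtle part -- is the bookkeeping of $A$, $A^{T}$, $A^{-1}$, and the various determinant factors to make sure the $R$-Slepian decomposition and its error bound transport cleanly from $P_{B}$ back to $P_{A}$. In particular, one has to verify that the sample points $A\mathbf{k}_{m}$ indeed populate $R_{A}=AR$ (immediate from $\mathbf{k}_{m}\in R$), that the $R$-Slepian functions $\varphi_{n}$, eigenvalues $\mu_{n}$, and quadrature weights $\alpha_{m}$ used to build the approximation live on the symmetric side and are merely pulled back through $A^{-1}$, and that the determinant normalization $|\det(B)|$ in the theorem is consistent with the convention $B=A^{T}A$ used in the reduction (any alternative choice, e.g.\ the symmetric polar factor of $A$, only rescales the weights and is absorbed into $\alpha_{m}$). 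Once these identifications are made, the statement follows immediately by pulling the symmetric-case lemma through the change of variables $\mathbf{x}\mapsto A^{-1}\mathbf{x}$.
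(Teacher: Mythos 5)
Your proposal is correct and follows essentially the same route as the paper: set $B=A^{T}A$, observe via the change of variables $\mathbf{y}=A\mathbf{w}$ (with $\left|\det\left(A\right)\right|^{2}=\left|\det\left(B\right)\right|$ and $A^{T}A\left(\mathbf{z}-\mathbf{w}\right)$ in the kernel argument) that $g_{B}\left(\mathbf{z}\right)=f_{A}\left(A\mathbf{z}\right)$ is the $R_{B}$-limited projection of $f\left(A\mathbf{z}\right)$ for symmetric $B$, apply the preceding lemma, and pull back through $\mathbf{z}=A^{-1}\mathbf{x}$ using $BA^{-1}=A^{T}$. The identifications $\tilde{f}_{A}\left(A\mathbf{k}_{m}\right)=\tilde{g}_{B}\left(\mathbf{k}_{m}\right)$ and the definition of $g_{B,n}$ match the paper's (\ref{eq:tilde_f_A}) and (\ref{eq:g_B,n}) exactly.
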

\begin{proof}
Let $B=A^{T}A$. Then $g_{B}\left(\mathbf{x}\right)=f_{A}\left(A\mathbf{x}\right)$
is an $R_{B}$-limited projection of $g\left(\mathbf{x}\right)=f\left(A\mathbf{x}\right)$:
\begin{align}
f_{A}\left(A\mathbf{x}\right) & =\int_{R}f\left(A\mathbf{y}\right)\left|\det\left(A\right)\right|^{2}K\left(2\pi A^{T}\left(A\mathbf{x}-A\mathbf{y}\right)\right)d\mathbf{y}\nonumber \\
 & =\int_{R}f\left(A\mathbf{y}\right)\left|\det\left(B\right)\right|K\left(2\pi B\left(\mathbf{x}-\mathbf{y}\right)\right)d\mathbf{y}\nonumber \\
g_{B}\left(\mathbf{x}\right) & =\int_{R}g\left(\mathbf{y}\right)\left|\det\left(B\right)\right|K\left(2\pi B\left(\mathbf{x}-\mathbf{y}\right)\right)d\mathbf{y}
\end{align}
which, by (\ref{eq:f_B(t)-in-K}), can be approximated by 
\begin{multline}
g_{B}\left(\mathbf{x}\right)=\sum_{m}\alpha_{m}\left|\det\left(B\right)\right|K\left(2\pi B\left(\mathbf{x}-\mathbf{k}_{m}\right)\right)\tilde{g}_{B}\left(\mathbf{k}_{m}\right)\\
+\sum_{n}\mu_{n}g_{B,n}\varepsilon_{\varphi_{n}}\left(\mathbf{x}\right),\quad\mathbf{x}\in\mathbb{R}^{N},\,\mathbf{k}_{m}\in R.
\end{multline}
or equivalently 
\begin{multline}
f_{A}\left(\mathbf{x}\right)=\sum_{m}\alpha_{m}\left|\det\left(B\right)\right|K\left(2\pi A^{T}\left(\mathbf{x}-A\mathbf{k}_{m}\right)\right)\tilde{f}_{A}\left(A\mathbf{k}_{m}\right)\\
+\sum_{n}\mu_{n}g_{B,n}\varepsilon_{\varphi_{n}}\left(A^{-1}\mathbf{x}\right),\quad\mathbf{x}\in\mathbb{R}^{N},\,A\mathbf{k}_{m}\in R_{A}.
\end{multline}
Here 
\begin{align}
g_{B,n} & =\int_{\mathbb{R}^{N}}f_{A}\left(A\mathbf{x}\right)\varphi_{n}\left(\mathbf{x}\right)d\mathbf{x}\nonumber \\
 & =\frac{1}{\det\left(A\right)}\int_{\mathbb{R}^{N}}f_{A}\left(\mathbf{x}\right)\varphi_{n}\left(A^{-1}\mathbf{x}\right)d\mathbf{x}\nonumber \\
 & =\int_{R}f_{A}\left(A\mathbf{x}\right)\varphi_{n}\left(\mathbf{x}\right)d\mathbf{x}\nonumber \\
 & =\frac{1}{\mu_{n}}\left\{ \begin{array}{l}
\sum_{m=1}^{M}\alpha_{m}f_{A}\left(A\mathbf{k}_{m}\right)\varphi_{n}\left(\mathbf{k}_{m}\right)\\
+\int_{R}f_{A}\left(A\mathbf{x}\right)\varepsilon_{\varphi_{n}}\left(\mathbf{x}\right)d\mathbf{x}
\end{array}\right\} ,\label{eq:g_B,n}
\end{align}
and 
\begin{align}
\tilde{f}_{A}\left(A\mathbf{k}\right)=\tilde{g}_{B}\left(\mathbf{k}\right) & =\sum_{n}g_{B,n}\varphi_{n}\left(\mathbf{k}\right)\label{eq:tilde_f_A}
\end{align}
with $\varphi_{n}\left(\mathbf{x}\right)$ being the eigenvector of
the projection operator $P_{B}$ with respect to the symmetric matrix
$B$. 
\end{proof}

\subsection{Construction of discrete Fourier approximation of the kernel \eqref{eq:approx_K}}

It is important to note that both Theorems \ref{thm:discrete-Fourier-approximation-R-limited}
and \ref{thm:sampling-theorem-R_A-limited-functions} rely on finding
an approximation of the convolution kernel in the form of (\ref{eq:K-in-cos-1}).
Although there is no unique way of finding an approximation in the
form of (\ref{eq:K-in-cos-1}), it can be constructed using tools
from approximation theory. Considering that two and three dimensional
domains can be approximated using tetrahedral and triangular meshes
along with their multidimensional extensions \cite{cheng2012delaunay,lo2014finite},
it is necessary to build quadrature $\left(\alpha_{m},\mathbf{k}_{m}\right)$
to approximate triangle-limited and tetrahedral-limited (shortly $T$-limited)
functions. Because our results do not require $R_{A}$ to be connected,
and they can be generalized to $R_{A}=\cup_{l=1}^{L}A_{l}R_{l}$ where
$A_{l}\in GL\left(\mathbb{R}^{N}\right)$ and $R_{l}\subset\mathbb{R}^{N}$
such that intersection of $\left\{ A_{l}R_{l}\right\} _{l=1}^{L}$
has measure zero, i.e. $\left|\cap_{l=1}^{L}A_{l}R_{l}\right|=0$,
quadratures obtained for $T$-limited functions can be patched together
to construct an approximation of the form (\ref{eq:K-in-cos-1}).
In this case, the convolution kernel becomes 
\begin{align}
K_{\Sigma}\left(\mathbf{x}\right) & =\sum_{m=1}^{M}\left|\det\left(A_{m}\right)\right|K_{m}\left(2\pi A_{m}^{T}\mathbf{x}\right)
\end{align}
where $K_{m}\left(\mathbf{x}\right)=\int_{R_{m}}\mathrm{e}^{\mathrm{i}\mathbf{k}\cdot\mathbf{x}}d\mathbf{k}$
and, consequently, the $R_{A}$-limited projection of a function is
given by 
\begin{align}
f_{A}\left(\mathbf{x}\right) & =\int_{\mathbb{R}^{N}}f\left(\mathbf{y}\right)K_{\Sigma}\left(\mathbf{x}-\mathbf{y}\right)d\mathbf{y}=\sum_{l=1}^{L}f_{A_{l}}\left(\mathbf{x}\right)
\end{align}
where 
\begin{align}
f_{A_{l}}\left(\mathbf{x}\right) & =\int_{\mathbb{R}^{N}}f\left(\mathbf{y}\right)\left|\det\left(A_{l}\right)\right|K\left(2\pi A_{l}^{T}\left(\mathbf{x}-\mathbf{y}\right)\right)d\mathbf{y}.
\end{align}
By Corollary \ref{thm:sampling-theorem-R_A-limited-functions}, because
each $f_{A_{l}}\left(\mathbf{x}\right)$ can be approximated by 
\begin{multline}
f_{A_{l}}\left(\mathbf{x}\right)\approx\sum_{m_{l}}\alpha_{m_{l}}\left|\det\left(B_{l}\right)\right|K\left(2\pi A_{l}^{T}\left(\mathbf{x}-A\mathbf{k}_{m_{l},l}\right)\right)\tilde{g}_{B_{l}}\left(\mathbf{k}_{m_{l},l}\right)\\
+\sum_{m_{l}}\mu_{m_{l}}g_{B,m_{l}}\varepsilon_{\varphi_{m_{l}}}\left(A_{l}^{-1}\mathbf{x}\right),\quad\mathbf{x}\in\mathbb{R}^{N},\,A_{l}\mathbf{k}_{m_{l},l}\in A_{l}R_{l}
\end{multline}
then $f_{A}\left(\mathbf{x}\right)$ can be approximated using samples
of $f\left(\mathbf{x}\right)$ for $\mathbf{x}\in\left\{ A_{l}\mathbf{k}_{m_{l},l}\right\} _{\begin{subarray}{c}
l=1,\ldots,L\\
m_{l}=1,\ldots,M_{l}
\end{subarray}}$.

In Appendix \ref{sec:Kernel-for-T-bandlimited-functions}, we provide
a method to construct quadratures for $T$-limited functions, specifically
for isosceles triangle and trirectangular tetrahedron which are used
to construct quadratures for equilateral triangle and regular tetrahedron,
respectively, that satisfy the corresponding symmetry properties.
We present two ways to construct the quadrature for equilateral triangle,
one capturing the symmetries of the triangle and the other doesn't.
Although approximations are constructed to capture the behavior of
the kernel and its derivatives at zero, the quadrature that satisfy
the symmetry properties provide a more accurate approximation within
a larger vicinity of zero with fewer number of nodes.

Another special case of $R$-limited functions that have practical
importance in multidimensional signal processing seismic data is considered
in Appendix \ref{sec:Cone-limited-functions} which can also be extended
to image processing in 2D and video processing 3D. We present the
convolution kernels and construction of corresponding quadratures
that can be used in sampling and interpolation Theorems \ref{thm:discrete-Fourier-approximation-R-limited}
and \ref{thm:sampling-theorem-R_A-limited-functions} in Appendix
\ref{sec:Cone-limited-functions}.

\section{Conclusion \label{sec:Conclusion-and-discussion}}

In this manuscript, we proved duality between the discretization of
Fourier and convolution representations of $R$-limited functions
which lead to the sampling and interpolation theorem, Theorem \ref{thm:sampling-theorem-R_A-limited-functions},
where the interpolation is to be understood as an approximation within
a desired accuracy over a compact region. Because discretization of
the Fourier representation is over a compact support, so is the discretization
of the convolution representation. Thus, an $R$-limited function
can be approximated from samples over a compact support that is similar
to $R$. We provided examples of convolution kernels for some special
cases of $R$-limited functions, namely $T$-limited and $C$-limited
functions whose Fourier transforms are supported in a triangle, or
tetrahedron, and cone, respectively. We constructed discretization
of the Fourier representation of these kernels which can be used along
with sampling and inteprolation theorems.

\section*{Acknowledgments}

I would like to thank Lucas Monzon who introduced moment problems
and quadrature methods to me. He has been a mentor, a colleague and
most importantly a dear friend. Initial sketches of results in Sections
\ref{subsec:Approximating-PSWF-from-shifts-of-sinc} and \ref{subsec:Approximating--Slepian-functions-from-shifts-of-K}
were obtained with him in 2012. He also reviewed the manuscript in
detail which made the content more accurate and clearer. Our collaboration
wouldn't have been possible without support and trust of Konstantin
Osypov. I would like to thank Kemal {Ö}zdemir for extensive discussions
during the final preparation of this manuscript and his invaluable
signal processing perspective. I would like to thank Ozan {Ö}ktem
and Daan Huybrechs for giving me the opportunity to present parts
of this work at KTH Royal Institute of Technology and University of
Leuven where follow up questions, comments and discussions have improved
the flow and content of the manuscript. Finally I thank Garret Flagg
and Vladimir Druskin for reading the initial draft of the manuscript
and providing constructive feedback.

 \bibliographystyle{plain}
\bibliography{0D__Documents_MATLAB_Beams__1+2_D-kernels_Refer___nction_references_on_sine_integral_function,1D__Schlumberger_LaTeX_Notes_Bandwidth_Extention_slb-bib,2D__Documents_MATLAB_Beams_FBMR_references_analytic_rep}

\appendix

\section{Generalization of Pad{é} approximation\label{sec:Generalization-of-Pad=00003D0000E9}}

Let $f\left(x\right)$ and $g\left(x\right)$ be two analytic functions
related to each other by the Cauchy integral 
\begin{align}
f\left(x\right) & =\int_{\Gamma}\rho\left(z\right)g\left(zx\right)dz\label{eq:int_rep_of_f_in_g}
\end{align}
for some closed contour $\Gamma\in\mathbb{C}$ and a weighting function
$\rho\left(z\right)$. A generalization of Pad{é} approximation
is achieved by finding a rational approximation to the weighting function
\begin{align}
\rho\left(z\right) & =\frac{1}{2\pi\mathrm{i}}\sum_{m}\frac{\alpha_{m}}{z-\gamma_{m}}+\epsilon_{\rho}\left(z\right),\quad z\in\mathbb{C}\label{eq:phi(z)}
\end{align}
for some distinct $\gamma_{m}\in\mathbb{C}$ and error $\epsilon_{\rho}\left(z\right)$.
Then we refer to 
\begin{align}
f\left(x\right) & =\sum_{m}\alpha_{m}g\left(\gamma_{m}x\right)+\epsilon\left(x\right)\label{eq:generalized_pade_approx}
\end{align}
as the generalization of Pad{é} approximation from rational function
to analytic functions, for some error function $\epsilon\left(x\right)$.
Substituting the power series expansion of $f$ and $g$ at zero into
(\ref{eq:int_rep_of_f_in_g}) 
\begin{align}
f\left(x\right) & =\sum_{m=0}^{\infty}f_{n}x^{n}=\int\rho\left(z\right)\left[\sum_{m=0}^{\infty}g_{n}\left(zx\right)^{n}\right]dz
\end{align}
and equating the terms of the series, one obtains that the moments
of $\rho\left(z\right)$ are given by the ratio of the power series
coefficients, which we denote by $h_{n}$ 
\begin{align}
\int\rho\left(z\right)z^{n}dz=h_{n}=\frac{f_{n}}{g_{n}} & =\sum_{m}\alpha_{m}\gamma_{m}^{n}+\epsilon_{n}\label{eq:h_n_sum}
\end{align}
for some error $\epsilon_{n}$. Because (\ref{eq:generalized_pade_approx})
is a discrete approximation to the integral (\ref{eq:int_rep_of_f_in_g})
, $\left(\alpha_{m},\gamma_{m}\right)$ are referred to as the quadratures.
Individually, we refer to $\alpha_{m}$ and $\gamma_{m}$ as weights
and nodes, respectively. In \cite{YF2014}, we presented the detailed
theory of this generalization of Pad{é} approximation and a method
to compute the quadratures $\left(\alpha_{m},\gamma_{m}\right)$ which
is based on \cite{kung1978new}. Some examples of moment problems
are given in Table \ref{tab:Example-of-moment-problems}.

\begin{table*}
\center

\begin{tabular}{|c|c|c|c|c|c|}
\hline 
\multirow{2}{*}{{\scriptsize{}{}$f\left(Bx\right)$}}  &
\multirow{2}{*}{{\scriptsize{}{}$g\left(x\right)$}}  &
\multirow{2}{*}{{\scriptsize{}{}$f_{2n}$}}  &
\multirow{2}{*}{{\scriptsize{}{}$g_{2n}$}}  &
{\scriptsize{}{}Moment problem }  &
{\scriptsize{}{}Quadrature}\tabularnewline
 &
 &
 &
 &
{\scriptsize{}{}$h_{n}=\sum_{m}\alpha_{m}\gamma_{m}^{2n}$}  &
{\scriptsize{}{}names}\tabularnewline
\hline 
\hline 
{\scriptsize{}{}$\mbox{sinc}\left(Bx\right)$}  &
{\scriptsize{}{}$\cos\left(x\right)$}  &
{\scriptsize{}{}$\frac{\left(-1\right)^{n}B^{2n}}{\left(2n+1\right)!}$}  &
{\scriptsize{}{}$\frac{\left(-1\right)^{n}}{\left(2n\right)!}$}  &
{\scriptsize{}{}$\frac{B^{2n}}{2n+1}$}  &
{\scriptsize{}{}Gauss-Legendre}\tabularnewline
\hline 
{\scriptsize{}{}$J_{0}\left(Bx\right)$}  &
{\scriptsize{}{}$\cos\left(x\right)$}  &
{\scriptsize{}{}$\frac{\left(-1\right)^{n}B^{2n}}{(2^{n}n!)^{2}}$}  &
{\scriptsize{}{}$\frac{\left(-1\right)^{n}}{\left(2n\right)!}$}  &
{\scriptsize{}{}$\frac{\left(2n\right)!B^{2n}}{(2^{n}n!)^{2}}$}  &
{\scriptsize{}{}Clenshaw-Curtis}\tabularnewline
\hline 
{\scriptsize{}{}$\mathrm{e}^{-\left(Bx\right)^{2}}$}  &
{\scriptsize{}{}$\cos\left(x\right)$}  &
{\scriptsize{}{}$\frac{\left(-1\right)^{n}B^{2n}}{n!}$}  &
{\scriptsize{}{}$\frac{\left(-1\right)^{n}}{\left(2n\right)!}$}  &
{\scriptsize{}{}$\frac{\left(2n\right)!B^{2n}}{n!}$}  &
{\scriptsize{}{}Gauss-Hermite}\tabularnewline
\hline 
{\scriptsize{}{}$\mbox{sinc}\left(Bx\right)$}  &
{\scriptsize{}{}$\exp\left(-x^{2}\right)$}  &
{\scriptsize{}{}$\frac{\left(-1\right)^{n}B^{2n}}{\left(2n+1\right)!}$}  &
{\scriptsize{}{}$\frac{\left(-1\right)^{n}}{n!}$}  &
{\scriptsize{}{}$\frac{\left(-1\right)^{n}B^{2n}}{\left(2n+1\right)!}$}  &
\multicolumn{1}{c}{}\tabularnewline
\cline{1-5} 
{\scriptsize{}{}$J_{0}\left(Bx\right)$}  &
{\scriptsize{}{}$\mbox{sinc}\left(x\right)$}  &
{\scriptsize{}{}$\frac{\left(-1\right)^{n}B^{2n}}{(2^{n}n!)^{2}}$}  &
{\scriptsize{}{}$\frac{\left(-1\right)^{n}}{\left(2n+1\right)!}$}  &
{\scriptsize{}{}$\frac{\left(2n+1\right)!B^{2n}}{(2^{n}n!)^{2}}$}  &
\multicolumn{1}{c}{}\tabularnewline
\cline{1-5} 
{\scriptsize{}{}$\left(Bx\right)^{-1}J_{1}\left(Bx\right)$}  &
{\scriptsize{}{}$x^{-1}\mbox{cosinc}\left(x\right)$}  &
{\scriptsize{}{}$\frac{\left(-1\right)^{n}}{2^{2n+1}n!(n+1)!}$}  &
{\scriptsize{}{}$\frac{\left(-1\right)^{n}}{\left(2n+2\right)!}$}  &
{\scriptsize{}{}$\frac{(2n+2)!}{2^{2n+1}n!\,(n+1)!}$}  &
\multicolumn{1}{c}{}\tabularnewline
\cline{1-5} 
\end{tabular}\\
 \vspace{0.2cm}

\caption{Example of moment problems corresponding to approximation of some
even functions $f\left(Bx\right)\approx\sum_{m}\alpha_{m}g\left(\gamma_{m}x\right)$
in terms of other even functions $g\left(x\right)$, along with the
known quadrature names (see \cite{gautschi1997moments}). \label{tab:Example-of-moment-problems}}
\end{table*}

\section{On approximations of sinc \label{sec:On-approximations-of-sinc}}

A good approximation to $\mbox{sinc}\left(x\right)$ within the vicinity
of zero can be achieved by building up quadratures for the integral
representation of $\mbox{sinc}\left(Bx\right)$,

\begin{align}
\mbox{sinc}\left(Bx\right) & =\frac{1}{B}\int_{0}^{B}\cos\left(\omega x\right)d\omega,
\end{align}
and then rescaling the approximation by $1/B$. One way to do this
is using the method presented in Appendix \ref{sec:Generalization-of-Pad=00003D0000E9}
to obtain 
\begin{align}
\mbox{sinc}\left(Bx\right) & =\sum_{m}\alpha_{m}\cos\left(\omega_{m}x\right)+\epsilon_{B}\left(x\right),\label{eq:sinc_in_cos_approx}
\end{align}
where $\left(\alpha_{m},\omega_{m}\right)$ satisfies the moment problem
\begin{align}
h_{n}=\frac{f_{n}}{g_{n}} & =\sum_{m}\alpha_{m}\omega_{m}^{2n}+\epsilon_{n}\label{eq:moment-problem}
\end{align}
for some small $\left|\epsilon_{n}\right|$ (see Figure \ref{fig:alpha_m+omega_m-B0=00003D00003D20}).
Here 
\begin{align}
f_{n} & =\frac{\left(-1\right)^{n}B^{2n}}{\left(2n+1\right)!}
\end{align}
and 
\begin{align}
g_{n} & =\frac{\left(-1\right)^{n}}{\left(2n\right)!}
\end{align}
are the Taylor series coefficients of $\mbox{sinc}\left(x\right)$
and $\cos\left(x\right)$ at zero, respectively, and 
\begin{align}
\epsilon_{B}\left(x\right) & =\sum_{n=0}^{\infty}\epsilon_{n}x^{2n}.
\end{align}
Solution to the moment problem is equivalent to computing Gauss-Legendre
quadratures.

The approximation given in equation (\ref{eq:sinc_in_cos_approx})
yields a highly accurate approximation to the $\mbox{sinc}\left(Bx\right)$
in a neighborhood of zero (see Figure \ref{fig:Approximation-of-sinc_in_cosines}).
However, due to the rapid increase in the values of the moments $h_{n}$
for large band-limit $B$, the construction of this approximation
suffers from numerical instabilities, and therefore requires $B$
to be in the range \textbf{$0<B\le2$}. To overcome these challenges,
approximation (\ref{eq:sinc_in_cos_approx}) can be coupled with a
scaling property of the sinc, for example

\begin{align}
\mbox{sinc}\left(3^{n}Bx\right) & =\frac{1}{3}\left[2\cos\left(2\,3^{n-1}Bx\right)+1\right]\mbox{sinc}\left(3^{n-1}Bx\right),\label{eq:sinc_scaling_property}
\end{align}
to derive an error bound on the approximation of $\mathrm{sinc}\left(3^{n}Bx\right)$
in terms of the error in the approximation to the lower bandwidth
$\mathrm{sinc}\left(Bx\right)$ as a sum of scaled cosines: 
\begin{lem}
\label{lem:sinc_in_cos} Let 
\begin{align}
\epsilon_{B}\left(x\right) & =\mathrm{sinc}\left(Bx\right)-\sum_{m}\alpha_{m}\cos\left(B\theta_{m}x\right).\label{eq:epsilon_B}
\end{align}
Then 
\begin{align}
\left|\mathrm{sinc}\left(3^{n}Bx\right)-\frac{1}{3^{n}}\sum_{m}\alpha_{m}\hspace{-0.5cm}\sum_{k=-\left(3^{n}-1\right)/2}^{\left(3^{n}-1\right)/2}\hspace{-0.5cm}\cos\left(\left(\theta_{m}+2k\right)Bx\right)\right| & \le\left|\epsilon_{B}\left(x\right)\right|,\quad\mbox{for \ensuremath{n\ge0}.}
\end{align}
\end{lem}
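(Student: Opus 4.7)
My plan is to convert the stated scaling identity (\ref{eq:sinc_scaling_property}) into its telescoped form, substitute the approximation (\ref{eq:epsilon_B}) for the single remaining $\mathrm{sinc}(Bx)$ factor, and read off the error bound from elementary trigonometry.

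First I would observe that iterating (\ref{eq:sinc_scaling_property}) gives a closed-form product, which in turn can be recognized as a Dirichlet-type kernel. More concretely, setting $N=(3^{n}-1)/2$ and using the standard identity $\sum_{k=-N}^{N}\cos(2k\theta)=\sin((2N+1)\theta)/\sin(\theta)$ with $\theta=Bx$ and $2N+1=3^{n}$, one obtains
\begin{equation*}
\mathrm{sinc}(3^{n}Bx) \;=\; \frac{1}{3^{n}}\,\mathrm{sinc}(Bx)\sum_{k=-N}^{N}\cos(2kBx).
\end{equation*}
This identity is equivalent to the iterated form of (\ref{eq:sinc_scaling_property}), so either a direct Dirichlet-kernel verification or a short induction on $n$ (base case $n=0$ is trivial, the inductive step uses exactly (\ref{eq:sinc_scaling_property})) establishes it.

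Next I would substitute (\ref{eq:epsilon_B}) into the right-hand side to get
\begin{equation*}
\mathrm{sinc}(3^{n}Bx) \;=\; \frac{1}{3^{n}}\sum_{m}\alpha_{m}\cos(B\theta_{m}x)\sum_{k=-N}^{N}\cos(2kBx) \;+\; \frac{1}{3^{n}}\,\epsilon_{B}(x)\sum_{k=-N}^{N}\cos(2kBx).
\end{equation*}
Applying the product-to-sum identity $\cos(B\theta_{m}x)\cos(2kBx)=\tfrac{1}{2}\bigl[\cos((\theta_{m}+2k)Bx)+\cos((\theta_{m}-2k)Bx)\bigr]$ and using that the inner sum runs symmetrically over $k\in\{-N,\dots,N\}$, the two halves collapse into a single sum:
\begin{equation*}
\sum_{k=-N}^{N}\cos(B\theta_{m}x)\cos(2kBx) \;=\; \sum_{k=-N}^{N}\cos\bigl((\theta_{m}+2k)Bx\bigr).
\end{equation*}
Therefore the main-term expression matches exactly the sum appearing in the statement of the lemma, and the residual is precisely $3^{-n}\,\epsilon_{B}(x)\sum_{k=-N}^{N}\cos(2kBx)$.

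Finally I would bound this residual by the trivial estimate $\bigl|\sum_{k=-N}^{N}\cos(2kBx)\bigr|\le 2N+1=3^{n}$, which cancels the prefactor $3^{-n}$ and yields the claimed inequality $|\epsilon_{B}(x)|$. The only conceptual step is recognizing the scaling identity (\ref{eq:sinc_scaling_property}) as a Dirichlet-kernel factorization; the rest is bookkeeping. The potential pitfall to watch is the symmetry argument that collapses the product-to-sum expansion into a single $\cos((\theta_{m}+2k)Bx)$ sum, since the two halves must both run over the same symmetric index set, which is exactly what the factor $3^{n}$ being odd guarantees.
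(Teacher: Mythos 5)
Your proof is correct and is essentially the paper's argument in telescoped form: the paper proves the bound by induction on $n$, applying the scaling property \eqref{eq:sinc_scaling_property} one step at a time and bounding each factor $\tfrac{1}{3}\left|2\cos\left(2\cdot3^{l}Bx\right)+1\right|\le1$, whereas you collapse the iterated product into the closed-form Dirichlet-kernel identity $\mathrm{sinc}\left(3^{n}Bx\right)=3^{-n}\,\mathrm{sinc}\left(Bx\right)\sum_{k=-N}^{N}\cos\left(2kBx\right)$ and bound the full sum of $3^{n}$ cosines by $3^{n}$ at once. The residual being estimated is literally the same quantity in both cases (the product of the $\tfrac{1}{3}\left[2\cos+1\right]$ factors equals your Dirichlet sum divided by $3^{n}$), and your product-to-sum collapse over the symmetric index set mirrors the $\sum_{k}\cos\left(B\left(\theta_{m}-2k\right)x\right)$ re-indexing in the paper's $n=1$ step, so this is a difference of bookkeeping rather than of method.
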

\begin{proof}
We will prove by induction. For $n=0$, this is trivial by assumption
(\ref{eq:epsilon_B}).

Let us define 
\begin{multline}
\epsilon_{3^{n}B}\left(x\right)=\mbox{sinc}\left(3^{n}Bx\right)-\frac{1}{3^{n}}\sum_{m}\alpha_{m}\hspace{-0.5cm}\sum_{k=-\left(3^{n}-1\right)/2}^{\left(3^{n}-1\right)/2}\hspace{-0.5cm}\cos\left(\left(\theta_{m}+2k\right)Bx\right).
\end{multline}
For $n=1$, substituting (\ref{eq:sinc_in_cos_approx}) into the scaling
property (\ref{eq:sinc_scaling_property}), we obtain 
\begin{align}
\mbox{sinc}\left(3Bx\right) & =\frac{1}{3}\sum_{m}\alpha_{m}\sum_{k=-1}^{1}\cos\left(B\left(\theta_{m}-2k\right)x\right)+\frac{1}{3}\left[2\cos\left(2Bx\right)+1\right]\epsilon_{B}\left(x\right),
\end{align}
implying 
\begin{align}
\left|\epsilon_{3B}\left(x\right)\right| & =\left|\frac{1}{3}\left[2\cos\left(2Bx\right)+1\right]\epsilon_{B}\left(x\right)\right|\le\left|\epsilon_{B}\left(x\right)\right|.
\end{align}
Multiplying $\epsilon_{3^{n}B}\left(x\right)$ by $\frac{1}{3}\left[2\cos\left(2\,3^{n}Bx\right)+1\right]$,
we have 
\begin{align}
\frac{1}{3}\left[2\cos\left(2\,3^{n}Bx\right)+1\right]\epsilon_{3^{n}B}\left(x\right) & =\epsilon_{3^{n+1}B}\left(x\right)\label{eq:scaling_property_error_sinc}
\end{align}
which implies 
\begin{align}
 & \left|\epsilon_{3^{n+1}B}\left(x\right)\right|\le\left|\epsilon_{3^{n}B}\left(x\right)\right|\le\ldots\le\left|\epsilon_{3B}\left(x\right)\right|\le\left|\epsilon_{B}\left(x\right)\right|.
\end{align}
\end{proof}
\begin{cor}
\label{cor:error-sinc_approx} The error is given by 
\begin{align}
\epsilon_{3^{n+1}B}\left(x\right) & =\sum_{l=-\infty}^{\infty}\mathrm{sinc}\left(3^{n+1}B\left[x-\frac{\pi l}{B}\right]\right)\epsilon_{B}\left(x\right)\label{eq:e3n+1Bx_defn1}
\end{align}

which for $x=m\pi/B$ becomes 
\begin{align}
\epsilon_{3^{n+1}B}\left(\frac{m\pi}{B}\right) & =\epsilon_{B}\left(\frac{m\pi}{B}\right).
\end{align}
Furthermore, 
\begin{align}
\lim_{n\rightarrow\infty}3^{n+1}2B\,\epsilon_{3^{n+1}B}\left(x\right) & =\sum_{l=-\infty}^{\infty}\mathrm{\delta}\left(x-\frac{\pi l}{B}\right)\epsilon_{B}\left(x\right).
\end{align}
\end{cor}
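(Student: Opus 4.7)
The plan is to iterate the recursion (\ref{eq:scaling_property_error_sinc}) from Lemma \ref{lem:sinc_in_cos} and then re-express the resulting product in closed form. Specifically, applying the recursion $n+1$ times yields
\begin{align*}
\epsilon_{3^{n+1}B}(x) & = \left[\prod_{j=0}^{n} \frac{1}{3}\bigl(1 + 2\cos(2\cdot 3^{j} B x)\bigr)\right] \epsilon_{B}(x).
\end{align*}
The key trigonometric identity $\sin(3\theta) = \sin(\theta)\,(1+2\cos(2\theta))$, applied with $\theta = 3^{j}Bx$, rewrites each factor as $\sin(3^{j+1}Bx)/(3\sin(3^{j}Bx))$, so the product telescopes to
\begin{align*}
\epsilon_{3^{n+1}B}(x) & = \frac{\sin(3^{n+1}Bx)}{3^{n+1}\sin(Bx)}\,\epsilon_{B}(x).
\end{align*}

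Next I would convert this closed form into the stated sum of sincs via the Mittag--Leffler partial-fraction expansion $1/\sin(\theta) = \sum_{l\in\mathbb{Z}} (-1)^{l}/(\theta - l\pi)$, with $\theta = Bx$. Since $3^{n+1}$ is odd, $\sin(3^{n+1}Bx - 3^{n+1}l\pi) = (-1)^{l}\sin(3^{n+1}Bx)$, and each term in the resulting series equals
\begin{align*}
\frac{(-1)^{l}\sin(3^{n+1}Bx)}{3^{n+1}(Bx - l\pi)} & = \frac{\sin\bigl(3^{n+1}B(x - \pi l/B)\bigr)}{3^{n+1}B(x - \pi l/B)} = \mathrm{sinc}\bigl(3^{n+1}B(x - \pi l/B)\bigr),
\end{align*}
which after multiplication by $\epsilon_{B}(x)$ gives (\ref{eq:e3n+1Bx_defn1}).

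The second assertion then follows by evaluating the sum at $x = m\pi/B$: the argument of the $l$-th sinc becomes $3^{n+1}(m-l)\pi$, which is either $0$ (when $l=m$, giving $\mathrm{sinc}(0)=1$) or a nonzero integer multiple of $\pi$ (giving $0$). Thus the prefactor multiplying $\epsilon_{B}(m\pi/B)$ collapses to $1$. For the third assertion, I would invoke the classical distributional limit $N\,\mathrm{sinc}(Ny) \to \pi\,\delta(y)$ as $N\to\infty$ (with the paper's unnormalized convention $\mathrm{sinc}(u)=\sin(u)/u$) applied termwise with $N = 3^{n+1}B$, so that each summand in $3^{n+1}(2B)\epsilon_{3^{n+1}B}(x)$ converges to a Dirac mass at $x = \pi l/B$ scaled by $\epsilon_{B}(x)$.

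The main subtlety will be justifying the termwise interchange of the infinite sum with the limit $n\to\infty$ in the last step; the Mittag--Leffler series converges only conditionally, so the distributional statement has to be read against test functions of compact support, on which only finitely many sincs contribute nontrivially in the limit. The telescoping step and the parity argument using that $3^{n+1}$ is odd are routine once the identity $1+2\cos(2\theta) = \sin(3\theta)/\sin(\theta)$ is in hand.
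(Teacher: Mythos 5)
Your proposal is correct and arrives at the same key identity as the paper, namely $\epsilon_{3^{n+1}B}(x)=3^{-(n+1)}\prod_{j=0}^{n}\left[2\cos\left(2\cdot3^{j}Bx\right)+1\right]\epsilon_{B}(x)$, but converts it into the sum of shifted sincs by a genuinely different route. The paper passes to the Fourier domain: the transform of the product is the cascaded convolution of the three-point measures $\delta(k)+\delta\left(k-2\cdot3^{j}B\right)+\delta\left(k+2\cdot3^{j}B\right)$, which (via the balanced-ternary tiling of the integers between $-\left(3^{n+1}-1\right)/2$ and $\left(3^{n+1}-1\right)/2$) collapses to a uniform delta comb of spacing $2B$ windowed by $\chi_{[-3^{n+1}B,\,3^{n+1}B]}$, and inverse transforming yields the periodized sinc. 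You instead telescope the product with $1+2\cos(2\theta)=\sin(3\theta)/\sin(\theta)$ to get the Dirichlet-kernel closed form $\sin\left(3^{n+1}Bx\right)/\left(3^{n+1}\sin(Bx)\right)$ and then apply the Mittag--Leffler expansion of $1/\sin(Bx)$, using the oddness of $3^{n+1}$ to absorb the alternating signs. Your route is more elementary (no distributional Fourier transforms are needed for the first display) and makes the evaluation at $x=m\pi/B$ and the bound $\left|\epsilon_{3^{n+1}B}\right|\le\left|\epsilon_{B}\right|$ transparent from the closed form; the paper's route is the one that generalizes when no telescoping identity is available, e.g.\ to the multivariate kernels of Appendix C. The remaining steps (termwise evaluation at the nodes, $\mathrm{sinc}\rightarrow\delta$) coincide with the paper's. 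One caveat on the last assertion: with the correct normalization $N\,\mathrm{sinc}(Ny)\rightarrow\pi\,\delta(y)$ that you quote, each summand of $3^{n+1}2B\,\epsilon_{3^{n+1}B}(x)$ converges to $2\pi\,\delta\left(x-\pi l/B\right)\epsilon_{B}(x)$, so the displayed limit (which the paper obtains from the identity $\lim_{a\rightarrow\infty}2a\,\mathrm{sinc}(ax)=\delta(x)$, itself missing a factor $2\pi$ under the stated conventions) should carry a factor $2\pi$; your argument exposes this normalization slip, but as written you assert the unnormalized conclusion, so either insert the $2\pi$ or state explicitly which delta normalization you are using.
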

\begin{proof}
At the end of the proof of Lemma \ref{lem:sinc_in_cos} we showed
that the error satisfies the scaling property (\ref{eq:scaling_property_error_sinc}).
Using this, we can write 
\begin{align}
\epsilon_{3^{n+1}B}\left(x\right) & =\frac{1}{3}\left[2\cos\left(2\,3^{n}Bx\right)+1\right]\epsilon_{3^{n}B}\left(x\right)\nonumber \\
 & =\frac{1}{3^{2}}\left[2\cos\left(2\,3^{n-1}Bx\right)+1\right]\left[2\cos\left(2\,3^{n}Bx\right)+1\right]\epsilon_{3^{n-1}B}\left(x\right)\nonumber \\
 & =\frac{1}{3^{n+1}}\prod_{l=1}^{n}\left[2\cos\left(2\,3^{l}Bx\right)+1\right]\epsilon_{B}\left(x\right)
\end{align}
which in the Fourier domain can be written as 
\begin{align}
\hat{\epsilon}_{3^{n+1}B}\left(k\right) & =\frac{1}{3^{n+1}}\left(\ast_{l=0}^{n}\left[\begin{array}{l}
\delta\left(k-2\,3^{l}B\right)\\
+\delta\left(k+2\,3^{l}B\right)+\delta\left(k\right)
\end{array}\right]\right)\ast\hat{\epsilon}_{B}\left(k\right)\nonumber \\
 & =\left(\frac{\chi_{[-3^{n+1}B,3^{n+1}B]}\left(k\right)}{3^{n+1}2B}2B\hspace{-0.1cm}\sum_{l=-\infty}^{\infty}\hspace{-0.1cm}\delta\left(k-2Bl\right)\right)\ast\hat{\epsilon}_{B}\left(k\right)
\end{align}
where 
\begin{align}
\hat{\epsilon}_{B}\left(k\right) & =\left(2\pi\right)^{-1}\int\epsilon_{B}\left(x\right)\mathrm{e}^{\mathrm{i}kx}dx
\end{align}
is the inverse Fourier transform of $\epsilon_{B}\left(x\right)$
and $\ast_{l=0}^{n}f_{l}\left(k\right)=\left(f_{0}\ast f_{1}\ast\cdots\ast f_{n}\right)\left(k\right)$
denotes a cascaded convolution operator. Taking the inverse Fourier
transform, we obtain

\begin{align}
\epsilon_{3^{n+1}B}\left(x\right) & =\left(\mathrm{sinc}\left(3^{n+1}Bx\right)\ast\sum_{l=-\infty}^{\infty}\delta\left(x-\frac{2\pi l}{2B}\right)\right)\epsilon_{B}\left(x\right)\nonumber \\
 & =\sum_{l=-\infty}^{\infty}\mathrm{sinc}\left(3^{n+1}B\left[x-\frac{\pi l}{B}\right]\right)\epsilon_{B}\left(x\right)
\end{align}
which for $x=m\pi/B$ is 
\begin{align}
\epsilon_{3^{n+1}B}\left(\frac{m\pi}{B}\right) & =\sum_{l=-\infty}^{\infty}\mathrm{sinc}\left(3^{n+1}\pi\left[m-l\right]\right)\epsilon_{B}\left(\frac{m\pi}{B}\right)\nonumber \\
 & =\sum_{l=-\infty}^{\infty}\delta_{ml}\epsilon_{B}\left(\frac{m\pi}{B}\right).
\end{align}
By using the identity 
\begin{align}
\lim_{a\rightarrow\infty}2a\,\mathrm{sinc}\left(ax\right) & =\delta\left(x\right)
\end{align}
and the dominated convergence theorem (see page 14 of \cite{friedlander_joshi_1998}),
we have 
\begin{align}
\lim_{n\rightarrow\infty}3^{n+1}2B\,\epsilon_{3^{n+1}B}\left(x\right) & =\lim_{n\rightarrow\infty}\sum_{l=-\infty}^{\infty}3^{n+1}2B\,\mathrm{sinc}\left(3^{n+1}B\left[x-\frac{\pi l}{B}\right]\right)\epsilon_{B}\left(x\right)\nonumber \\
 & =\sum_{l=-\infty}^{\infty}\delta\left(x-\frac{\pi l}{B}\right)\epsilon_{B}\left(x\right).
\end{align}
\end{proof}
The practical implications of Lemma \ref{lem:sinc_in_cos} are as
long as one has a good approximation $\mbox{sinc}\left(x\right)\approx\sum_{m}\alpha_{m}\cos\left(\theta_{m}x\right)$
over an interval around zero, the $\left(\alpha_{m},\theta_{m}\right)$
can be used to build up an approximation of (i) $\mbox{sinc}\left(Bx\right)$,
for any $B\in\mathbb{R}$, on the same interval, (ii) $\mbox{sinc}\left(x\right)$
on any interval around zero or, equivalently, (iii) $\mbox{sinc}\left(Bx\right)$,
for any $B\in\mathbb{R}$, on any interval around zero, as accurate
as the initial approximation to $\mbox{sinc}\left(x\right)$. (see
Figures \ref{fig:Approximation-of-sinc_in_cosines} and \ref{fig:Approximation-of-sinc_in_cosines-uniform-sampling})
Algorithm \ref{alg:sinc-in-cos} outlines our approach to approximating
a sinc of arbitrary bandwidth as a sum of scaled cosines.

\begin{algorithm*}
Given $0\le B_{0}\in\mathbb{R}$ 
\begin{enumerate}
\item Compute $n=\log_{3}\left\lfloor B_{0}\right\rfloor +1$ 
\item Set $B=B_{0}3^{-n}$. 
\item Solve the moment problem 
\begin{align*}
h_{n} & =B^{2n}\left(2n+1\right)^{-1}=\sum_{m}\alpha_{m}\left(\omega_{m}^{2}\right)^{n}
\end{align*}
for $\left(\alpha_{m},\omega_{m}\right)$ using the method of \cite{YF2014}\textbf{
}(see Appendix \ref{sec:Generalization-of-Pad=00003D0000E9} and Figure
\ref{fig:alpha_m+omega_m-B0=00003D00003D20}) 
\item Set $\theta_{m}=\omega_{m}B^{-1}$ 
\item Form the approximation (see Figure \ref{fig:Approximation-of-sinc_in_cosines})
\begin{align*}
\mbox{sinc}\left(B_{0}x\right) & \approx\frac{1}{3^{n}}\sum_{m}\alpha_{m}\sum_{k=-\left(3^{n}-1\right)/2}^{\left(3^{n}-1\right)/2}\cos\left(\left(\theta_{m}+2k\right)Bx\right)
\end{align*}
\end{enumerate}
\caption{\label{alg:sinc-in-cos}Representation of $\mbox{sinc}\left(B_{0}x\right)$
as a sum of scaled cosines}
\end{algorithm*}

\begin{figure}
\includegraphics[scale=0.6]{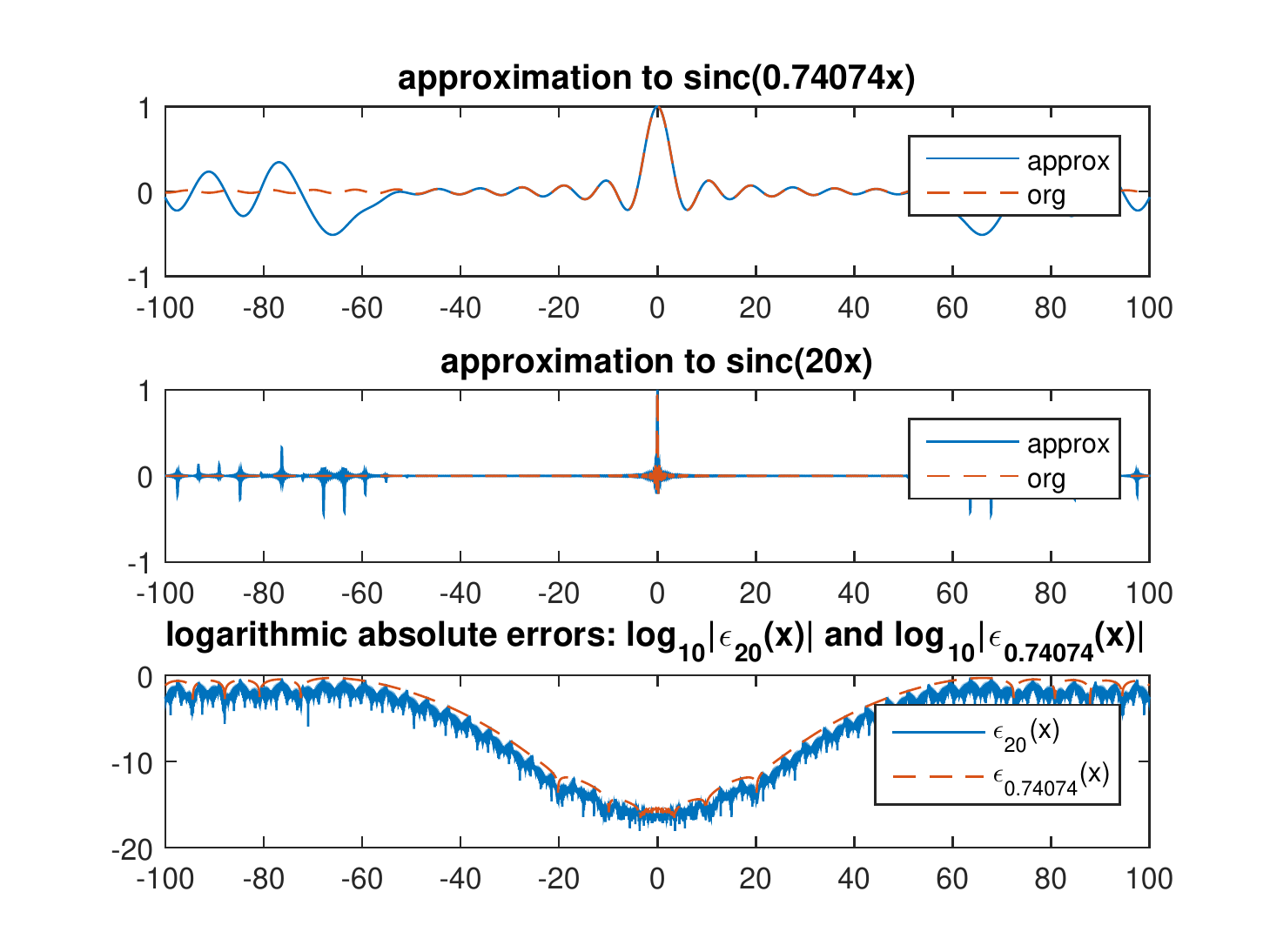}\caption{Approximation of $\mbox{sinc}\left(Bx\right)$ by (\ref{eq:sinc_in_cos_approx})
using Algorithm \ref{alg:sinc-in-cos}. On top and middle plots, $\mbox{sinc}\left(Bx\right)$
and $\mbox{sinc}\left(B_{0}x\right)$ (red dashed) along with their
approximations (solid blue), for $B=3^{-3}20$ and $B_{0}=20$, respectively.
On the bottom plot, the logarithmic absolute errors for $B$ (red
dashed) and $B_{0}$(blue solid). As derived the error corresponding
to $B_{0}$ is less than that of $B$. \label{fig:Approximation-of-sinc_in_cosines}}
\end{figure}

\begin{figure}
\includegraphics[scale=0.6]{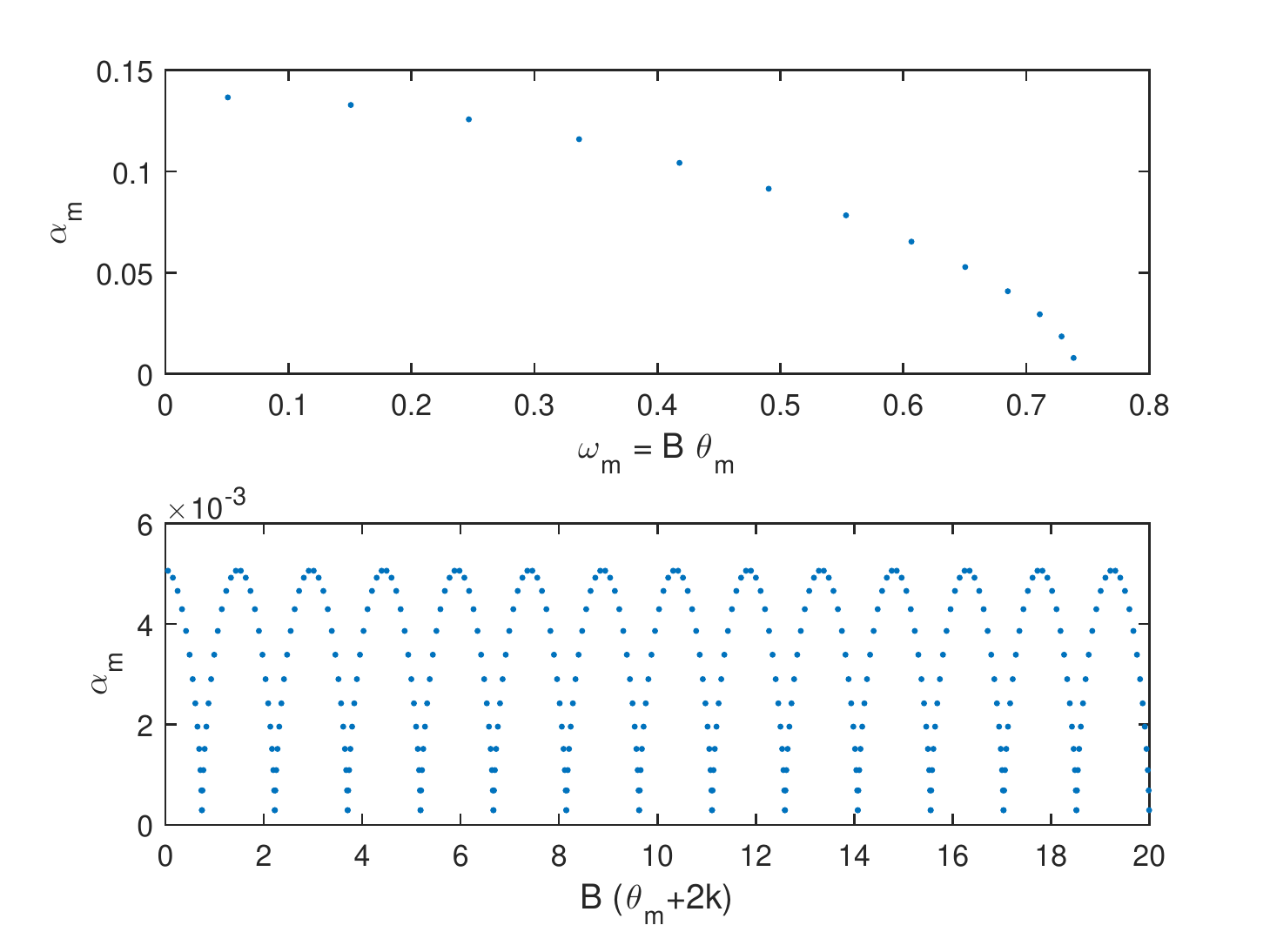}\caption{On top plot, solution $\left(\alpha_{m},\omega_{m}\right)$ to the
moment problem (\ref{eq:moment-problem}) for $B=3^{-3}B_{0}$ and
$B_{0}=20$ which is used to approximate $\mbox{sinc}\left(Bx\right)$
(see top plot in Figure \ref{fig:Approximation-of-sinc_in_cosines}).
On the bottom plot, $\left(\alpha_{m},\left|\theta_{m}+2k\right|B\right)_{k=-\left(3^{3}-1\right)/2}^{\left(3^{3}-1\right)/2}$
used to approximate$\mbox{sinc}\left(B_{0}x\right)$ (see middle plot
in Figure \ref{fig:Approximation-of-sinc_in_cosines}). \label{fig:alpha_m+omega_m-B0=00003D00003D20}}
\end{figure}

\paragraph*{How about uniform sampling?}

Consider approximation of the integral 
\begin{align}
f\left(x\right) & =\frac{1}{2B}\int_{-B}^{B}\mathrm{e}^{\mathrm{i}x\omega}d\omega=\mbox{sinc}\left(Bx\right)
\end{align}
by discretization of the integral using uniform sampling over the
interval $\left[-B,B\right]$: 
\begin{align}
\tilde{f}_{B,N}\left(x\right) & =\frac{1}{2B}\frac{2B}{2N+1}\sum_{n=-N}^{N}\mathrm{e}^{\mathrm{i}x\frac{2B}{2N+1}n}\nonumber \\
 & =\frac{1}{2N+1}\frac{\sin\left(Bx\right)}{\sin\left(\frac{Bx}{2N+1}\right)}\label{eq:sin(Bx)/sin(x)}
\end{align}
Because $\tilde{f}_{B,N}\left(x\right)$ is periodic with period $\pi B^{-1}\left(2N+1\right)$,
it is also referred to as periodic sinc function.

Using the Taylor series expansion of $\left(1-x\right)^{-1}$ and
$\mbox{sinc}\left(x\right)$ around zero \footnote{,$\left(1-x\right)^{-1}=\sum_{m=0}^{\infty}x^{m}$ and $\mbox{sinc}\left(x\right)=\sum_{n=0}^{\infty}\frac{\left(-1\right)^{n}}{\left(2n+1\right)!}x^{2n}$}
series representation of the error becomes 
\begin{align}
\epsilon_{B}\left(x\right) & =f\left(x\right)-\tilde{f}_{B,N}\left(x\right)\nonumber \\
 & =\frac{\sin\left(Bx\right)}{Bx}\left(1-\frac{1}{1-\left[1-\mbox{sinc}\left(\frac{Bx}{2N+1}\right)\right]}\right)\nonumber \\
 & =\mbox{sinc}\left(Bx\right)\!\sum_{m=1}^{\infty}\!\left(-1\right)^{m+1}\!\left[\sum_{n=1}^{\infty}\!\frac{\left(-1\right)^{n}}{\left(2n+1\right)!}\!\left(\frac{B}{2N+1}x\right)^{2n}\!\right]^{m}\label{eq:eps_B(x)_O(2N+1)^(-2)}
\end{align}
which decays like $\mathcal{O}\left(\left(2N+1\right)^{-2}\right)$
within the vicinity of zero and increases away from zero for $\left|x\right|<\pi\left(2B\right)^{-1}\left(2N+1\right)$.
Consequently, maximum absolute error is obtained at $\left|x\right|=\pi\left(2B\right)^{-1}\left(2N+1\right)$
which is 
\begin{align}
\left|\epsilon_{B}\left(\frac{\left(2N+1\right)\pi}{2B}\right)\right| & =\left|\mathrm{sinc}\left(\frac{\left(2N+1\right)\pi}{2}\right)\left(1-\mathrm{sinc}\left(\frac{\pi}{2}\right)^{-1}\right)\right|\nonumber \\
 & =\frac{2}{\left(2N+1\right)\pi}\left(\frac{\pi}{2}-1\right)\label{eq:eps_B(x)_O(2N+1)^(-1)}
\end{align}
and decays in the order of $N$. For $\left(N,B\right)=\left(13,20\times3^{-3}\right)$
and $\left(N_{0},B_{0}\right)=\left(13\times3^{3},20\right)$ we present
$\tilde{f}_{B,N}\left(x\right)$ and $\tilde{f}_{B_{0},N_{0}}\left(x\right)$
in Figure \ref{fig:Approximation-of-sinc_in_cosines-uniform-sampling}.

\begin{figure}
\includegraphics[scale=0.6]{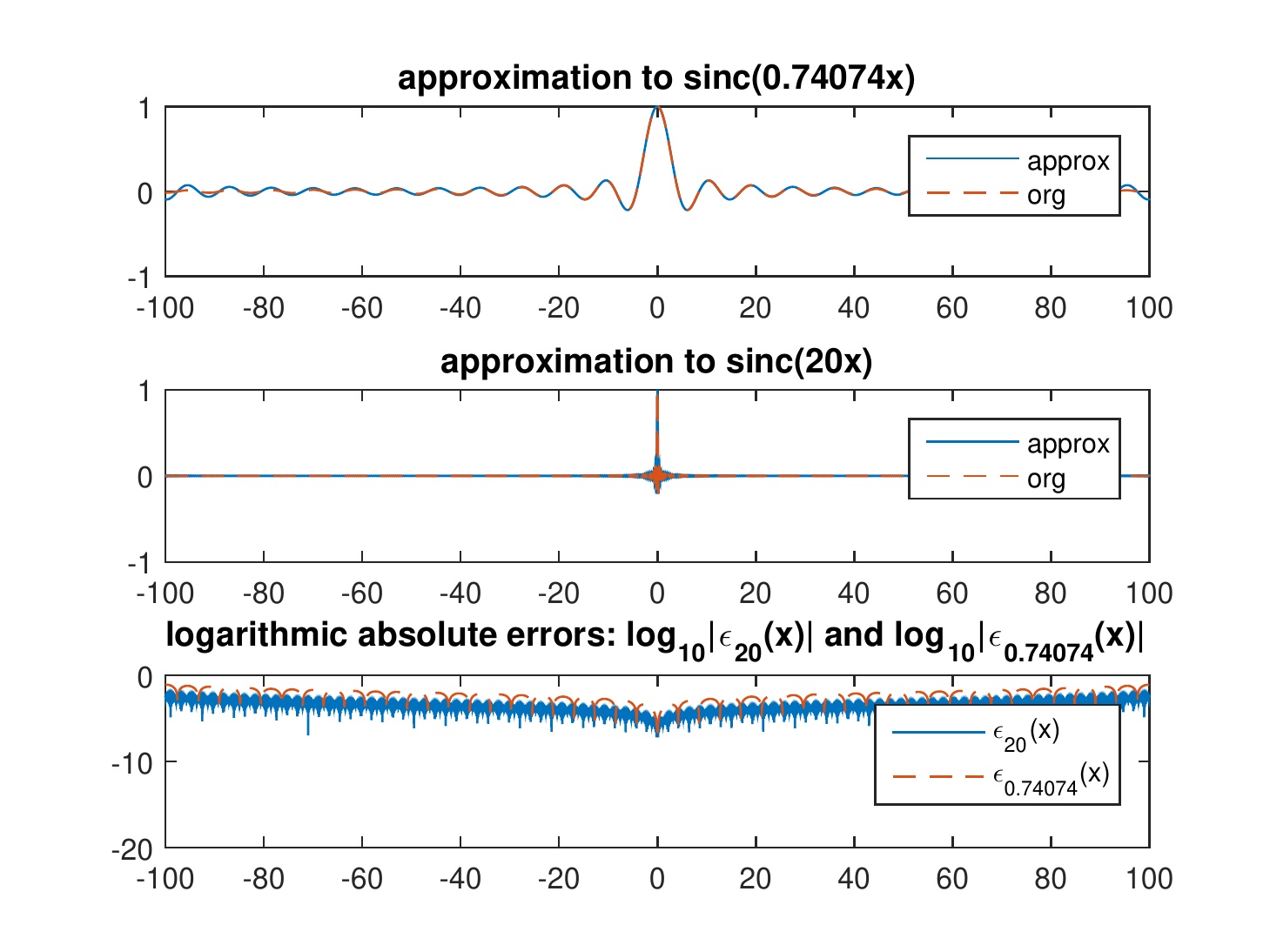}\caption{Approximation of $\mbox{sinc}\left(Bx\right)$ by (\ref{eq:sin(Bx)/sin(x)}).
On top and middle plots, $\mbox{sinc}\left(Bx\right)$ and $\mbox{sinc}\left(B_{0}x\right)$
(red dashed) along with their approximations $\tilde{f}_{B,N}\left(x\right)$
and $\tilde{f}_{B_{0},N_{0}}\left(x\right)$ (solid blue) using uniform
sampling, for $\left(B,N\right)=\left(3^{-3}20,13\right)$ and $\left(B_{0},N_{0}\right)=\left(20,3^{3}13\right)$,
respectively. On the bottom plot, the logarithmic absolute errors
for $B$ (red dashed) and $B_{0}$(blue solid). As derived the error
corresponding to $B_{0}$ is less than that of $B$. \label{fig:Approximation-of-sinc_in_cosines-uniform-sampling}}
\end{figure}

Lemma \ref{lem:sinc_in_cos} can be generalized to any arbitrary function: 
\begin{thm}
\label{thm:sinc_in_f} Let 
\begin{align}
\epsilon_{B}\left(x\right) & =\mathrm{sinc}\left(Bx\right)-f\left(x\right).\label{eq:epsilon_B-2}
\end{align}
Then 
\begin{align}
\left|\mathrm{sinc}\left(3^{n}Bx\right)-\frac{1}{3^{n}}\left[\sum_{l=1}^{\left(3^{n}-1\right)/2}2\cos\left(2Blx\right)+1\right]f\left(x\right)\right| & \le\left|\epsilon_{B}\left(x\right)\right|,\,\mbox{for \ensuremath{n\ge0}.}
\end{align}
\end{thm}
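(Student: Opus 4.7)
The plan is to mirror the proof of Lemma \ref{lem:sinc_in_cos}, replacing the specific cosine quadrature approximation by the generic function $f(x)$. The crucial structural observation is that the approximand $\frac{1}{3^n}\bigl[2\sum_{l=1}^{(3^n-1)/2}\cos(2Blx)+1\bigr]$ is independent of $f$ and depends only on the scaling exponent $n$; it arises from iterating the sinc triple-angle identity \eqref{eq:sinc_scaling_property}. So the proof will isolate the exact factorization
\begin{align*}
\mathrm{sinc}(3^n Bx) &= \frac{1}{3^n}\Bigl[1+2\sum_{l=1}^{(3^n-1)/2}\cos(2Blx)\Bigr]\,\mathrm{sinc}(Bx),
\end{align*}
and then substitute the defining relation $\mathrm{sinc}(Bx)=f(x)+\epsilon_B(x)$ from \eqref{eq:epsilon_B-2}.

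First, I would establish the factorization by induction on $n$, using the scaling identity \eqref{eq:sinc_scaling_property} as the inductive step: multiply the $n$-th level factorization by $\frac{1}{3}[2\cos(2\cdot 3^n Bx)+1]$ to obtain the $(n+1)$-st level. The combinatorial identity that must be verified along the way is
\begin{align*}
[2\cos(2\cdot 3^n Bx)+1]\Bigl[1+2\sum_{l=1}^{(3^n-1)/2}\cos(2Blx)\Bigr] &= 1+2\sum_{l=1}^{(3^{n+1}-1)/2}\cos(2Blx),
\end{align*}
which follows from writing $2\cos\theta+1=e^{i\theta}+1+e^{-i\theta}$ and invoking the balanced ternary representation: every integer with $|k|\le (3^{n+1}-1)/2$ admits a unique expression $k=\sum_{j=0}^{n}\epsilon_j 3^j$ with $\epsilon_j\in\{-1,0,1\}$, so the product of the geometric factors telescopes into the symmetric Dirichlet-type sum.

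Once the factorization is in hand, substituting $\mathrm{sinc}(Bx)=f(x)+\epsilon_B(x)$ yields
\begin{align*}
\mathrm{sinc}(3^n Bx)-\frac{1}{3^n}\Bigl[2\sum_{l=1}^{(3^n-1)/2}\cos(2Blx)+1\Bigr]f(x) &= \frac{1}{3^n}\Bigl[2\sum_{l=1}^{(3^n-1)/2}\cos(2Blx)+1\Bigr]\epsilon_B(x).
\end{align*}
The stated bound then follows by the trivial estimate $\bigl|2\sum_{l=1}^{(3^n-1)/2}\cos(2Blx)+1\bigr|\le 2\cdot (3^n-1)/2+1=3^n$, valid uniformly in $x$.

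The main obstacle, such as it is, lies in cleanly establishing the ternary-expansion identity; once that is noted, everything else is a direct transcription of the argument of Lemma \ref{lem:sinc_in_cos}, with $f(x)$ playing the role previously played by $\sum_m\alpha_m\cos(B\theta_m x)$. In fact, the $n=0$ base case is immediate from \eqref{eq:epsilon_B-2} (the bracketed sum reduces to $1$), so the entire argument collapses to verifying that the scaling identity propagates the pointwise inequality $|\,\mathrm{sinc}(3^n Bx)-\tfrac{1}{3^n}[\cdots]f(x)|\le |\epsilon_B(x)|$ from $n$ to $n+1$, exactly as in the proof of Lemma \ref{lem:sinc_in_cos}.
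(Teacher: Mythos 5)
Your proposal is correct and is essentially the paper's argument: the paper proves Theorem \ref{thm:sinc_in_f} by the same induction on the triple-angle identity (\ref{eq:sinc_scaling_property}) used in Lemma \ref{lem:sinc_in_cos}, and your closed-form factorization $\mathrm{sinc}(3^nBx)=\tfrac{1}{3^n}\bigl[1+2\sum_{l}\cos(2Blx)\bigr]\mathrm{sinc}(Bx)$ together with the bound $\bigl|1+2\sum_{l}\cos(2Blx)\bigr|\le 3^n$ is just that induction carried out in one step. The balanced-ternary identity you isolate is the same telescoping the paper uses implicitly when it asserts $\epsilon_{3^{n+1}B}=\tfrac{1}{3}[2\cos(2\cdot3^nBx)+1]\epsilon_{3^nB}$, so nothing is missing.
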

\begin{proof}
The proof follows from arguments that are similar to those in Lemma
\ref{lem:sinc_in_cos}. 
\end{proof}
\begin{cor}
\label{cor:sinc_in_chirplets} Let 
\begin{align}
\epsilon_{B}\left(x\right) & =\mathrm{sinc}\left(Bx\right)-\sum_{m}\alpha_{m}\exp\left(-\gamma_{m}x^{2}\right),\label{eq:epsilon_B-1}
\end{align}
such that $\mathrm{Re}\left\{ \gamma_{m}\right\} >0$. Then 
\begin{align}
\left|\mathrm{sinc}\left(3^{n}Bx\right)-\frac{1}{3^{n}}\sum_{m}\sum_{l=-\left(3^{n}-1\right)/2}^{\left(3^{n}-1\right)/2}a_{m,l}g_{m,l}\left(x\right)\right| & \le\left|\epsilon_{B}\left(x\right)\right|,\,\mbox{for \ensuremath{n\ge0}.}
\end{align}
where $a_{m,l}=\alpha_{m}\exp\left(\mathrm{i}\frac{\left(Bl\right)^{2}}{\mathrm{Im}\left\{ \gamma_{m}\right\} }\right)$
and 
\begin{align}
g_{m,l}\left(x\right) & =\exp\left(-\mathrm{Re}\left\{ \gamma_{m}\right\} x^{2}\right)\nonumber \\
 & \qquad\times\exp\left(-\mathrm{i}\mathrm{Im}\left\{ \gamma_{m}\right\} \left(x-\frac{Bl}{\mathrm{Im}\left\{ \gamma_{m}\right\} }\right)^{2}\right).
\end{align}
\end{cor}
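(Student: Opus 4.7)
The plan is to recognize that this corollary is an immediate instance of Theorem \ref{thm:sinc_in_f} applied to the specific function $f(x)=\sum_m \alpha_m \exp(-\gamma_m x^2)$, followed by a completion-of-the-square manipulation that re-groups the resulting modulated Gaussians into chirplet form. So the only content beyond Theorem \ref{thm:sinc_in_f} is algebraic bookkeeping.

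First I would invoke Theorem \ref{thm:sinc_in_f} directly with the chosen $f$: since $\epsilon_B(x) = \mathrm{sinc}(Bx) - f(x)$ matches hypothesis \eqref{eq:epsilon_B-2}, the theorem gives
\[
\left|\mathrm{sinc}(3^n B x) - \frac{1}{3^n}\Bigl[\sum_{l=1}^{(3^n-1)/2} 2\cos(2Blx) + 1\Bigr]\sum_m \alpha_m \exp(-\gamma_m x^2)\right| \le |\epsilon_B(x)|.
\]
Next I would rewrite the cosine sum as a sum of complex exponentials via Euler's identity,
\[
\sum_{l=1}^{(3^n-1)/2} 2\cos(2Blx) + 1 = \sum_{l=-(3^n-1)/2}^{(3^n-1)/2} e^{\mathrm{i} 2Bl x},
\]
so that the approximant becomes $3^{-n}\sum_{m}\sum_{l}\alpha_m\,e^{\mathrm{i}2Blx}\exp(-\gamma_m x^2)$.

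The bulk of the work is then to verify the algebraic identity $\alpha_m e^{\mathrm{i}2Blx}\exp(-\gamma_m x^2) = a_{m,l}\,g_{m,l}(x)$. Writing $\gamma_m = r_m + \mathrm{i} s_m$ with $r_m=\mathrm{Re}\{\gamma_m\}>0$, I would split the Gaussian as $\exp(-r_m x^2)\exp(-\mathrm{i} s_m x^2)$ and complete the square in the purely imaginary part,
\[
-\mathrm{i} s_m x^2 + \mathrm{i} 2Bl x = -\mathrm{i} s_m\Bigl(x - \tfrac{Bl}{s_m}\Bigr)^2 + \mathrm{i}\tfrac{(Bl)^2}{s_m}.
\]
Pulling the $x$-independent factor $\exp(\mathrm{i}(Bl)^2/s_m)$ out of the summand and folding it into $\alpha_m$ yields exactly $a_{m,l}$, while the remaining $x$-dependent factor $\exp(-r_m x^2)\exp(-\mathrm{i} s_m(x-Bl/s_m)^2)$ is precisely $g_{m,l}(x)$.

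There is no real obstacle here; the only point requiring a small comment is the case $s_m = 0$, where the factorization into $a_{m,l}$ and $g_{m,l}$ is formally singular even though the product $a_{m,l}g_{m,l}(x)$ remains well-defined and equal to $\alpha_m e^{\mathrm{i}2Blx}\exp(-\gamma_m x^2)$. One can either treat this as a limiting case or restrict the statement to $s_m\neq 0$; in either reading, the bound inherited from Theorem \ref{thm:sinc_in_f} is unchanged, which completes the proof.
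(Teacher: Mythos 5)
Your proposal is correct and follows essentially the same route as the paper's own proof: invoke Theorem \ref{thm:sinc_in_f} with $f(x)=\sum_m\alpha_m\exp(-\gamma_m x^2)$, convert the cosine sum to complex exponentials, and complete the square in the imaginary part of the exponent to recover $a_{m,l}g_{m,l}(x)$. Your added remark on the degenerate case $\mathrm{Im}\{\gamma_m\}=0$ is a small point the paper does not address, but it does not change the argument.
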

\begin{proof}
This is a direct consequence of Theorem \ref{thm:sinc_in_f} and identities
\begin{align}
\left[\sum_{l=1}^{\left(3^{n}-1\right)/2}2\cos\left(2Blx\right)+1\right]\exp\left(-\gamma_{m}x^{2}\right) & =\sum_{l=-\left(3^{n+1}-1\right)/2}^{\left(3^{n+1}-1\right)/2}\exp\left(-\gamma_{m}x^{2}+\mathrm{i}2Blx\right)
\end{align}

\begin{align}
 & \exp\left(-\gamma_{m}x^{2}+\mathrm{i}2Blx\right)\nonumber \\
 & =\exp\left(-\mathrm{Re}\left\{ \gamma_{m}\right\} x^{2}\right)\exp\left(-\mathrm{i}\mathrm{Im}\left\{ \gamma_{m}\right\} \left(x-\frac{Bl}{\mathrm{Im}\left\{ \gamma_{m}\right\} }\right)^{2}\right)\exp\left(\mathrm{i}\frac{\left(Bl\right)^{2}}{\mathrm{Im}\left\{ \gamma_{m}\right\} }\right)
\end{align}
\end{proof}
Corollary \ref{cor:sinc_in_chirplets} says that the sinc function
can be approximated as a sum of shifted, Gaussian tapered chirps.
One can determine $\left(\alpha_{m},\gamma_{m}\right)$ using the
method in Appendix \ref{sec:Generalization-of-Pad=00003D0000E9} by
solving the appropriate moment problem (see Step 3 of Algorithm \ref{alg:sinc-in-chirplets}).
This type of approximations of $\mbox{sinc}\left(x\right)$ can be
used to construct a multiresolution scheme for band-limited function
as an alternative to existing multiscale approaches. It is important
to point out that unlike chirplet decomposition methods presented
in \cite{mann1995chirplet,candes2002multiscale}, the moment problem
provides an explicit solution for $\left(\alpha_{m},\gamma_{m}\right)$
while coupling the real and imaginary part of the complex Gaussian
parameters $\gamma_{m}$. Algorithm \ref{alg:sinc-in-chirplets} outlines
approximating a sinc of arbitrary bandwidth as a sum of scaled cosines
based on the moment problem and Corollary \ref{cor:sinc_in_chirplets}.
A corresponding example is presented in Figure \ref{fig:Approximation-of-sinc_in_chirplets}.

\begin{algorithm*}
Given $0\le B_{0}\in\mathbb{R}$ 
\begin{enumerate}
\item Compute $n=\log_{3}\left\lfloor B_{0}\right\rfloor +1$ 
\item Set $B=B_{0}3^{-n}$. 
\item Solve the moment problem 
\begin{align*}
h_{n} & =B^{2n}\frac{n!}{\left(2n+1\right)!}=\sum_{m}\alpha_{m}\gamma_{m}^{n}
\end{align*}
for $\left(\alpha_{m},\gamma_{m}\right)$ using the method of \cite{YF2014}\textbf{
}(see Appendix \ref{sec:Generalization-of-Pad=00003D0000E9}) 
\item Form the approximation 
\begin{align*}
\mbox{sinc}\left(B_{0}x\right) & \approx\frac{1}{3^{n}}\sum_{m}\sum_{l=-\left(3^{n}-1\right)/2}^{\left(3^{n}-1\right)/2}\hspace{-0.5cm}\alpha_{m}\left[\begin{array}{l}
\exp\left(\mathrm{i}\frac{\left(Bl\right)^{2}}{\mathrm{Im}\left\{ \gamma_{m}\right\} }\right)\exp\left(-\mathrm{Re}\left\{ \gamma_{m}\right\} x^{2}\right)\\
\times\exp\left(\begin{array}{l}
-\mathrm{i}\,\mathrm{Im}\left\{ \gamma_{m}\right\} \left(x-\frac{Bl}{\mathrm{Im}\left\{ \gamma_{m}\right\} }\right)^{2}\end{array}\right)
\end{array}\right]
\end{align*}
\end{enumerate}
\caption{\label{alg:sinc-in-chirplets}Representation of $\mbox{sinc}\left(B_{0}x\right)$
as a sum of chirplets}
\end{algorithm*}

\begin{figure}
\includegraphics[scale=0.6]{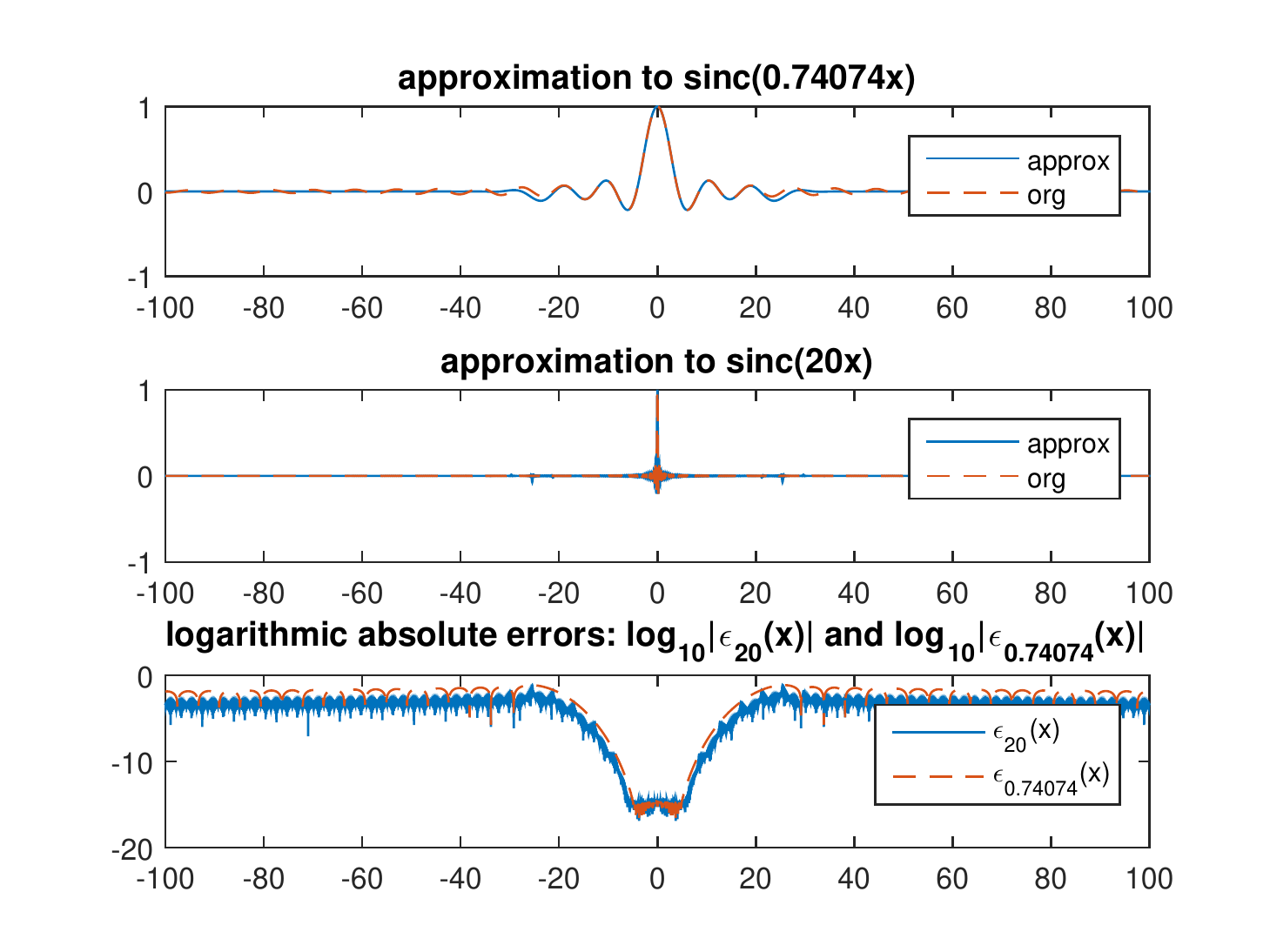}\caption{Approximation of $\mbox{sinc}\left(Bx\right)$ as a sum of chirplets
(see Corollary \ref{cor:sinc_in_chirplets}) using Algorithm \ref{alg:sinc-in-chirplets}.
On top and middle plots, $\mbox{sinc}\left(B\,x\right)$ and $\mbox{sinc}\left(B_{0}x\right)$
(red dashed) along with their approximations (solid blue), for $B=3^{-3}20$
and $B_{0}=20$, respectively. On the bottom plot, the logarithmic
absolute errors for $B$ (red dashed) and $B_{0}$(blue solid). As
derived the error corresponding to $B_{0}$ is less than that of $B$.
\label{fig:Approximation-of-sinc_in_chirplets}}
\end{figure}

\section{$T$-limited functions\label{sec:Kernel-for-T-bandlimited-functions}}

In this section we consider triangle and tetrahedral limited functions
both of whom are referred to as T-limited functions. The distinction
of two class of T-limited function should be clear from the dimensions
of their variables.

\subsection{Triangle-limited functions. }

We say a function $f\left(x,y\right)$, $\left(x,y\right)\in\mathbb{R}^{2}$
is triangle limited if its Fourier transform $\hat{f}\left(k_{x},k_{y}\right)$
is supported within a triangular region $T\subset\mathbb{R}^{2}$.
Without loss of generality let $T\subset\mathbb{R}^{2}$ be parametrize
by 
\begin{align}
T & =\left\{ \left(k_{x},k_{y}\right)\,|\,0\le k_{x}\le\Delta p,\,\left|k_{y}\right|\le k_{x}s,\right\} \label{eq:T}
\end{align}
for some $\Delta p,s\in\mathbb{R}^{+}$. Define $K_{\triangle}\left(x,y\right)$
to be 
\begin{align}
 & K_{\triangle}\left(x,y\right)=\int_{0}^{\Delta p}\int_{-k_{x}s}^{k_{x}s}\mathrm{e}^{\mathrm{i}2\pi\left(k_{x}x+k_{y}y\right)}dk_{y}dk_{x}\nonumber \\
 & =\hspace{-0.1cm}\frac{\Delta p}{2\pi y}\hspace{-0.1cm}\left[\hspace{-0.2cm}\begin{array}{l}
\mbox{cosinc}\left(2\pi\Delta p\left(x+sy\right)\right)-\mbox{cosinc}\left(2\pi\Delta p\left(x-sy\right)\right)\\
-\mathrm{i}\left[\mbox{sinc}\left(2\pi\Delta p\left(x+sy\right)\right)-\mbox{sinc}\left(2\pi\Delta p\left(x-sy\right)\right)\right]
\end{array}\hspace{-0.2cm}\right]\label{eq:K-triangle}
\end{align}
We refer to $K_{\triangle}$ as the kernel for $T$-limited functions.
Let us consider the kernel 
\begin{align*}
K\left(x,y\right) & =K_{\triangle}\left(x,y\right)\exp\left(-\mathrm{i}2\pi\Delta p\frac{2}{3}x\right)
\end{align*}
for $s=\sqrt{3}^{-1}$. Fourier transform of $K\left(x,y\right)$
is the characteristic function over an equilateral triangle whose
center of mass is at the origin. For $\Delta p=75$, we present the
kernel $K\left(x,y\right)$ and its Fourier transform in Figure \ref{fig:K-triangle}.
Next we will give two ways to construct quadratures for discrete Fourier
approximation of $K_{\triangle}\left(x,y\right)$ which can be generalized
to construct quadratures for simplexes in higher dimensions, too.

\begin{figure}
\center\includegraphics[scale=0.5]{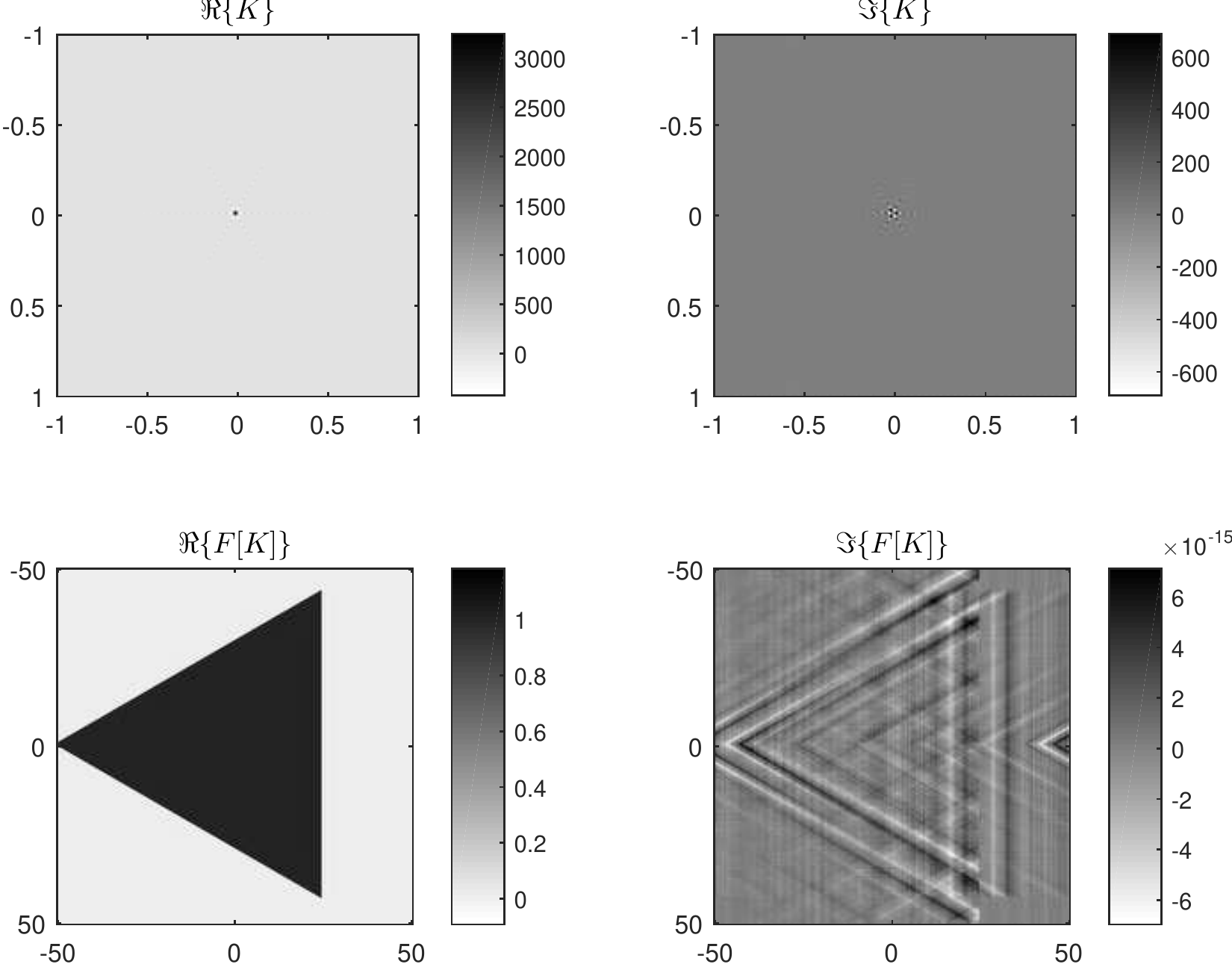}\caption{Real, imaginary parts of $K\left(x,y\right)=K_{\triangle}\left(x,y\right)\exp\left(-\mathrm{i}2\pi\Delta p\frac{3}{2}x\right)$
and its Fourier transform for $\Delta p=75$ and $s=\sqrt{3}^{-1}$.
The horizontal axis are $x$ and $k_{x}$ axis and the vertical axis
are $y$ and $k_{y}$, appropriately. \label{fig:K-triangle}}
\end{figure}

\begin{prop}
\label{prop:K_m_2_K_=00003D00007Bm-1=00003D00007D} $K_{\triangle}\left(x,y\right)$
satisfies the following scaling property 
\begin{align}
K_{\triangle}\left(x,y\right) & =\frac{1}{4}\left[\begin{array}{l}
K_{\triangle}\left(\frac{x}{2},\frac{y}{2}\right)\left(1+2\mathrm{e}^{\mathrm{i}\pi\Delta px}\cos\left(\pi\Delta psy\right)\right)\\
+\mathrm{e}^{\mathrm{i}2\pi\Delta px}K_{\triangle}\left(-\frac{x}{2},\frac{y}{2}\right)
\end{array}\right]\label{eq:scaling_prop-4-K_triangle}
\end{align}
Consequently, let $K_{\triangle,m}\left(x,y\right)=K_{\triangle}\left(2^{m}x,2^{m}y\right)$.
Then 
\begin{align}
K_{\triangle,m}\left(x,y\right) & =\frac{1}{4}\left[\begin{array}{l}
K_{\triangle,m-1}\left(x,y\right)\left(1+2\mathrm{e}^{\mathrm{i}\pi2^{m}\Delta px}\cos\left(\pi2^{m}\Delta psy\right)\right)\\
+\mathrm{e}^{\mathrm{i}\pi2^{m+1}\Delta px}K_{\triangle,m-1}\left(-x,y\right)
\end{array}\right]\label{eq:K_m_2_K_=00003D00007Bm-1=00003D00007D}
\end{align}
\end{prop}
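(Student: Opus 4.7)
The plan is to prove (\ref{eq:scaling_prop-4-K_triangle}) by a direct geometric decomposition of the integration region $T$ in the defining integral (\ref{eq:K-triangle}). Subdividing $T$, whose vertices are $(0,0)$, $(\Delta p,\Delta p s)$, $(\Delta p,-\Delta p s)$, by joining the midpoints of its sides $(\Delta p/2,\pm\Delta p s/2)$ and $(\Delta p,0)$ produces four congruent half-scale sub-triangles. Three of them, call them $T_1,T_2,T_3$, are translates of the origin-anchored copy
\[
T_1 = \bigl\{(k_x,k_y) : 0\le k_x\le \Delta p/2,\ |k_y|\le k_x s\bigr\},
\]
obtained by translating $T_1$ by $(0,0)$, $(\Delta p/2,\Delta p s/2)$, $(\Delta p/2,-\Delta p s/2)$ respectively. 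The fourth, $T_4$, is the central sub-triangle with apex at $(\Delta p,0)$ pointing in the opposite horizontal direction; it can be written as $T_4=(\Delta p,0)+\sigma T_1$, where $\sigma(k_x,k_y)=(-k_x,k_y)$.

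The computation then reduces to four affine changes of variables. For $T_1$, the rescaling $k\mapsto k/2$ has Jacobian $1/4$ and maps $T_1$ onto $T$, giving
\[
\int_{T_1} e^{\mathrm{i}2\pi(k_x x + k_y y)}\,dk_x\,dk_y = \tfrac{1}{4} K_\triangle(x/2,y/2).
\]
The translates $T_2,T_3$ contribute the same expression multiplied by the phases $e^{\mathrm{i}\pi\Delta p(x+sy)}$ and $e^{\mathrm{i}\pi\Delta p(x-sy)}$, respectively, which together sum to $2 e^{\mathrm{i}\pi\Delta p x}\cos(\pi\Delta p s y)$. The reflected, translated copy $T_4$ contributes $\tfrac{1}{4} e^{\mathrm{i}2\pi\Delta p x} K_\triangle(-x/2,y/2)$, because first translating by $-(\Delta p,0)$ produces the phase $e^{\mathrm{i}2\pi\Delta p x}$ and moves $T_4$ to $\sigma T_1$, and then the substitution $(k_x,k_y)\mapsto(-u_x/2,u_y/2)$ maps $\sigma T_1$ onto $T$ with Jacobian $1/4$, flipping the sign of the $x$-argument in the exponential and producing $K_\triangle(-x/2,y/2)$. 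Adding the four contributions gives exactly (\ref{eq:scaling_prop-4-K_triangle}).

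The second identity (\ref{eq:K_m_2_K_=00003D00007Bm-1=00003D00007D}) then follows immediately by substituting $(x,y)\mapsto(2^m x,2^m y)$ in (\ref{eq:scaling_prop-4-K_triangle}) and using $K_\triangle(2^{m-1}x,2^{m-1}y)=K_{\triangle,m-1}(x,y)$ together with $K_\triangle(-2^{m-1}x,2^{m-1}y)=K_{\triangle,m-1}(-x,y)$. I do not anticipate a serious obstacle: the midpoint subdivision of a triangle into four similar pieces is classical, and each piecewise integral is a clean affine change of variables. The only point requiring care is the bookkeeping for the reflected sub-triangle $T_4$, where I must make sure the horizontal flip $\sigma$ is reflected in the sign of the $x$-argument of $K_\triangle$ (and only that argument), so that the resulting term is $K_\triangle(-x/2,y/2)$ rather than $K_\triangle(x/2,-y/2)$ or an unwanted conjugate.
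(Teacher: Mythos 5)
Your proof is correct and follows essentially the same route as the paper: the paper's proof splits the iterated integral over $0\le k_{x}\le\Delta p$, $\left|k_{y}\right|\le k_{x}s$ into exactly the four sub-regions you describe (the origin-anchored half-scale copy, its two translates by $\left(\Delta p/2,\pm\Delta ps/2\right)$, and the reflected translated central copy with apex at $\left(\Delta p,0\right)$), and the affine changes of variables you carry out are the same self-similarity argument, just written out geometrically rather than as limits of integration. The only nit is the phrasing ``$k\mapsto k/2$ maps $T_{1}$ onto $T$'' (it is the substitution $k=u/2$, $u\in T$, that parametrizes $T_{1}$), but the displayed identity and the final bookkeeping for the reflected piece are right.
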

\begin{proof}
This is a direct consequence self similarity of isosceles triangle
which is used to decompose integral representation of $K_{\triangle}\left(x,y\right)$
using the identity 
\begin{multline}
\int_{0}^{\Delta p}\int_{-k_{x}s}^{k_{x}s}dk_{y}dk_{x}\\
=\left[\begin{array}{l}
\int_{0}^{\Delta p/2}\int_{-k_{x}s}^{k_{x}s}+\int_{\Delta p/2}^{\Delta p}\int_{-\left(\Delta p-k_{x}\right)s}^{\left(\Delta p-k_{x}\right)s}\\
+\int_{\Delta p/2}^{\Delta p}\int_{\left(\Delta p-k_{x}\right)s}^{k_{x}s}+\int_{\Delta p/2}^{\Delta p}\int_{-k_{x}s}^{\left(k_{x}-\Delta p\right)s}
\end{array}\right]dk_{y}dk_{x}
\end{multline}
and $\frac{1}{4}K_{\triangle}\left(\frac{x}{2},\frac{y}{2}\right)=\int_{0}^{\Delta p/2}\int_{-k_{x}s}^{k_{x}s}\mathrm{e}^{\mathrm{i}2\pi\left(k_{x}x+k_{y}y\right)}dk_{y}dk_{x}$ 
\end{proof}
\begin{cor}
\label{cor:K_=00003D00007Bm-1=00003D00007D_2_K_m} Let $K_{\triangle,m}\left(x,y\right)=K_{\triangle}\left(2^{m}x,2^{m}y\right)$.
Then 
\begin{align}
K_{\triangle,m-1}\left(x,y\right) & =\left[\frac{\left(\begin{array}{l}
K_{\triangle,m}\left(x,y\right)\left(1+2\mathrm{e}^{-\mathrm{i}\pi2^{m}\Delta px}\cos\left(\pi2^{m}\Delta psy\right)\right)\\
-\mathrm{e}^{\mathrm{i}\pi2^{m+1}\Delta px}K_{\triangle,m}\left(-x,y\right)
\end{array}\right)}{\cos^{2}\left(\pi2^{m}\Delta psy\right)+\cos\left(\pi2^{m}\Delta px\right)\cos\left(\pi2^{m}\Delta psy\right)}\right]\label{eq:K_=00003D00007Bm-1=00003D00007D_2_K_m}
\end{align}
\end{cor}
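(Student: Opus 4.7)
The plan is to invert the scaling relation of Proposition \ref{prop:K_m_2_K_=00003D00007Bm-1=00003D00007D} by reading it as a linear system in the two unknowns $K_{\triangle,m-1}(x,y)$ and $K_{\triangle,m-1}(-x,y)$. First, I would write \eqref{eq:K_m_2_K_=00003D00007Bm-1=00003D00007D} at the point $(x,y)$, and then at the point $(-x,y)$. Because the triangle kernel is not even in $x$, the second substitution produces a genuinely independent equation; together they form a $2\times 2$ system whose right-hand sides are $4K_{\triangle,m}(x,y)$ and $4K_{\triangle,m}(-x,y)$, with coefficients built from $e^{\pm\mathrm{i}\pi 2^{m}\Delta p x}$ and $\cos(\pi 2^{m}\Delta p s y)$ (note that the cosine factor is invariant under $x\mapsto -x$, which is the key structural simplification).

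To keep the algebra compact I would set $\alpha=e^{\mathrm{i}\pi 2^{m}\Delta p x}$ and $\beta=\cos(\pi 2^{m}\Delta p s y)$, so the system reads
\begin{align*}
4\,K_{\triangle,m}(x,y) &= K_{\triangle,m-1}(x,y)\,(1+2\alpha\beta) + \alpha^{2}\,K_{\triangle,m-1}(-x,y),\\
4\,K_{\triangle,m}(-x,y) &= K_{\triangle,m-1}(-x,y)\,(1+2\alpha^{-1}\beta) + \alpha^{-2}\,K_{\triangle,m-1}(x,y).
\end{align*}
Then I would eliminate $K_{\triangle,m-1}(-x,y)$ by multiplying the second equation by $\alpha^{2}$ and subtracting a suitable multiple of the first. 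The coefficient of $K_{\triangle,m-1}(x,y)$ that emerges is $(1+2\alpha\beta)(1+2\alpha^{-1}\beta)-1 = 2\beta(\alpha+\alpha^{-1}) + 4\beta^{2}$, and using $\alpha+\alpha^{-1}=2\cos(\pi 2^{m}\Delta p x)$ this collapses to $4\bigl[\cos^{2}(\pi 2^{m}\Delta p s y)+\cos(\pi 2^{m}\Delta p x)\cos(\pi 2^{m}\Delta p s y)\bigr]$, which is exactly the denominator appearing in \eqref{eq:K_=00003D00007Bm-1=00003D00007D_2_K_m}. Rearranging then yields the claimed formula.

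I do not expect any serious obstacles: the calculation is a deterministic $2\times 2$ inversion and the trigonometric simplification is forced by the structure of the proposition. The only subtlety worth flagging is that the identity holds wherever the denominator does not vanish, i.e.\ off the zero set of $\cos(\pi 2^{m}\Delta p s y)\bigl[\cos(\pi 2^{m}\Delta p s y)+\cos(\pi 2^{m}\Delta p x)\bigr]$; since both sides of \eqref{eq:K_=00003D00007Bm-1=00003D00007D_2_K_m} are entire in $(x,y)$, the formula extends to the removable singularities by continuity, so no further argument is needed.
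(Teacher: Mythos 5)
Your proposal is correct and follows essentially the same route as the paper: the paper's proof likewise evaluates the scaling relation of Proposition \ref{prop:K_m_2_K_=00003D00007Bm-1=00003D00007D} at $(x,y)$ and $(-x,y)$ and takes the linear combination $K_{\triangle,m}(x,y)\,\mathrm{e}^{-\mathrm{i}\pi2^{m+1}\Delta px}-K_{\triangle,m}(-x,y)\left(1+2\mathrm{e}^{-\mathrm{i}\pi2^{m}\Delta px}\cos\left(\pi2^{m}\Delta psy\right)\right)^{-1}$ to eliminate $K_{\triangle,m-1}(-x,y)$, which is your elimination up to an overall factor, with the same identity $\left|1+2\alpha\beta\right|^{2}-1=4\beta^{2}+2\beta\left(\alpha+\alpha^{-1}\right)$ producing the denominator. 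Your remark about extending across the zero set of the denominator by continuity is a small point the paper leaves implicit, but it does not change the argument.
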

\begin{proof}
Considering $K_{\triangle,m}\left(x,y\right)$ and $K_{\triangle,m}\left(-x,y\right)$,
computation of 
\begin{multline}
\frac{K_{\triangle,m}\left(x,y\right)}{\mathrm{e}^{\mathrm{i}\pi2^{m+1}\Delta px}}-\frac{K_{\triangle,m}\left(-x,y\right)}{\left(1+2\mathrm{e}^{-\mathrm{i}\pi2^{m}\Delta px}\cos\left(\pi2^{m}\Delta psy\right)\right)}\\
=\frac{1}{4}\left[\begin{array}{l}
K_{\triangle,m-1}\left(x,y\right)\\
\times\left[\frac{\left|1+2\mathrm{e}^{\mathrm{i}\pi2^{m}\Delta px}\cos\left(\pi2^{m}\Delta psy\right)\right|^{2}-1}{\mathrm{e}^{\mathrm{i}\pi2^{m+1}\Delta px}\left(1+2\mathrm{e}^{-\mathrm{i}\pi2^{m}\Delta px}\cos\left(\pi2^{m}\Delta psy\right)\right)}\right]
\end{array}\right]
\end{multline}
leads to (\ref{eq:K_=00003D00007Bm-1=00003D00007D_2_K_m}). 
\end{proof}

\subsubsection{Discrete Fourier approximation of $K_{\triangle}\left(x,y\right)$
\label{subsec:Discrete-Fourier-approximation-T-kernel}}

For a discrete representation of the kernel let us consider a bounded
region $\left(x,y\right)\in S$. $T$-limited projection operator
$P_{\triangle}$and $T$-limited projection $f_{\triangle}$ of $f$
restricted to the region $S$ are defined by 
\begin{align}
P_{\triangle}\left[f\right]\left(x,y\right) & =f_{\triangle}\left(x,y\right)\nonumber \\
 & =\int_{S}f\left(x',y'\right)K_{\triangle}\left(x-x',y-y'\right)dx'dy'
\end{align}
Noticing that argument of $K_{\triangle}$ ranges over 
\begin{align}
S+S & =\left\{ \left(x,y\right)|\left(x,y\right)=\left(x_{1},y_{1}\right)+\left(x_{2},y_{2}\right),\,\left(x_{n},y_{n}\right)_{n=1,2}\in S\right\} ,
\end{align}
in order to compute $f_{\triangle}\left(x,y\right)$ accurately over
the region $S$, one should have an accurate representation of $K_{\triangle}$
inside $S+S$ .

Recalling (\ref{eq:K-triangle}),

\begin{align}
K_{\triangle}\left(x,y\right) & =\int_{0}^{\Delta p}\int_{-k_{x}s}^{k_{x}s}\mathrm{e}^{\mathrm{i}2\pi\left(k_{x}x+k_{y}y\right)}\,dk_{y}dk_{x}\nonumber \\
 & =\frac{\Delta p}{2\pi\mathrm{i}}s\int_{-1}^{1}\partial_{x}\left[\int_{0}^{1}\mathrm{e}^{\mathrm{i}2\pi\Delta p\left(x+sk_{y}y\right)k_{x}}dk_{x}\right]\,dk_{y}\label{eq:K-step1}\\
 & \approx\frac{\Delta p}{2\pi\mathrm{i}}s\int_{-1}^{1}\partial_{x}\left[\sum_{m=1}^{M}\alpha_{m}\mathrm{e}^{\mathrm{i}2\pi\Delta p\left(x+sk_{y}y\right)k_{x}\left[m\right]}\right]\,dk_{y}\nonumber \\
 & =\Delta p^{2}\,s\sum_{m=1}^{M}\left(\begin{array}{l}
\alpha_{m}k_{x}\left[m\right]\mathrm{e}^{\mathrm{i}2\pi\Delta p\,x\,k_{x}\left[m\right]}\\
\times\left[\int_{-1}^{1}\mathrm{e}^{\mathrm{i}2\pi\Delta p\,s\,k_{x}\left[m\right]\,y\,k_{y}}\,dk_{y}\right]
\end{array}\right)\label{eq:K-step2}\\
 & \approx2\Delta p^{2}\,s\sum_{m=1}^{M}\left(\begin{array}{l}
\alpha_{m}k_{x}\left[m\right]\mathrm{e}^{\mathrm{i}2\pi\Delta p\,x\,k_{x}\left[m\right]}\mathrm{e}^{-\mathrm{i}2\pi2\Delta p\,s\,k_{x}\left[m\right]\,y}\\
\times\left[\sum_{n=1}^{N\left(m\right)}\beta_{m,n}\mathrm{e}^{\mathrm{i}2\pi\Delta p\,s\,k_{x}\left[m\right]\,y\,2k_{y}\left[m,n\right]}\right]
\end{array}\right)\nonumber 
\end{align}
where $\left(\alpha_{m},k_{x}\left[m\right]\right)$ and $\left(\beta_{m,n},k_{y}\left[m,n\right]\right)$
are quadratures for approximating sinc as a sum of cosines (See (\ref{eq:sinc_in_cos_approx})
and consider $B$ equal to $2\pi\Delta p\left(X+sY\right)$ and $4\pi\Delta p\,s\,k_{x}\left[m\right]Y$,
respectively, where $X=\max_{\left(x,y\right)\in S+S}\left|x\right|$
and $Y=\max_{\left(x,y\right)\in S+S}\left|y\right|$.) for a desired
accuracy, which determines the accuracy of approximating $K_{\triangle}$
inside $S+S$. 
\begin{cor}
\label{cor:scaling-property-T-limited-error} Let $\tilde{K}_{\triangle,m}\left(x,y\right)$
be an approximation of $K_{\triangle,m}\left(x,y\right)$ and $\epsilon_{\triangle,m}\left(x,y\right)=K_{\triangle,m}\left(x,y\right)-\tilde{K}_{\triangle,m}\left(x,y\right)$
be the associated error. Then 
\begin{align}
\left|\epsilon_{\triangle,m}\left(x,y\right)\right| & \le\frac{1}{4}\left[3\left|\epsilon_{\triangle,m-n}\left(x,y\right)\right|+\left|\epsilon_{\triangle,m-n}\left(-x,y\right)\right|\right].
\end{align}
Furthermore, if $\left|\epsilon_{\triangle,m_{0}}\left(x,y\right)\right|=\left|\epsilon_{\triangle,m_{0}}\left(-x,y\right)\right|$
for an $m_{0}\in\mathbb{Z}$, then $\left|\epsilon_{\triangle,m}\left(x,y\right)\right|\le\left|\epsilon_{\triangle,m_{0}}\left(x,y\right)\right|$
for all $m\ge m_{0}$. Consequently, 
\begin{align}
\left|K_{\triangle}\left(x,y\right)-\tilde{K}_{\triangle,m}\left(2^{-m}x,2^{-m}y\right)\right| & =\left|\epsilon_{\triangle,m}\left(2^{-m}x,2^{-m}y\right)\right|\nonumber \\
 & \le\left|\epsilon_{\triangle}\left(2^{-m}x,2^{-m}y\right)\right|
\end{align}
\end{cor}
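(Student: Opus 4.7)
The plan is to assume (or establish by construction) that the approximation $\tilde K_{\triangle,m}$ is built recursively by substituting $\tilde K_{\triangle,m-1}$ for $K_{\triangle,m-1}$ inside the self-similarity identity (\ref{eq:K_m_2_K_=00003D00007Bm-1=00003D00007D}) from Proposition \ref{prop:K_m_2_K_=00003D00007Bm-1=00003D00007D}. Once both $K_{\triangle,m}$ and $\tilde K_{\triangle,m}$ obey the same linear four-term recursion, subtracting one from the other transfers that recursion verbatim to the error $\epsilon_{\triangle,m}$, giving
\begin{align*}
\epsilon_{\triangle,m}(x,y) = \tfrac{1}{4}\bigl[\epsilon_{\triangle,m-1}(x,y)(1+2\mathrm{e}^{\mathrm{i}\pi 2^{m}\Delta p\,x}\cos(\pi 2^{m}\Delta p\,s\,y)) + \mathrm{e}^{\mathrm{i}\pi 2^{m+1}\Delta p\,x}\,\epsilon_{\triangle,m-1}(-x,y)\bigr].
\end{align*}

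Taking moduli and using the elementary bounds $|1+2\mathrm{e}^{\mathrm{i}\theta}\cos\phi|\le 3$ and $|\mathrm{e}^{\mathrm{i}\theta}|=1$ produces the one-step estimate $|\epsilon_{\triangle,m}(x,y)|\le \tfrac{1}{4}[3|\epsilon_{\triangle,m-1}(x,y)|+|\epsilon_{\triangle,m-1}(-x,y)|]$. The stated inequality at general step $n$ is then obtained by iterating this one-step bound $n$ times on the pair $(|\epsilon_{\triangle,m-k}(x,y)|,|\epsilon_{\triangle,m-k}(-x,y)|)$, so that the coefficient identity $3+1=4$ keeps the weights in the form $\tfrac{1}{4}[3\cdot + \cdot]$ at each level.

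For the reflection-symmetry clause, suppose $|\epsilon_{\triangle,m_{0}}(x,y)|=|\epsilon_{\triangle,m_{0}}(-x,y)|=E$. Applying the one-step inequality once gives $|\epsilon_{\triangle,m_{0}+1}(x,y)|\le \tfrac{1}{4}(3E+E)=E$, and repeating the argument with $x\mapsto -x$ yields $|\epsilon_{\triangle,m_{0}+1}(-x,y)|\le E$. A straightforward induction on $m$ then propagates the upper bound $E$ to both $|\epsilon_{\triangle,m}(x,y)|$ and $|\epsilon_{\triangle,m}(-x,y)|$ for every $m\ge m_{0}$. The consequence follows by specializing to $m_{0}=0$, substituting $(2^{-m}x,2^{-m}y)$ into this monotone bound, and using the definition $K_{\triangle,m}(2^{-m}x,2^{-m}y)=K_{\triangle}(x,y)$ so that $|K_{\triangle}(x,y)-\tilde K_{\triangle,m}(2^{-m}x,2^{-m}y)|=|\epsilon_{\triangle,m}(2^{-m}x,2^{-m}y)|\le |\epsilon_{\triangle}(2^{-m}x,2^{-m}y)|$.

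The main obstacle is not the algebra of the recursion but rather making two implicit assumptions explicit: (i) that $\tilde K_{\triangle,m}$ is actually defined by the same scaling recursion as $K_{\triangle,m}$, without which the error does not inherit the four-term identity, and (ii) that the base-level symmetry $|\epsilon_{\triangle,0}(x,y)|=|\epsilon_{\triangle,0}(-x,y)|$ is in force. The second condition must be traced to the quadrature construction of Section \ref{subsec:Discrete-Fourier-approximation-T-kernel}: if the approximation of the inner $k_{y}$-integral uses nodes symmetric about the origin, then $\epsilon_{\triangle,0}(x,y)$ will be even in $x$, which unlocks the monotonicity and thus the final bound in terms of $\epsilon_{\triangle}$ evaluated at the shrunken argument $(2^{-m}x,2^{-m}y)$.
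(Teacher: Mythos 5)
Your overall route is the same as the paper's: the paper's entire proof is ``proof by induction using the identity [the error recursion] obtained from (\ref{eq:K_m_2_K_=00003D00007Bm-1=00003D00007D})'', and you correctly identify that the error inherits the four-term scaling recursion once $\tilde{K}_{\triangle,m}$ is generated by the same recursion as $K_{\triangle,m}$, then take moduli using $\left|1+2\mathrm{e}^{\mathrm{i}\theta}\cos\phi\right|\le3$ and $\left|\mathrm{e}^{\mathrm{i}\theta}\right|=1$. Your treatment of the symmetry clause and of the final bound (via $K_{\triangle,m}\left(2^{-m}x,2^{-m}y\right)=K_{\triangle}\left(x,y\right)$ and specializing to $m_{0}=0$) matches the intended argument, and the two implicit assumptions you flag --- that $\tilde{K}_{\triangle,m}$ is defined by the same scaling recursion, and that the base-level error is even in $x$ --- are genuine gaps in the paper's own one-line proof that deserve to be made explicit.

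One step of yours is wrong as stated, however: iterating the one-step bound does \emph{not} keep the weights in the form $\frac{1}{4}\left[3\cdot+\cdot\right]$. Writing $a_{k}=\left|\epsilon_{\triangle,m-k}\left(x,y\right)\right|$ and $b_{k}=\left|\epsilon_{\triangle,m-k}\left(-x,y\right)\right|$, the pair $\left(a_{k},b_{k}\right)$ is propagated by the symmetric matrix with diagonal entries $3/4$ and off-diagonal entries $1/4$, whose $n$-th power has diagonal entries $\frac{1}{2}\left(1+2^{-n}\right)$ and off-diagonal entries $\frac{1}{2}\left(1-2^{-n}\right)$. Already for $n=2$ this gives $a_{0}\le\frac{1}{16}\left(10a_{2}+6b_{2}\right)$, which implies the claimed $\frac{1}{4}\left(3a_{2}+b_{2}\right)=\frac{1}{16}\left(12a_{2}+4b_{2}\right)$ only when $b_{2}\le a_{2}$, so the coefficient identity $3+1=4$ does not do what you assert. (This appears to be a defect of the corollary's statement itself, where $n$ is unquantified; the recursion only directly supports the case $n=1$.) The damage is contained: the row sums of the propagation matrix are always $1$, so under the hypothesis $a_{m_{0}}=b_{m_{0}}=E$ one still gets $a_{m},b_{m}\le E$ for all $m\ge m_{0}$, and hence your symmetry clause and the final estimate survive unchanged.
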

\begin{proof}
Proof by induction using the identity 
\begin{align}
\epsilon_{\triangle,m}\left(x,y\right) & =\frac{1}{4}\hspace{-0.25cc}\left[\hspace{-0.25cc}\begin{array}{l}
\epsilon_{\triangle,m-1}\left(x,y\right)\left(1+2\mathrm{e}^{\mathrm{i}2\pi\Delta px}\cos\left(2\pi\Delta psy\right)\right)\\
+\mathrm{e}^{\mathrm{i}4\pi\Delta px}\epsilon_{\triangle,m-1}\left(-x,y\right)
\end{array}\hspace{-0.5cc}\right]
\end{align}
obtained from (\ref{eq:K_m_2_K_=00003D00007Bm-1=00003D00007D}). 
\end{proof}
Corollary \ref{cor:scaling-property-T-limited-error} says that for
approximating $K_{\triangle,m}\left(x,y\right)$ with a desired error
bound $\epsilon$ over a desired region centered around zero, it is
sufficient to find an approximation to $K_{\triangle,m_{0}}\left(x,y\right)$
for any integer (including negative integers) $m_{0}<m$ whose error
is less than or equal to $\epsilon$ within the vicinity of zero.
We present $\tilde{K}_{\triangle,m}\left(2^{-m}x,2^{-m}y\right)$
and $\left|\epsilon_{\triangle,m}\left(2^{-m}x,2^{-m}y\right)\right|$
for $m_{0}=0$ and $m=0,1,\ldots,7$ in Figures \ref{fig:RNS-quad-approx-K_triangle}
and \ref{fig:RS-quad-approx-K_triangle}.

\subsubsection{Nodes capturing rotational invariance of equilateral triangle}

Let us consider the equilateral triangle $T_{E}$ with each side equal
to $1$ and center of mass at the origin. $T_{E}$ is equivalent to
the triangle $T$ of (\ref{eq:T}) with $\Delta p=\sqrt{3}/2$ and
$s=\sqrt{3}^{-1}$. While we can construct discrete Fourier approximation
of the kernel for the equilateral triangle using the method in Appendix
\ref{subsec:Discrete-Fourier-approximation-T-kernel}, constructed
nodes $\left(k_{x}\left[m\right],k_{y}\left[m,n\right]\right)$ do
not necessarily satisfy the rotational invariance of equilateral triangle
(see Figure \ref{fig:Nodes-for-equilateral}).

\begin{figure}
\center

\begin{tabular}{|c|c|c|}
\hline 
{\scriptsize{}{}$m$}  &
\textbf{\scriptsize{}{}RNI}{\scriptsize{} } &
\textbf{\scriptsize{}{}RI}\tabularnewline
\hline 
\multicolumn{1}{|>{\centering}m{0.5cm}|}{{\scriptsize{}{}0}} &
\includegraphics[scale=0.15]{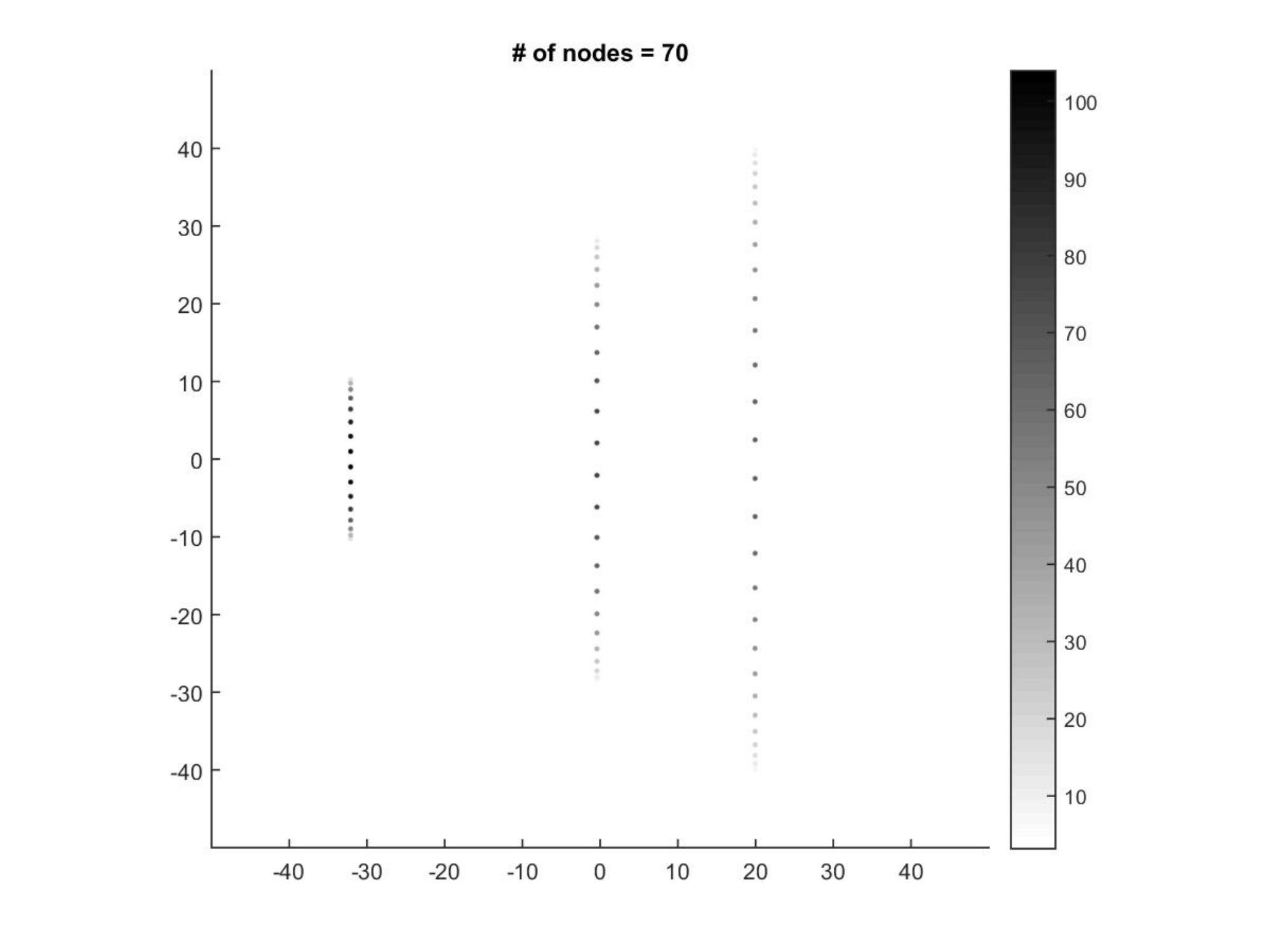}  &
\includegraphics[scale=0.15]{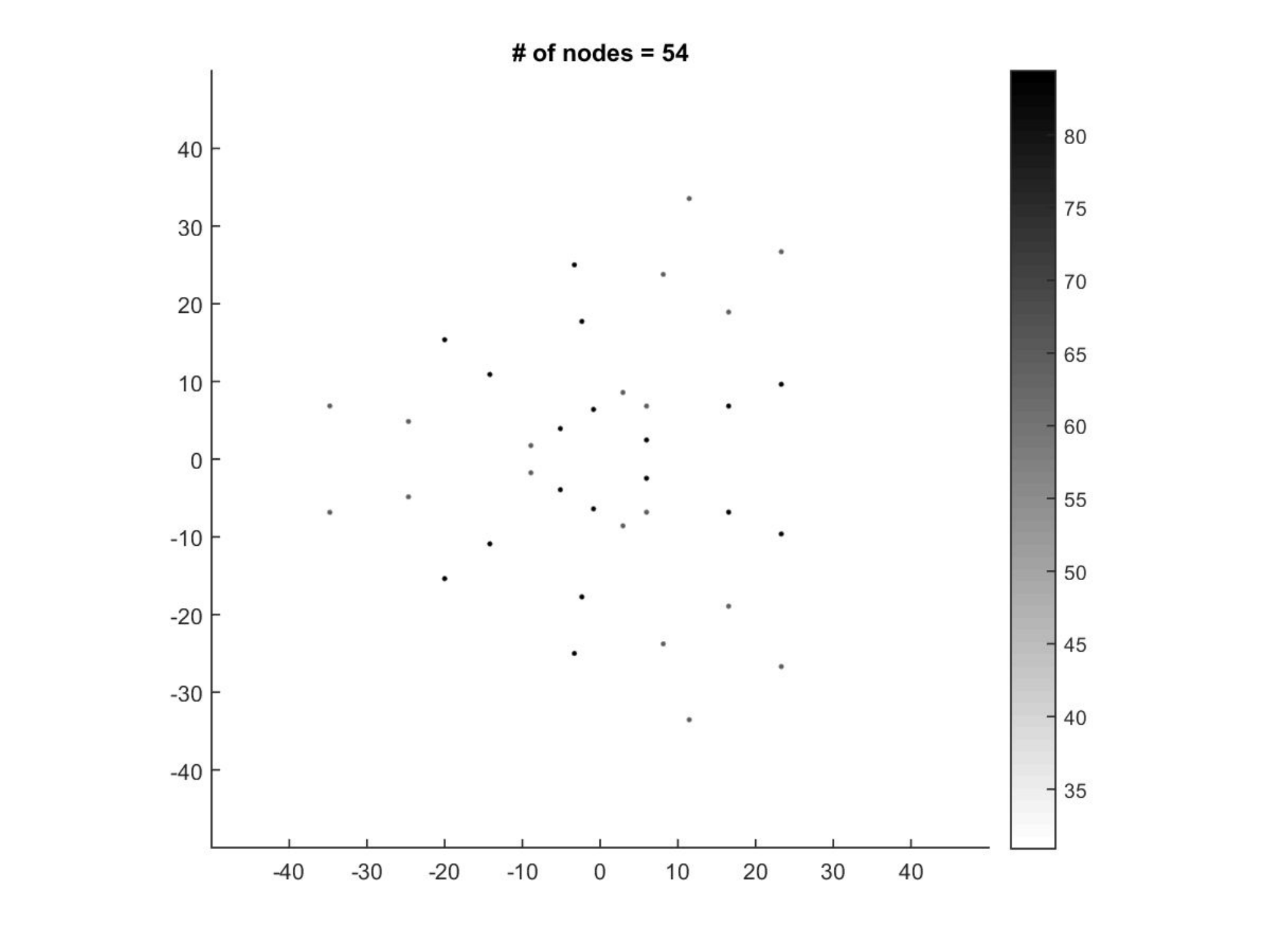}\tabularnewline
\hline 
{\scriptsize{}{}1}  &
\includegraphics[scale=0.15]{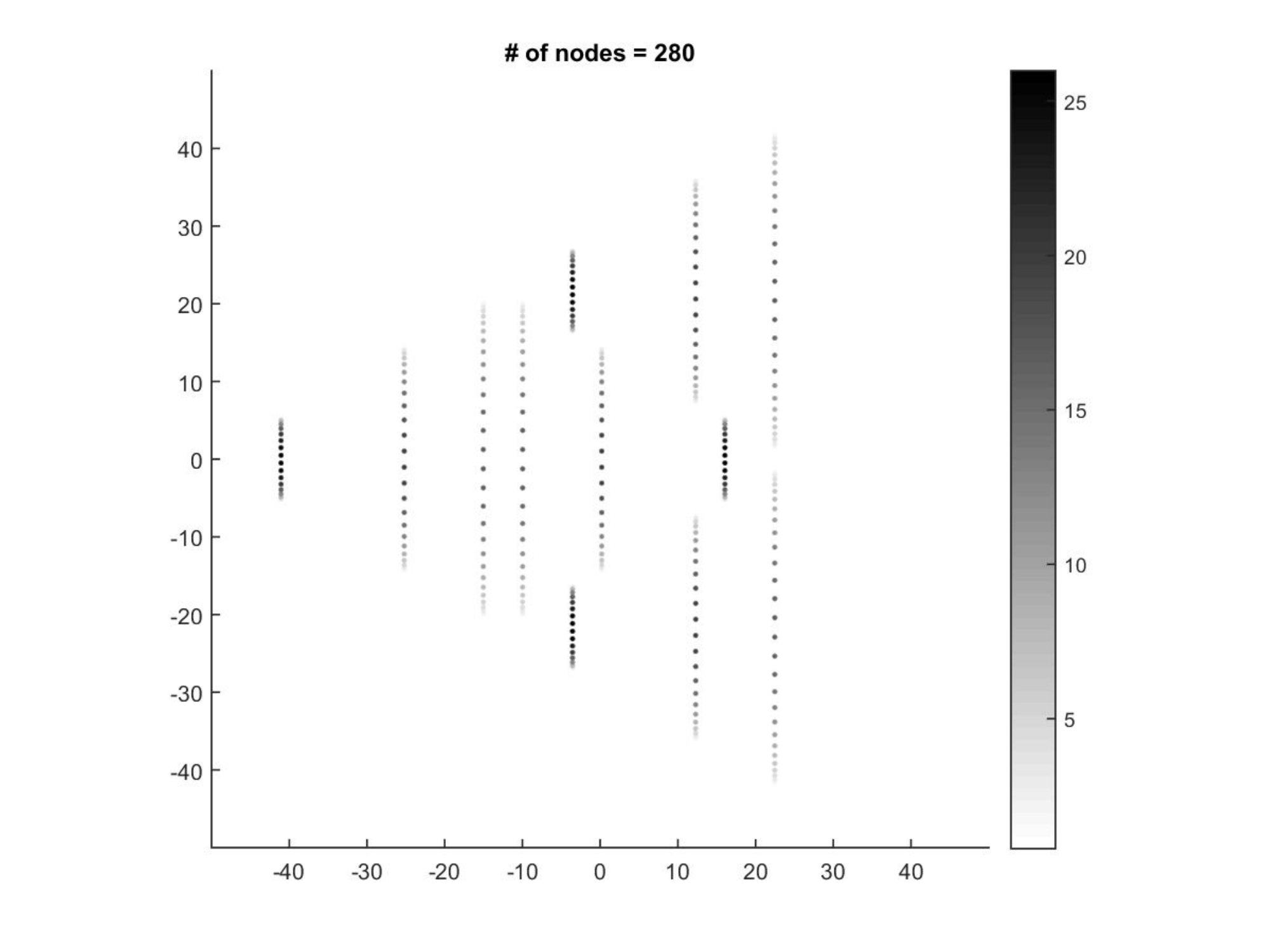}  &
\includegraphics[scale=0.15]{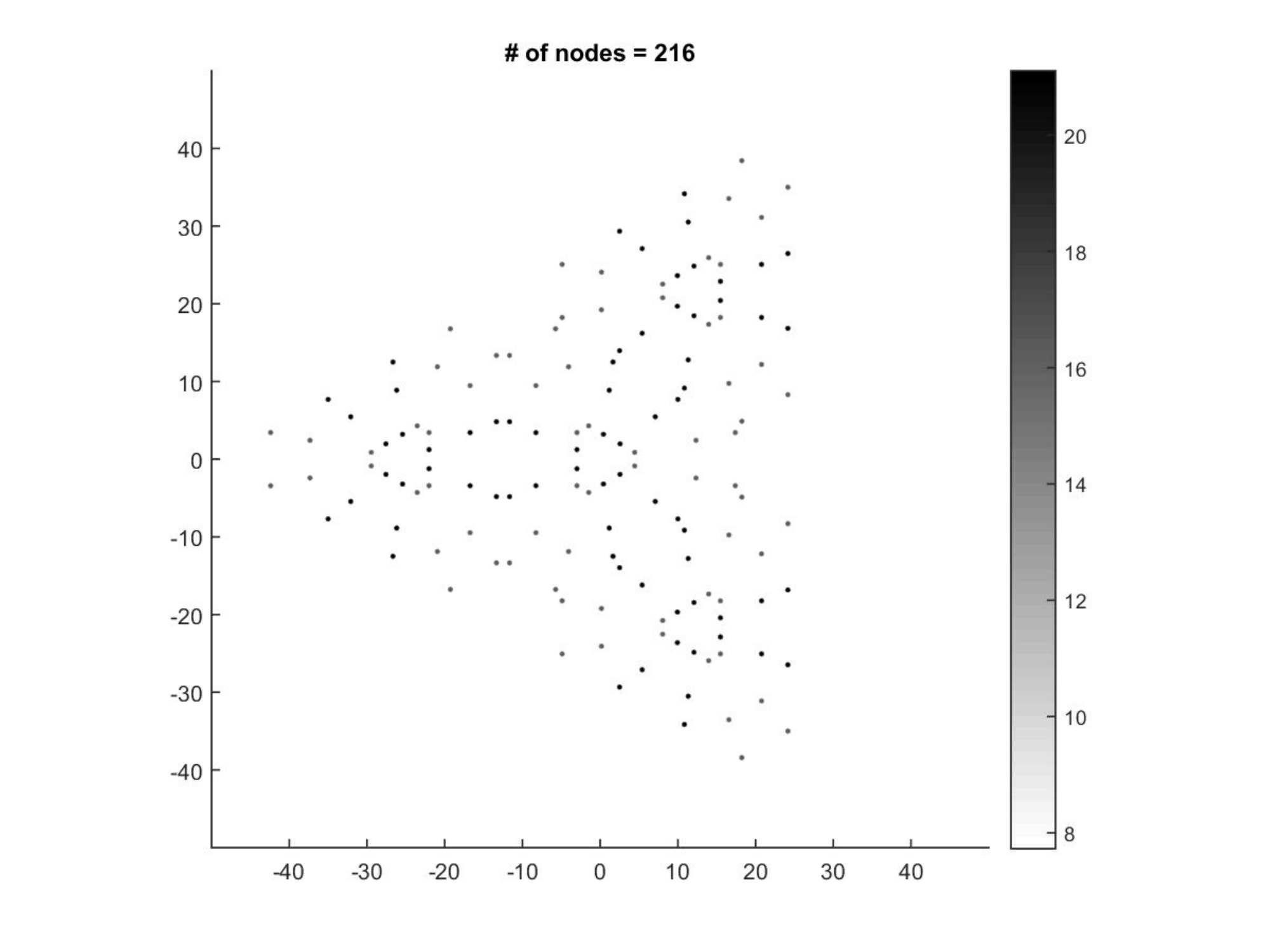}\tabularnewline
\hline 
{\scriptsize{}{}2}  &
\includegraphics[scale=0.15]{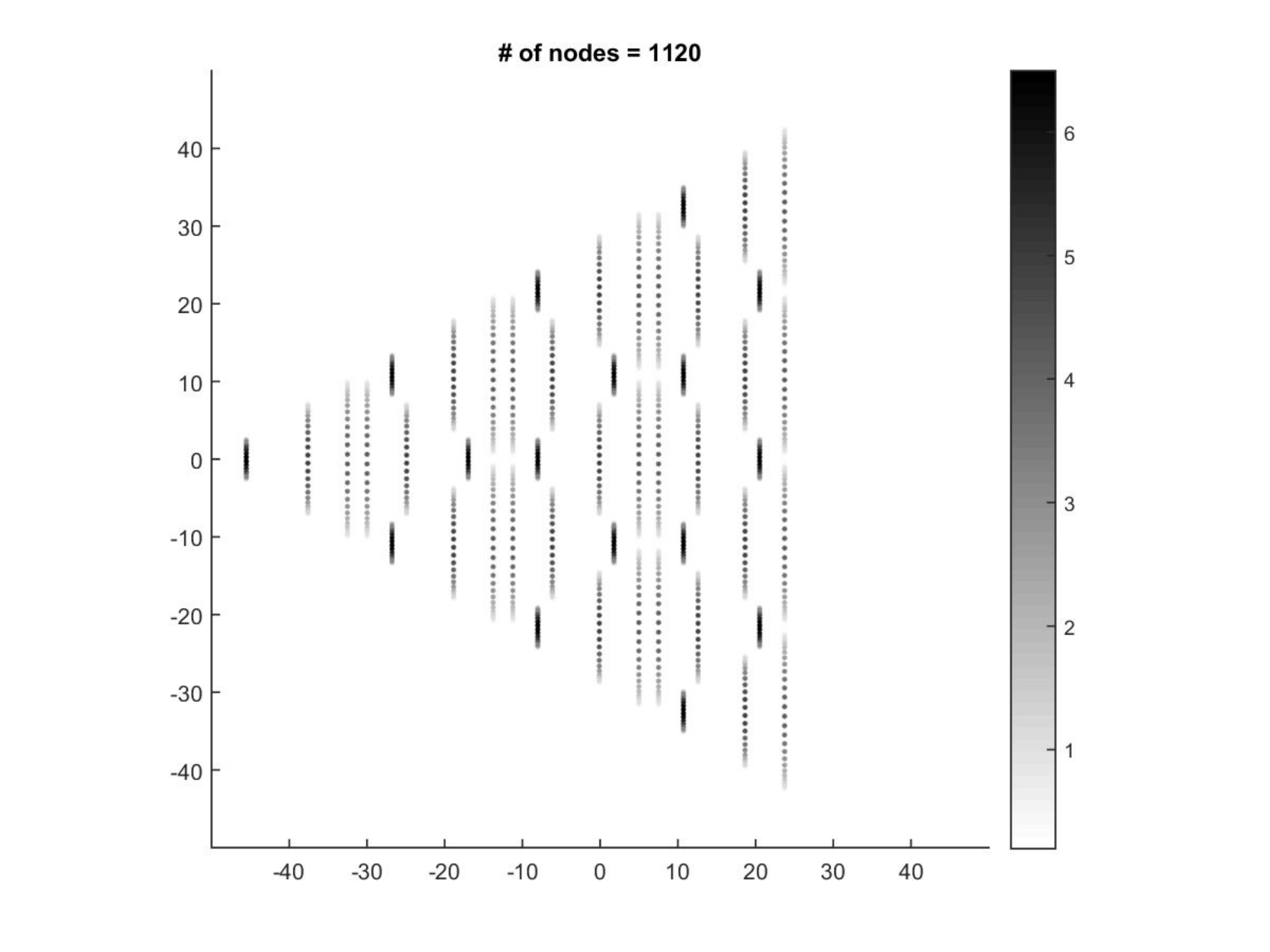}  &
\includegraphics[scale=0.15]{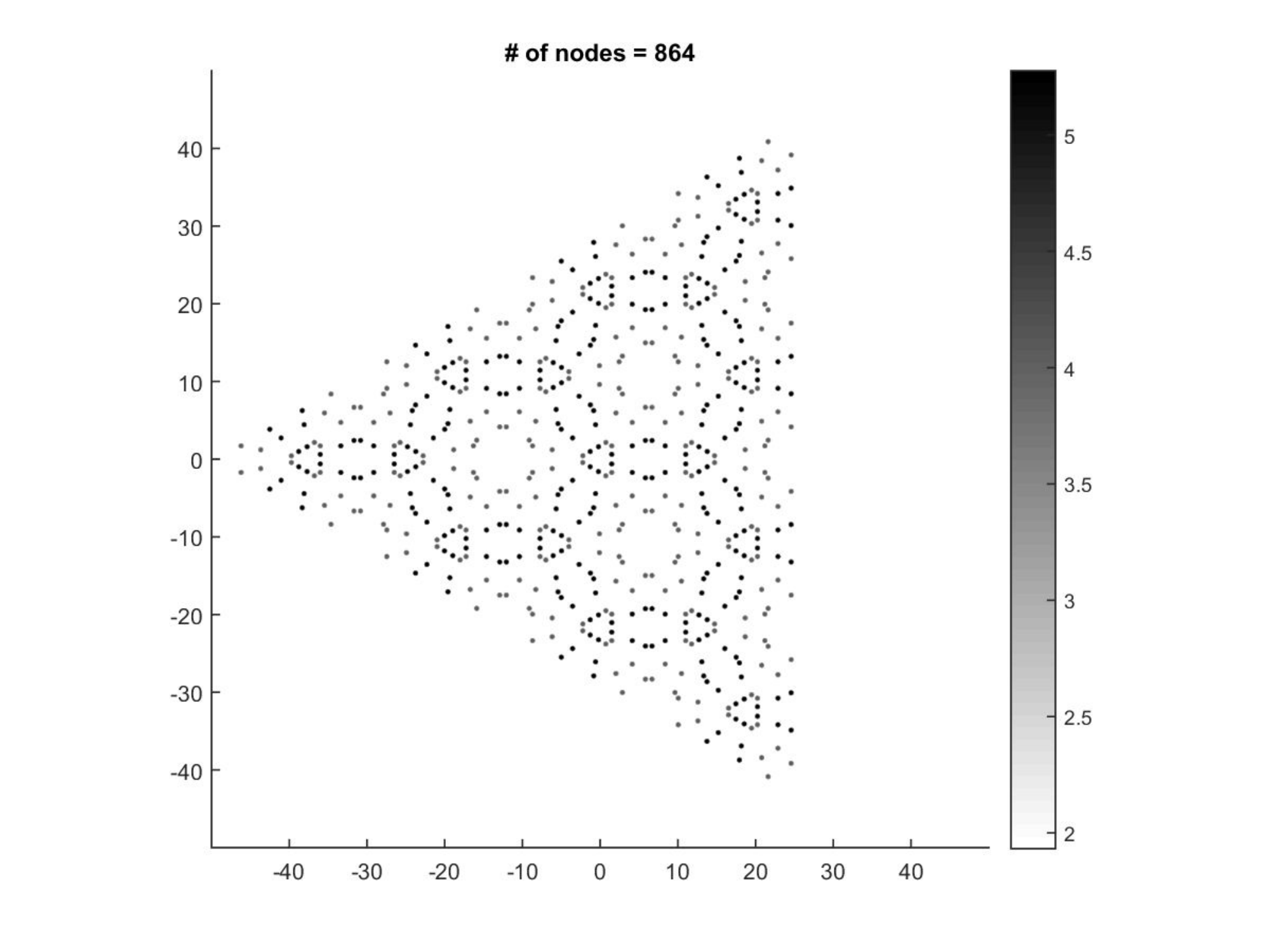}\tabularnewline
\hline 
{\scriptsize{}{}3}  &
\includegraphics[scale=0.15]{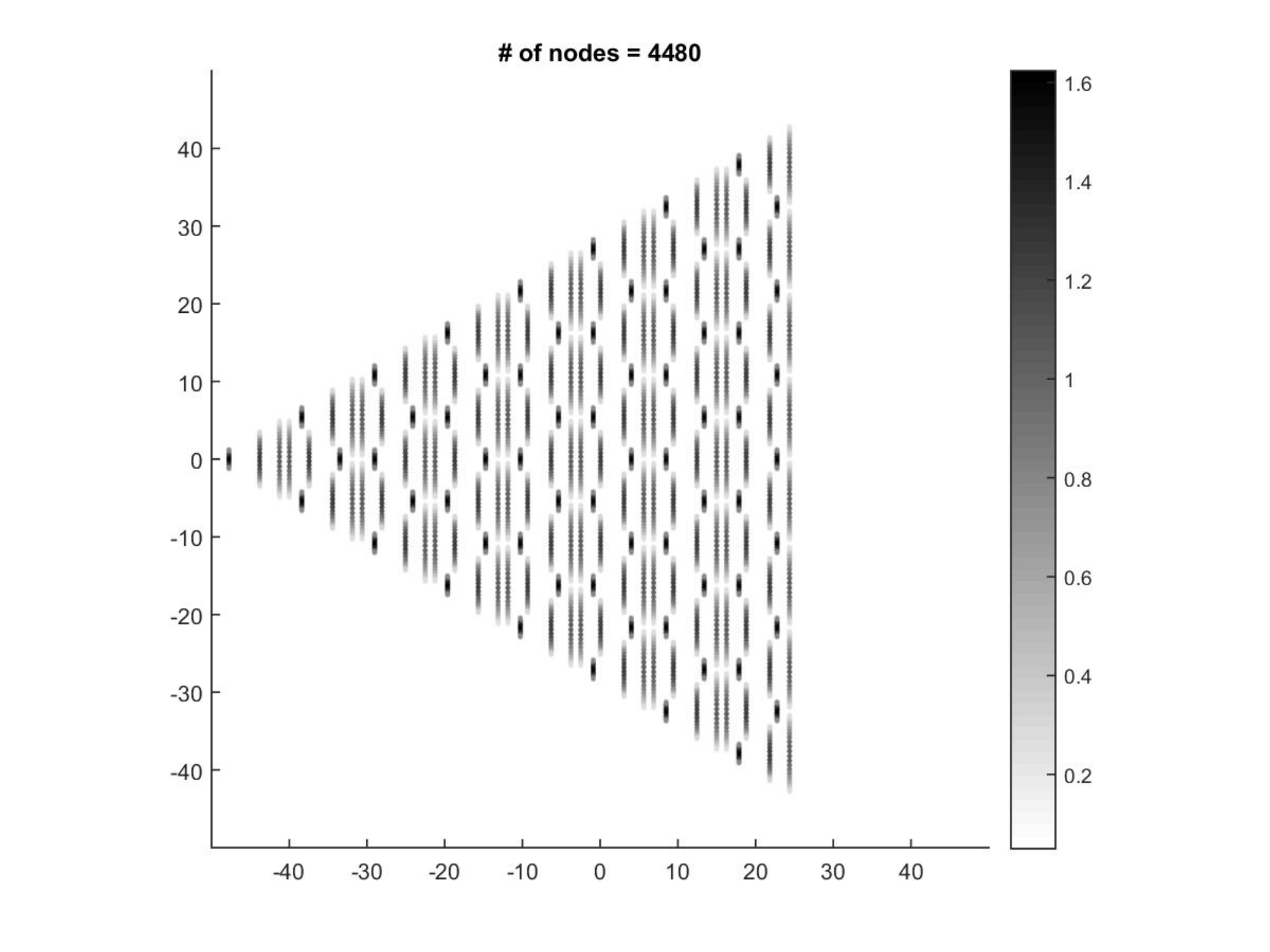}  &
\includegraphics[scale=0.15]{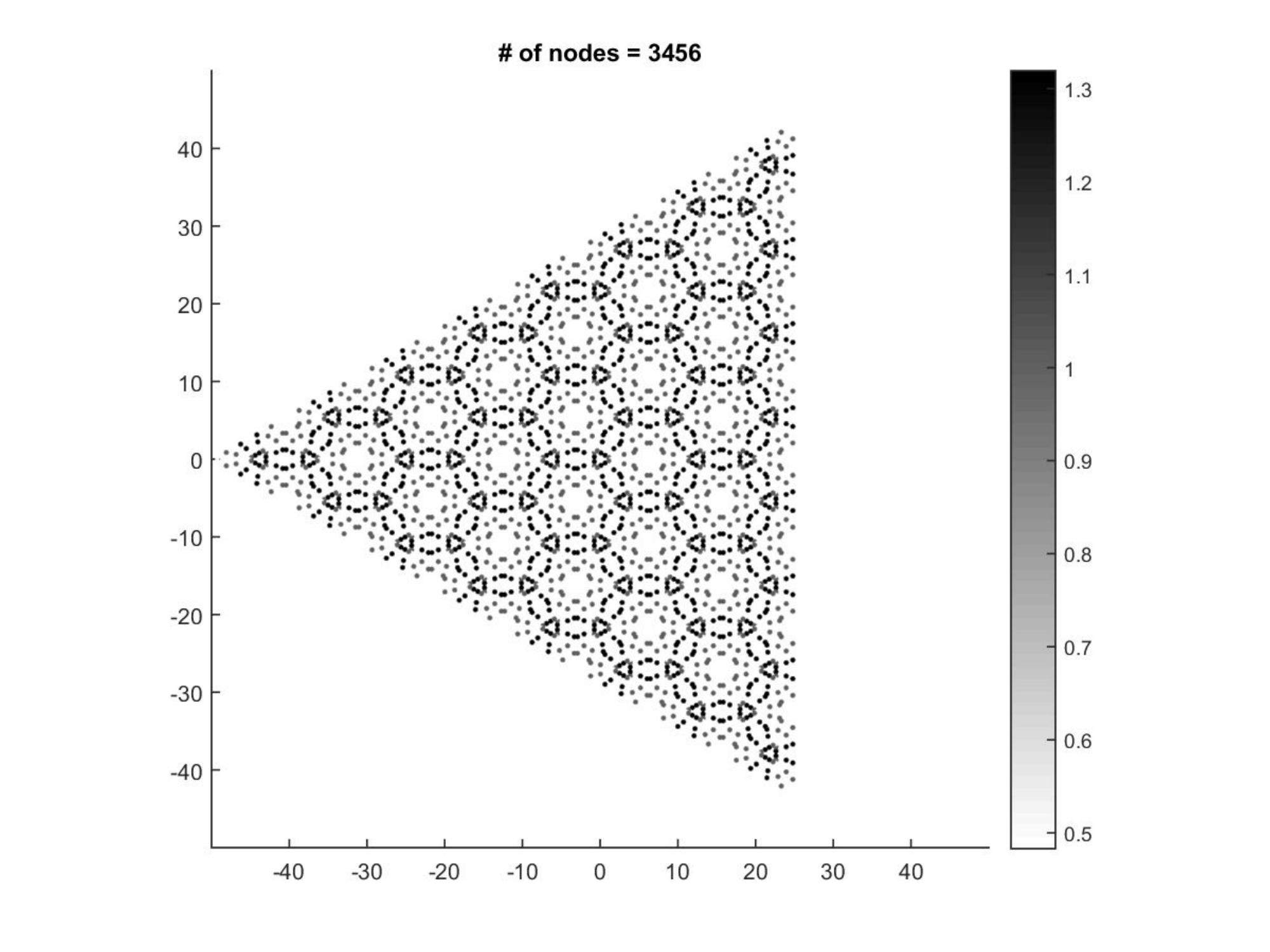}\tabularnewline
\hline 
{\scriptsize{}{}4}  &
\includegraphics[scale=0.15]{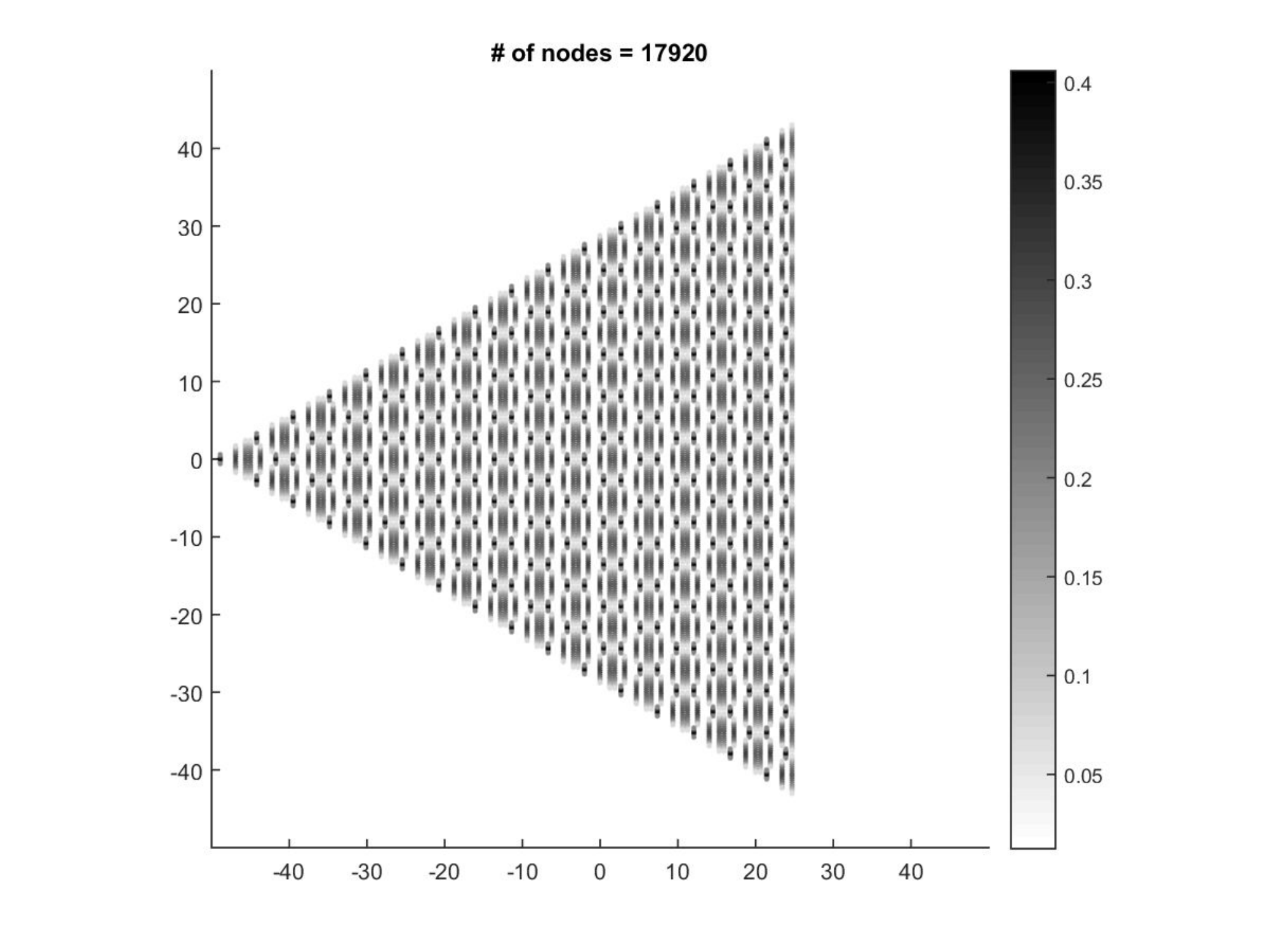}  &
\includegraphics[scale=0.15]{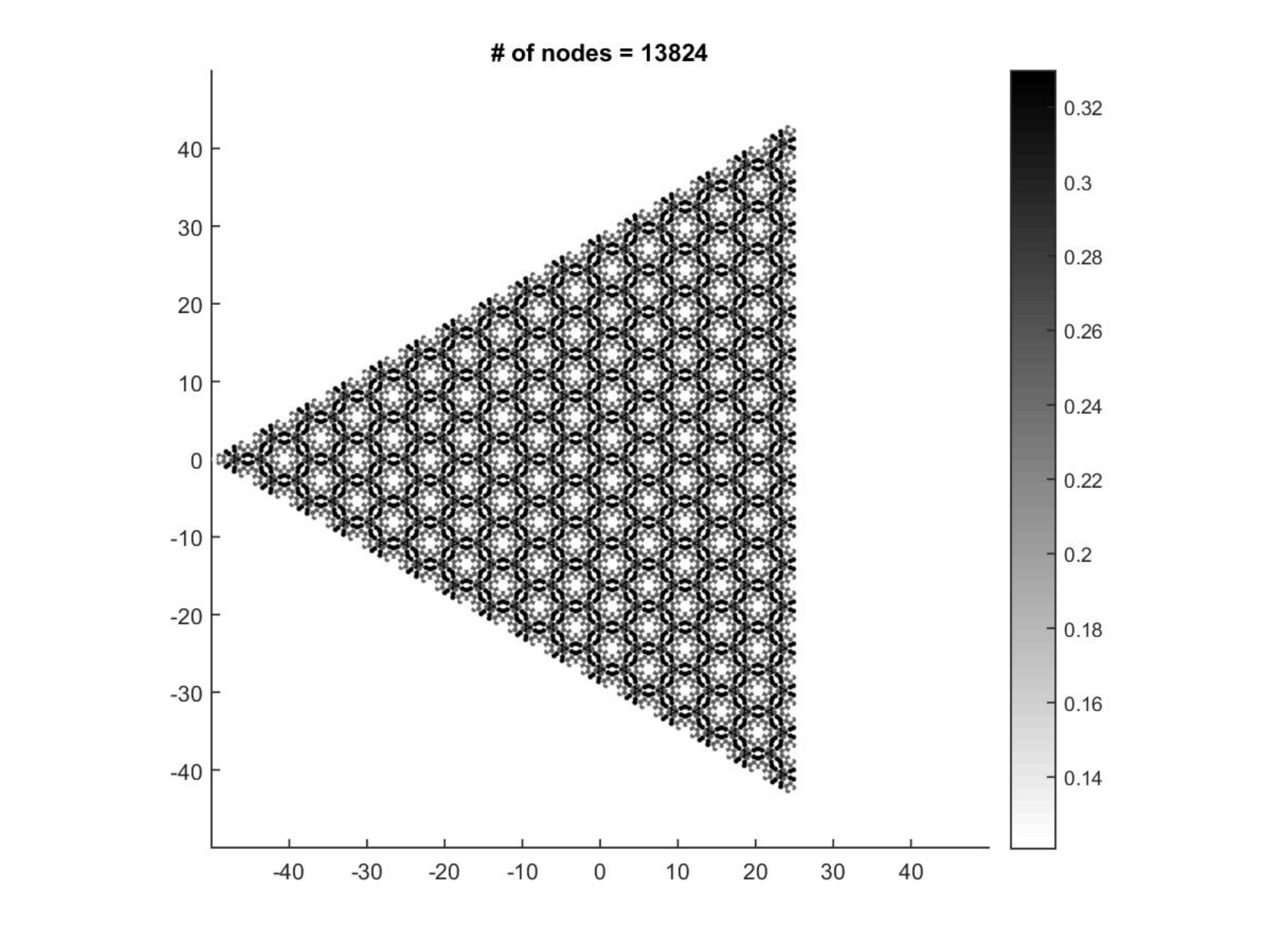}\tabularnewline
\hline 
{\scriptsize{}{}5}  &
\includegraphics[scale=0.15]{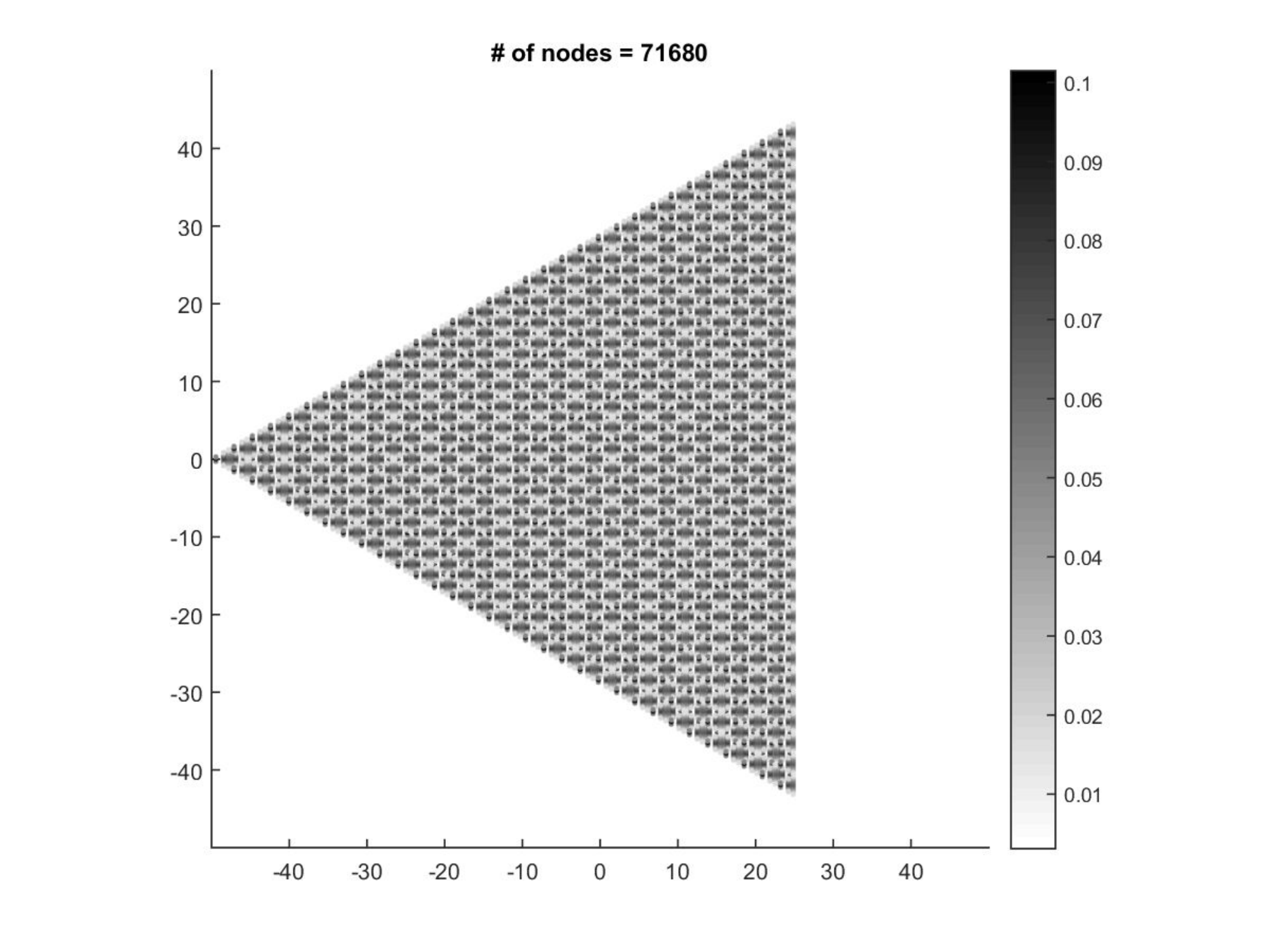}  &
\includegraphics[scale=0.15]{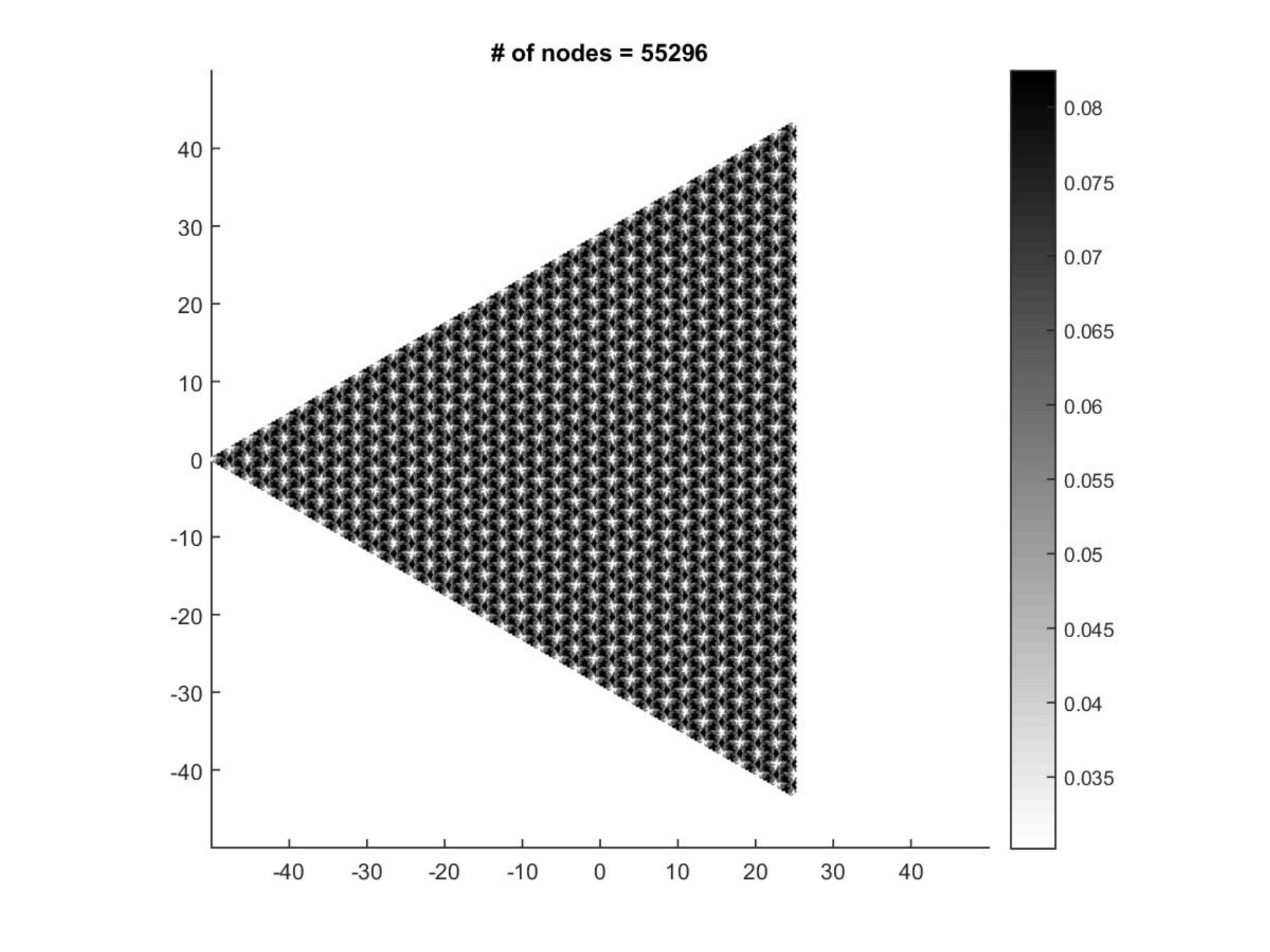}\tabularnewline
\hline 
{\scriptsize{}{}6}  &
\includegraphics[scale=0.15]{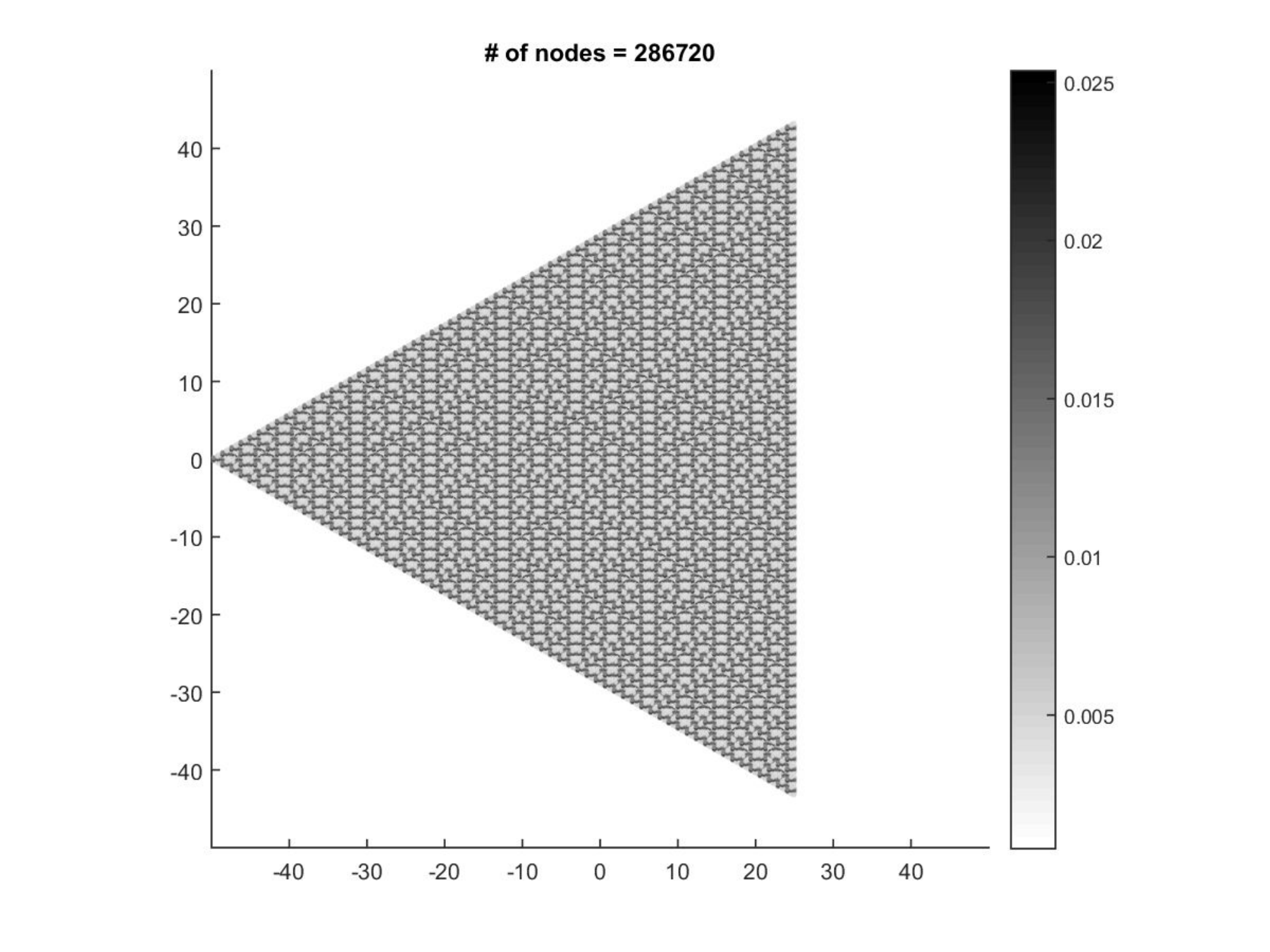}  &
\includegraphics[scale=0.15]{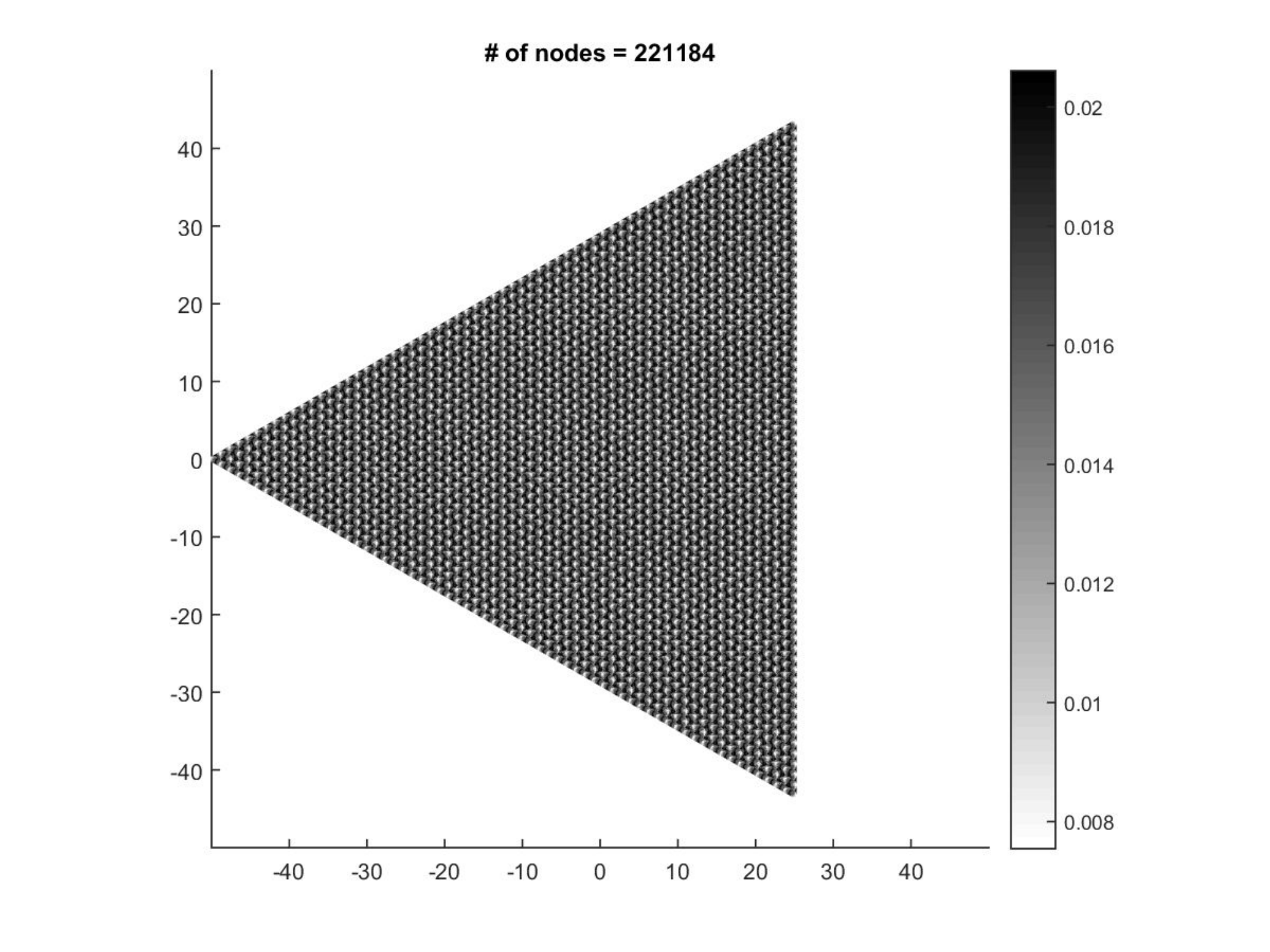}\tabularnewline
\hline 
{\scriptsize{}{}7}  &
\includegraphics[scale=0.15]{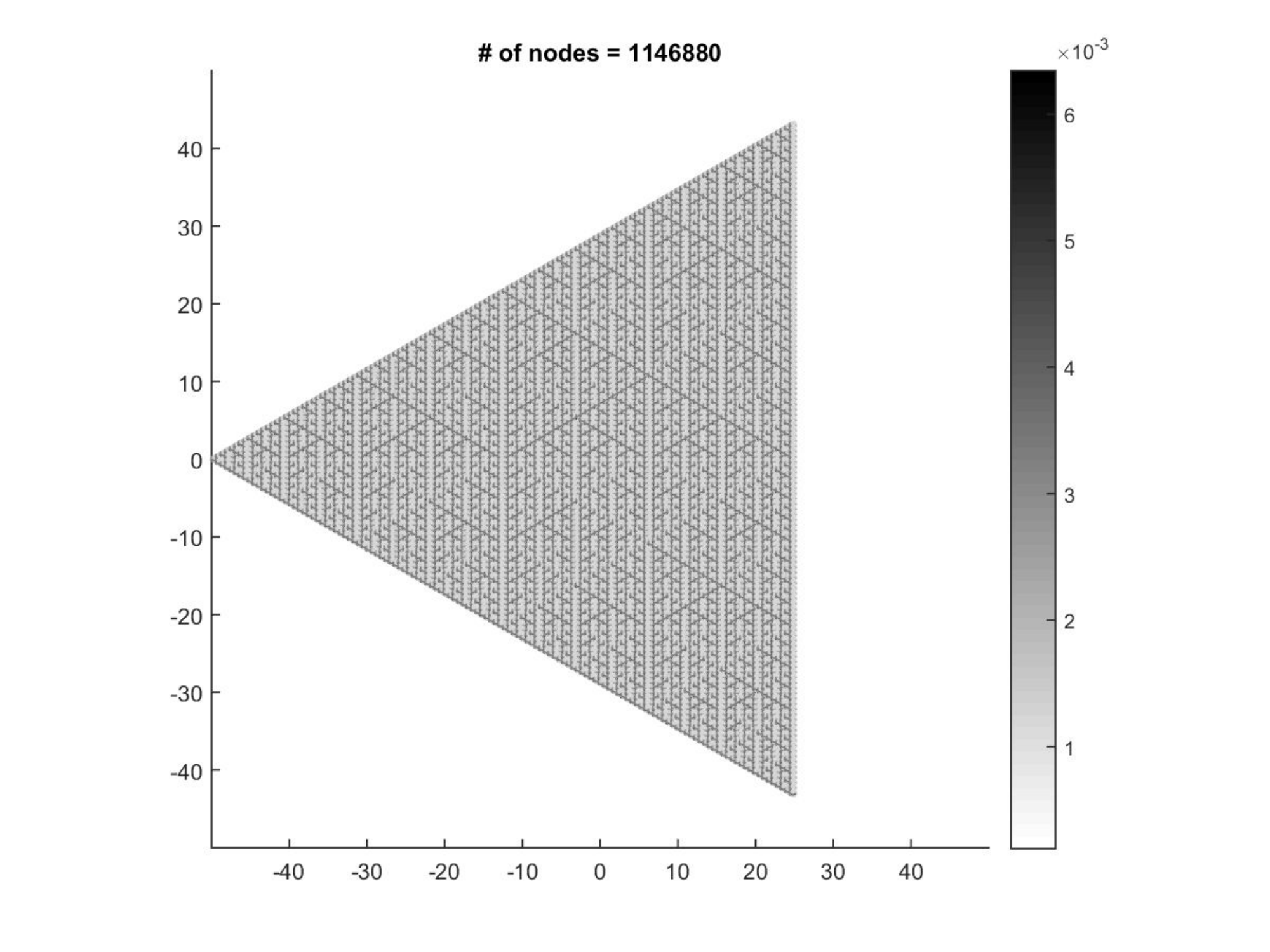}  &
\includegraphics[scale=0.15]{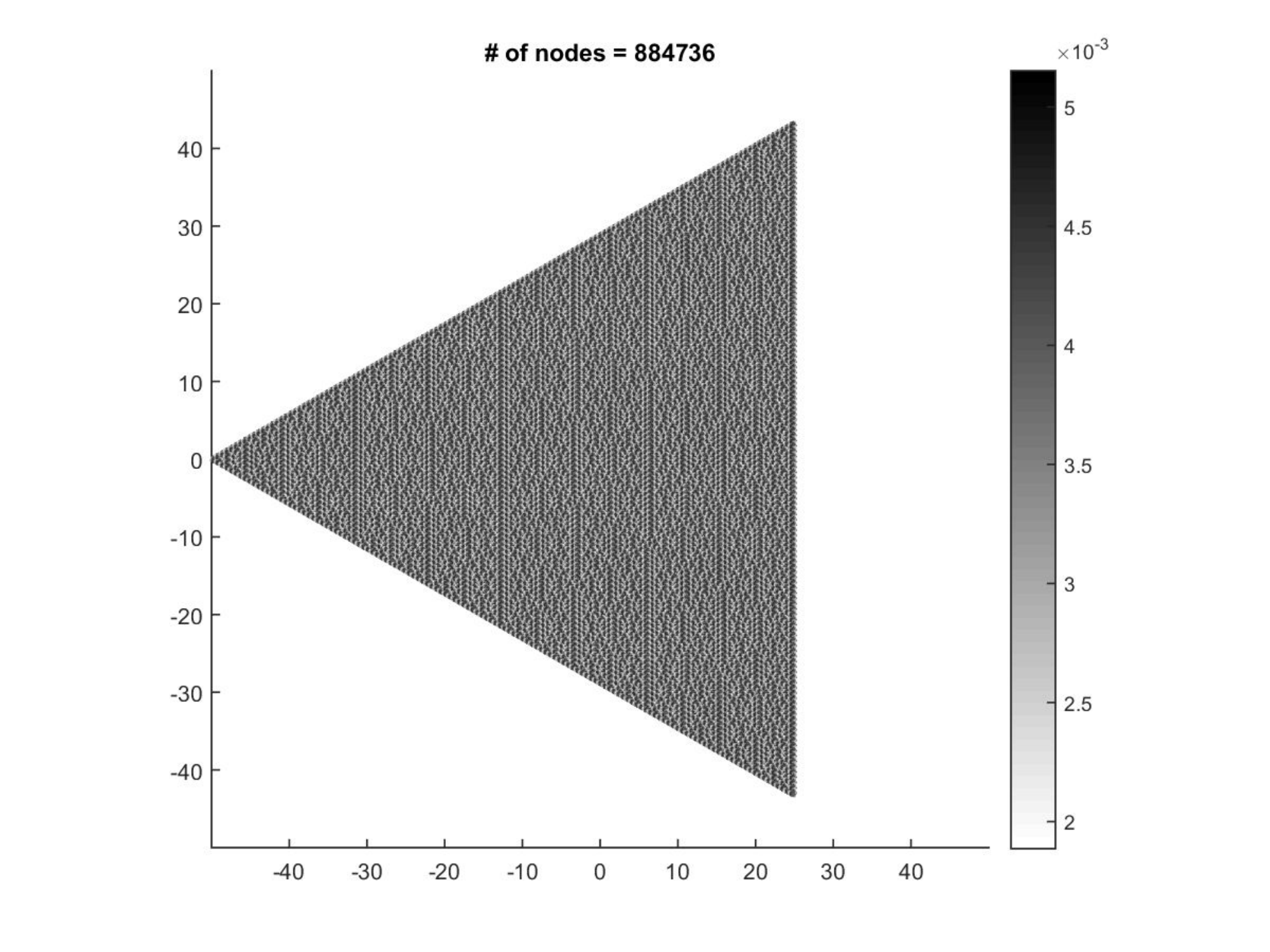}\tabularnewline
\hline 
\end{tabular}

\caption{Rotationally not invariant (RNI column) and rotationally invariant
(RI column) Fourier quadratures for $K_{\triangle}\left(x,y\right)$.
The horizontal axis is $k_{x}$ and the vertical axis is $k_{y}$.\label{fig:Nodes-for-equilateral}}
\end{figure}

In order to preserve the rotational invariance of the equilateral
triangle among the nodes, one can construct nodes for the isosceles
triangle $T_{I}$ which is triangle $T$ of (\ref{eq:T}) with $\Delta p=\sqrt{3}/6$
and $s=\sqrt{3}^{-1}$, then rotate these nodes by $2\pi/3$ and $4\pi/3$
to construct nodes satisfying the rotational invariance of the equilateral
triangle (see Figure \ref{fig:Nodes-for-equilateral}) as illustrated
in Figure \ref{fig:rotational-invariance-of-nodes}. Our observation
is, for the same or less number of nodes, the nodes with rotational
symmetry provide a more accurate discretization of the Fourier approximation
of the kernel $K_{\triangle}\left(x,y\right)$ compared to nodes without
rotational symmetry (see Figures \ref{fig:RNS-quad-approx-K_triangle}
and \ref{fig:RS-quad-approx-K_triangle}).

\begin{figure}
\center

\includegraphics[scale=0.4]{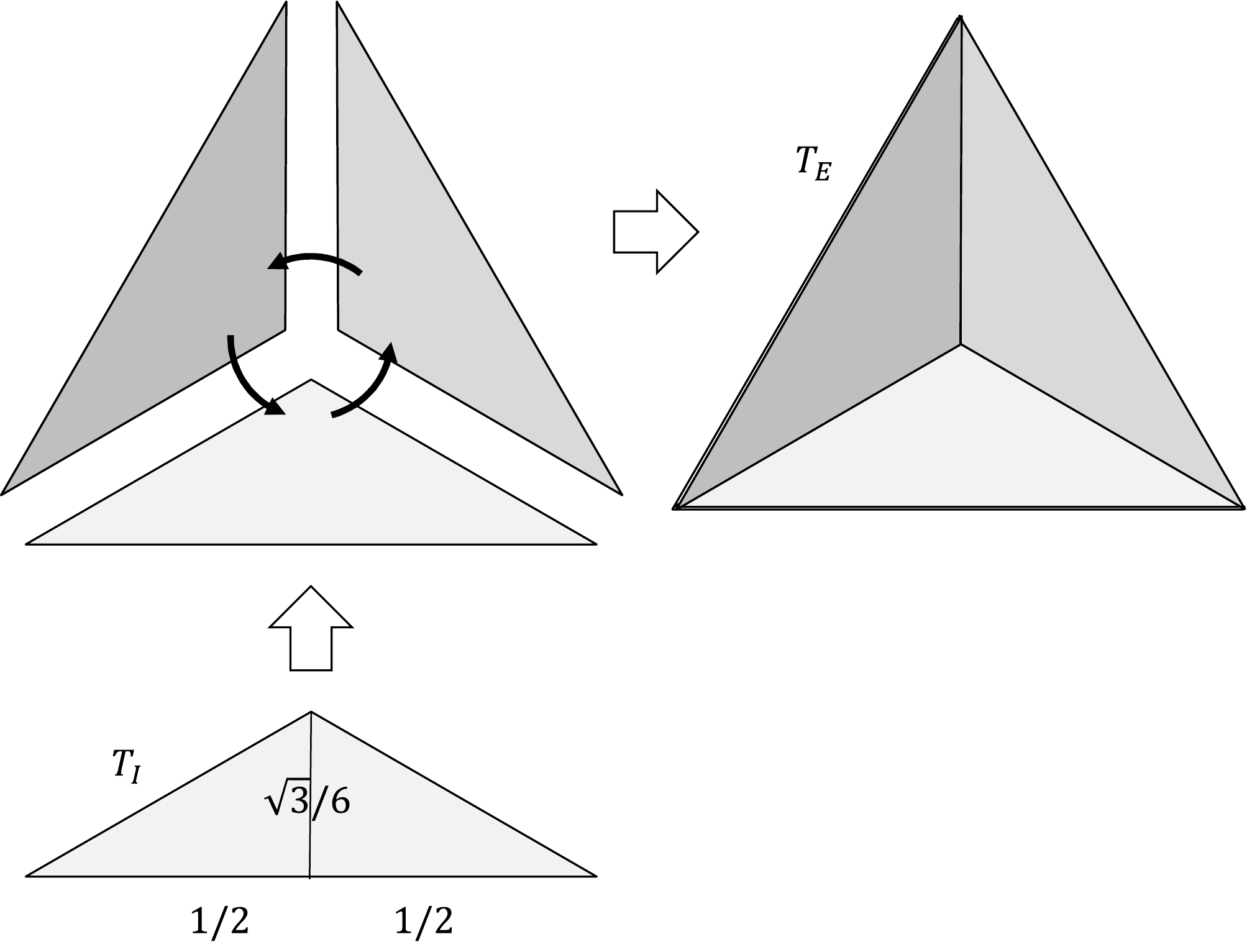}\caption{Construction of nodes that satisfy the rotational invariance of equilateral
triangle can be achieved by first constructing nodes for the lightest
gray triangle followed by rotating and accumulating the constructed
nodes. \label{fig:rotational-invariance-of-nodes}}
\end{figure}

\begin{figure}
\center

\begin{tabular}{|>{\centering}m{5mm}|c|}
\hline 
{\scriptsize{}{}$m$}  &
{\scriptsize{}{}$\tilde{K}_{m}\left(2^{-m}x,2^{-m}y\right)$}\tabularnewline
\hline 
\hline 
{\scriptsize{}{}0}  &
\includegraphics[scale=0.45]{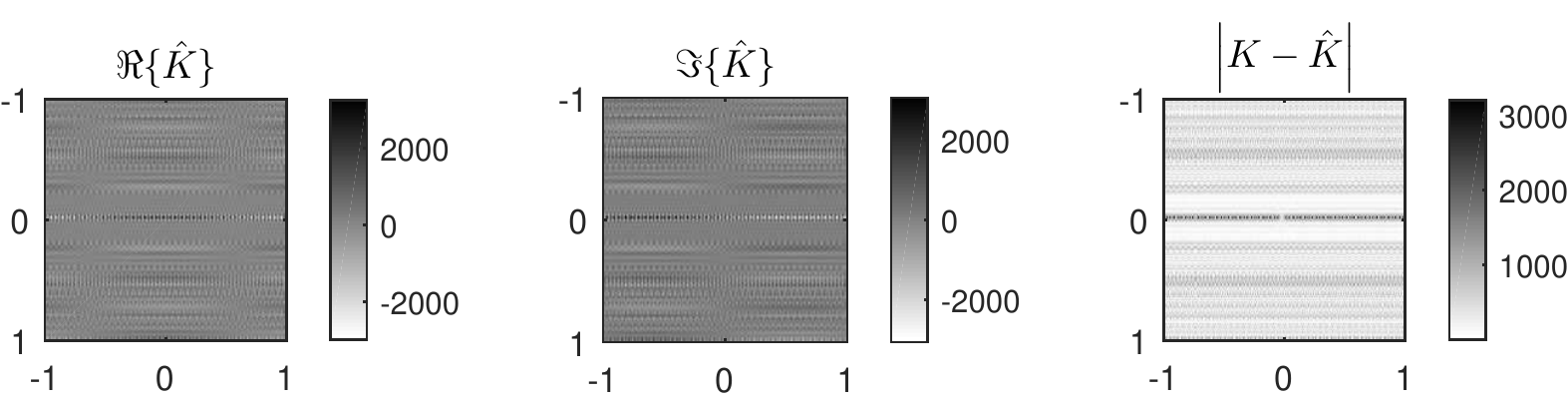}\tabularnewline
\hline 
{\scriptsize{}{}1}  &
\includegraphics[scale=0.45]{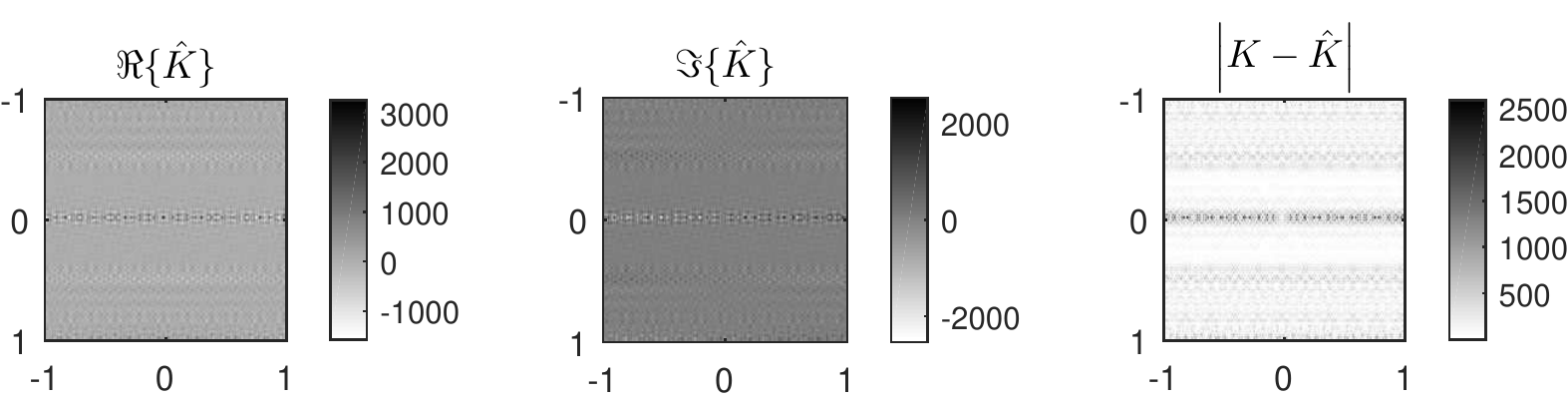}\tabularnewline
\hline 
{\scriptsize{}{}2}  &
\includegraphics[scale=0.45]{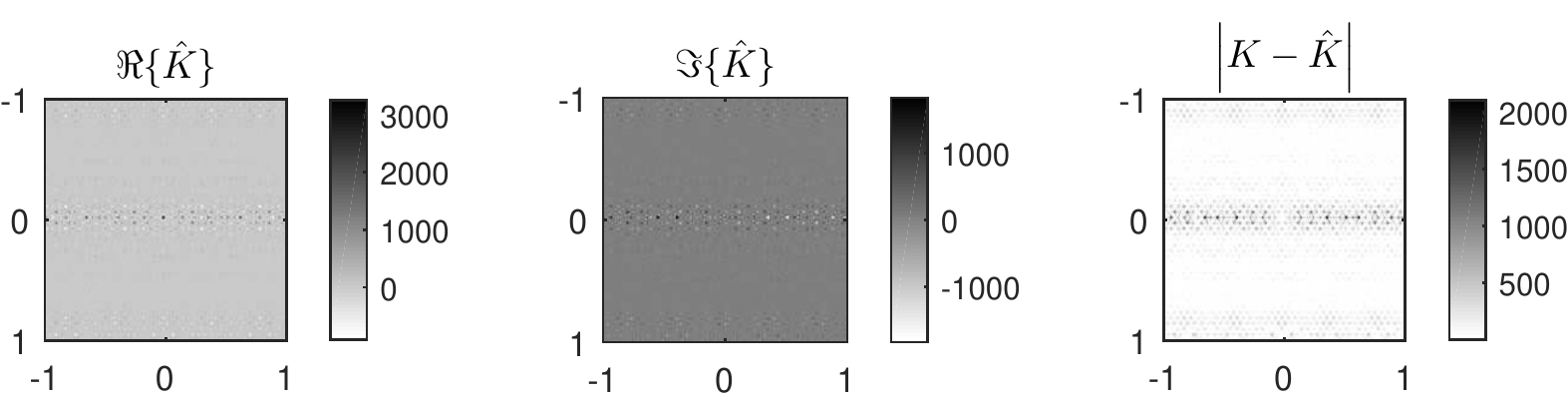}\tabularnewline
\hline 
{\scriptsize{}{}3}  &
\includegraphics[scale=0.45]{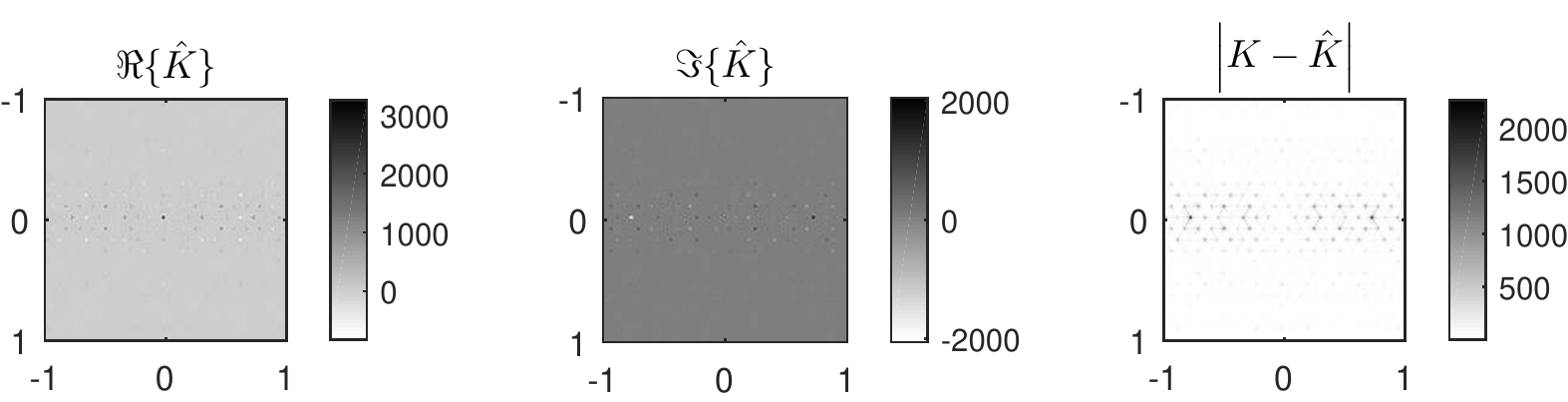}\tabularnewline
\hline 
{\scriptsize{}{}4}  &
\includegraphics[scale=0.45]{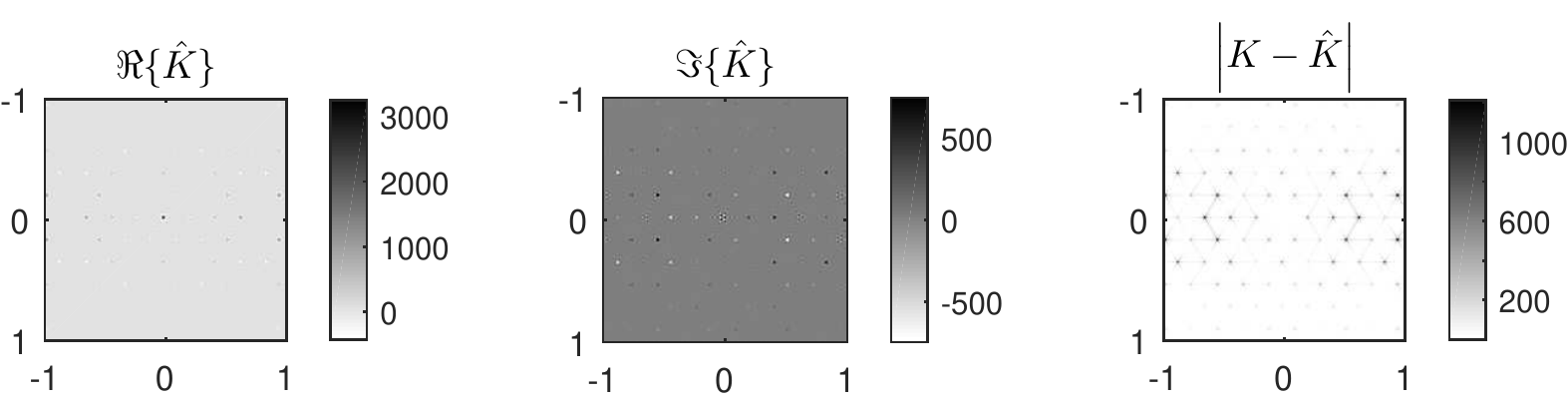}\tabularnewline
\hline 
{\scriptsize{}{}5}  &
\includegraphics[scale=0.45]{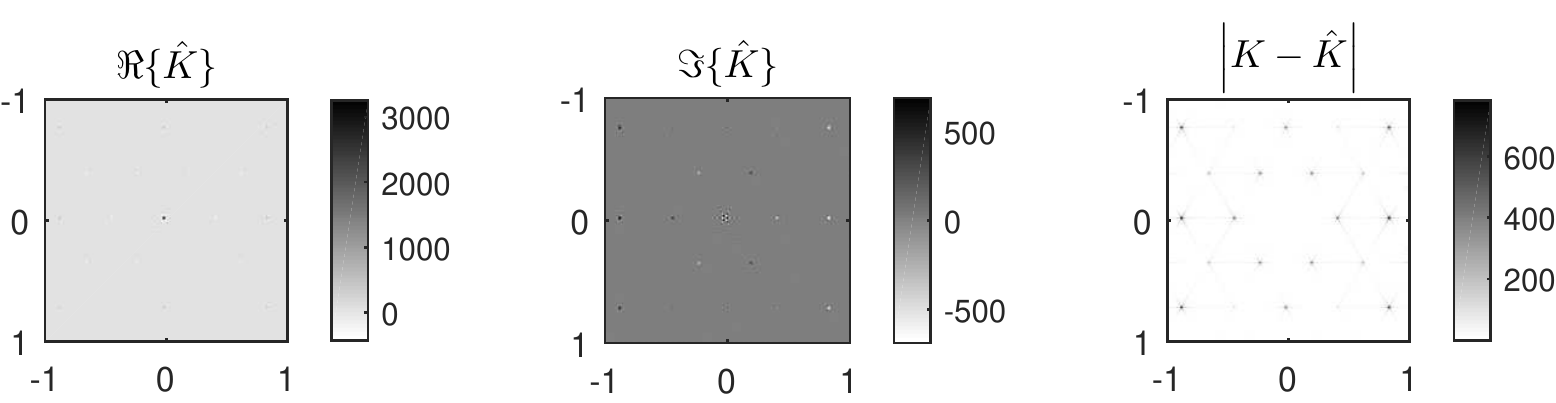}\tabularnewline
\hline 
{\scriptsize{}{}6}  &
\includegraphics[scale=0.45]{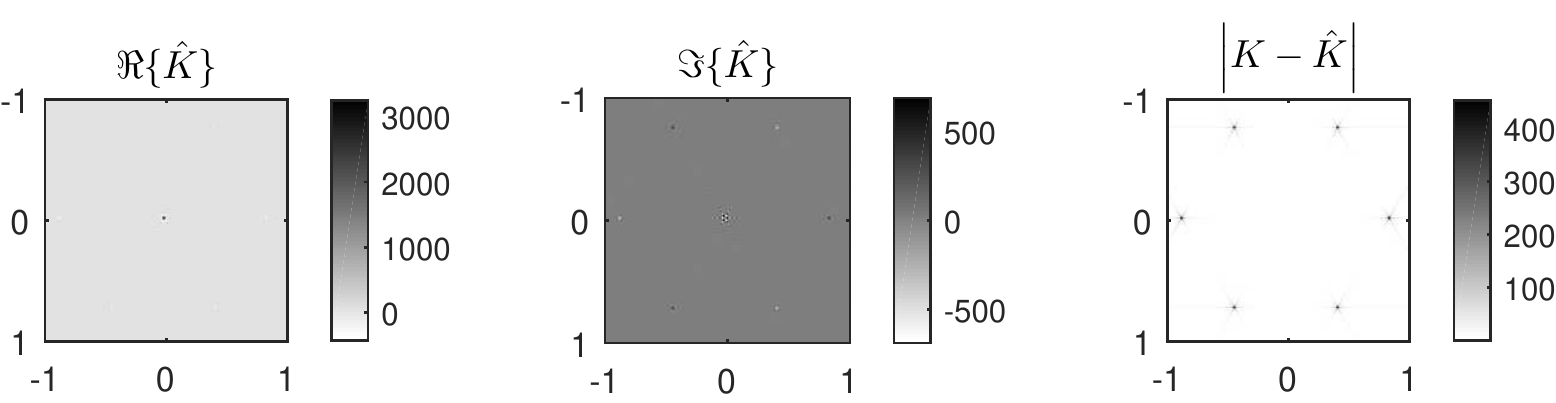}\tabularnewline
\hline 
{\scriptsize{}{}7}  &
\includegraphics[scale=0.45]{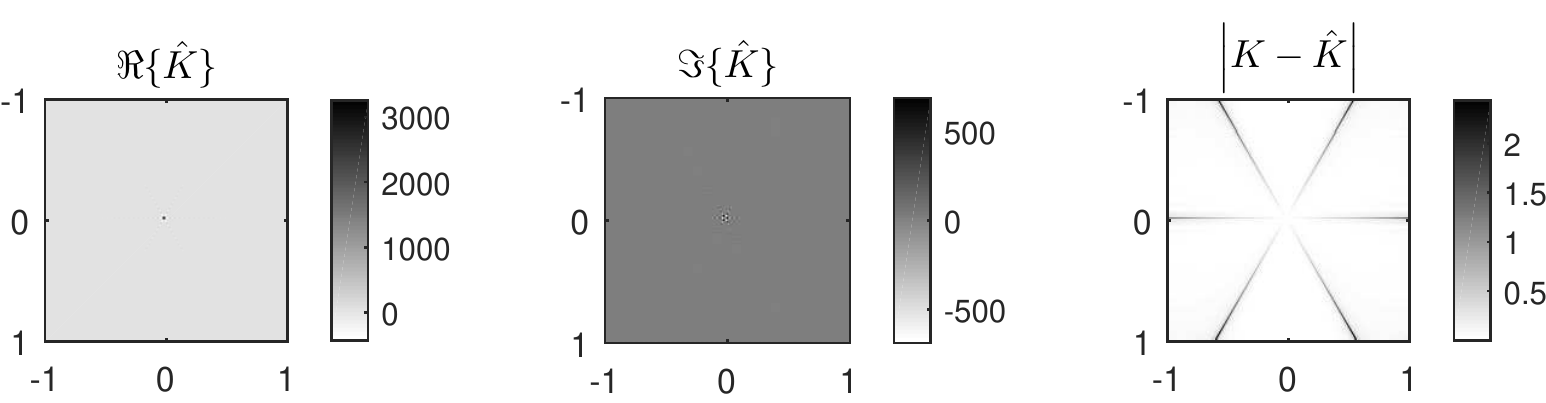}\tabularnewline
\hline 
\end{tabular}

\caption{Approximation of the kernel $K_{\triangle}\left(x,y\right)$ by $\tilde{K}_{\triangle,m}\left(2^{-m}x,2^{-m}y\right)$
using the RNI quadratures in Figure \ref{fig:Nodes-for-equilateral}.
The horizontal axis is $x$ and the vertical axis is $y$. \label{fig:RNS-quad-approx-K_triangle}}
\end{figure}

\begin{figure}
\center

\begin{tabular}{|c|c|}
\hline 
{\scriptsize{}{}$m$}  &
{\scriptsize{}{}$\tilde{K}_{\triangle,m}\left(2^{-m}x,2^{-m}y\right)$}\tabularnewline
\hline 
\hline 
{\scriptsize{}{}0}  &
\includegraphics[scale=0.45]{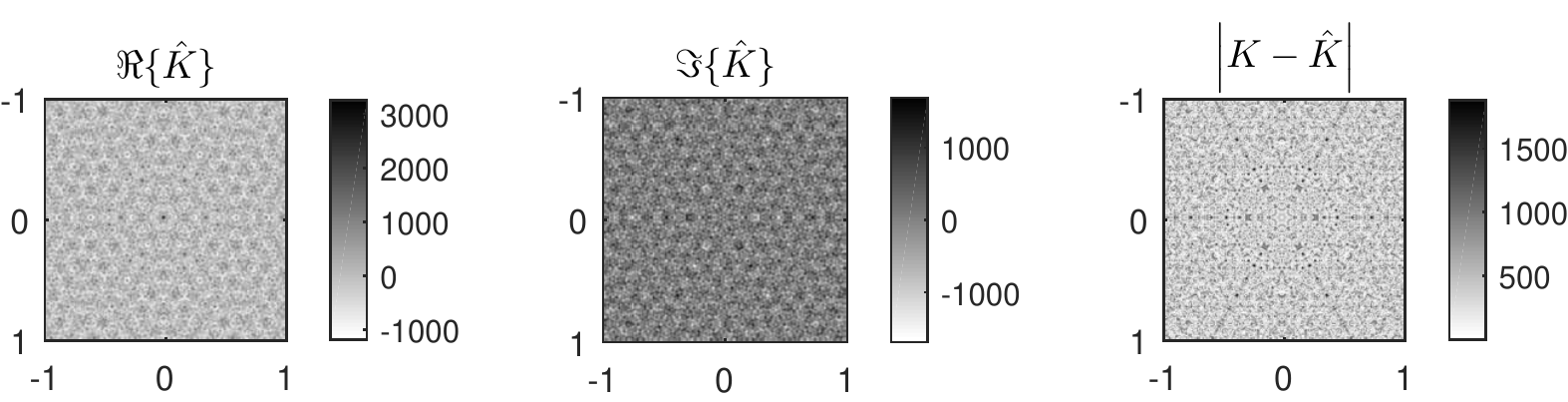}\tabularnewline
\hline 
{\scriptsize{}{}1}  &
\includegraphics[scale=0.45]{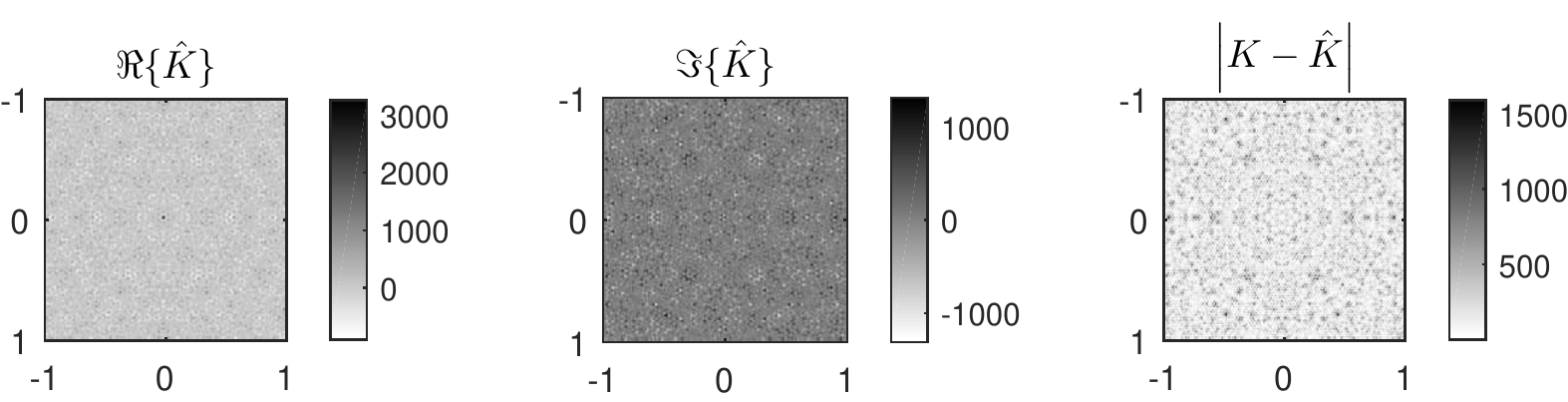}\tabularnewline
\hline 
{\scriptsize{}{}2}  &
\includegraphics[scale=0.45]{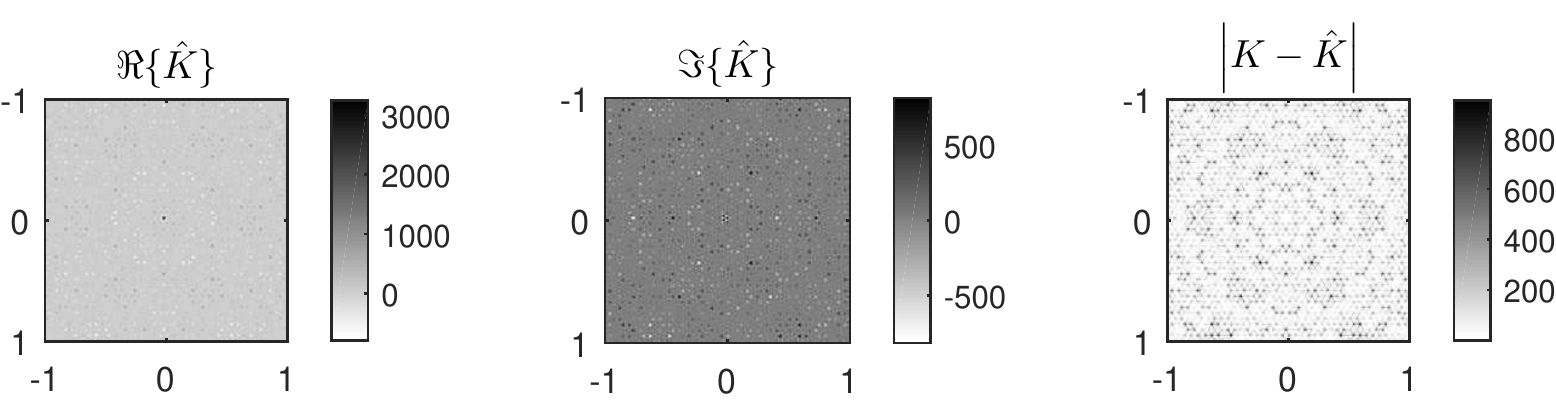}\tabularnewline
\hline 
{\scriptsize{}{}3}  &
\includegraphics[scale=0.45]{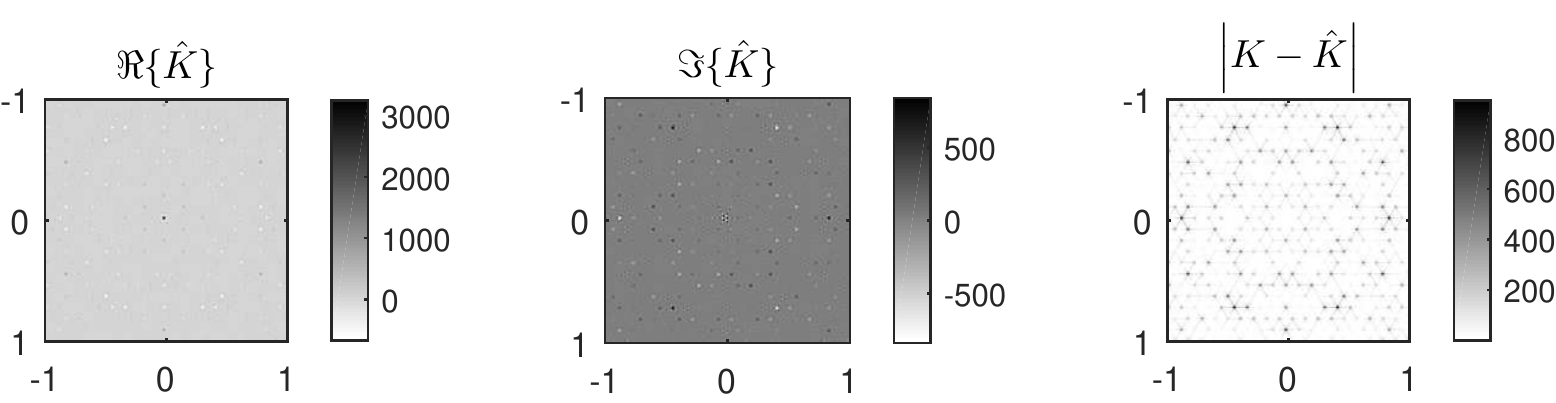}\tabularnewline
\hline 
{\scriptsize{}{}4}  &
\includegraphics[scale=0.45]{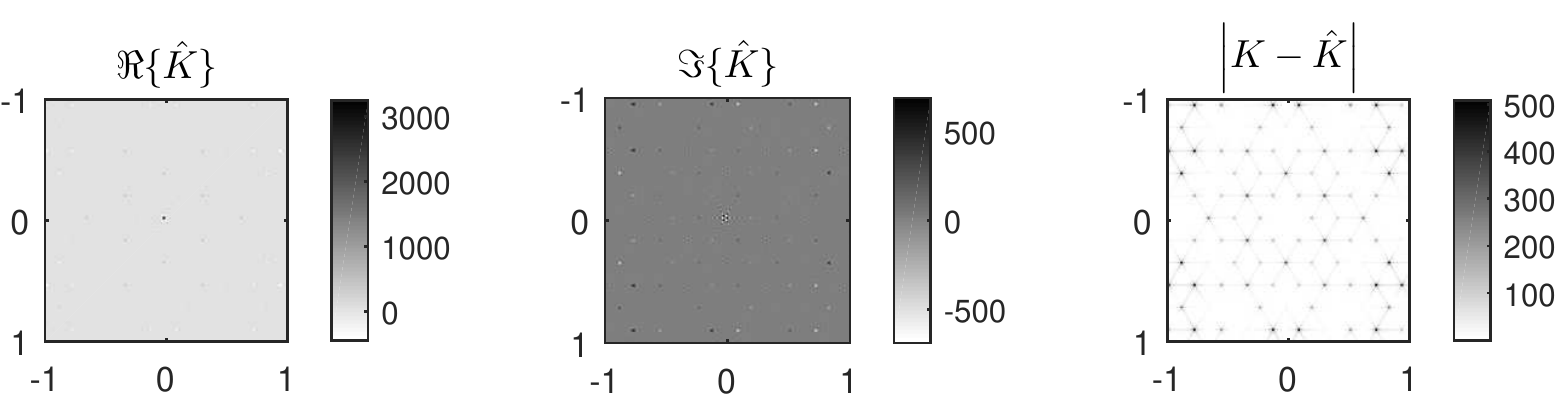}\tabularnewline
\hline 
{\scriptsize{}{}5}  &
\includegraphics[scale=0.45]{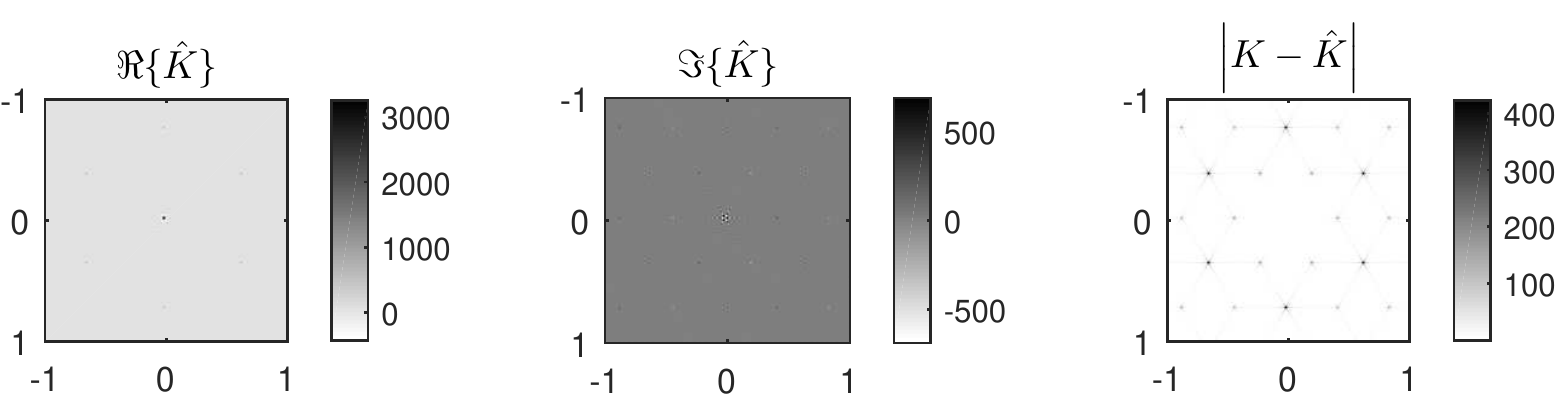}\tabularnewline
\hline 
{\scriptsize{}{}6}  &
\includegraphics[scale=0.45]{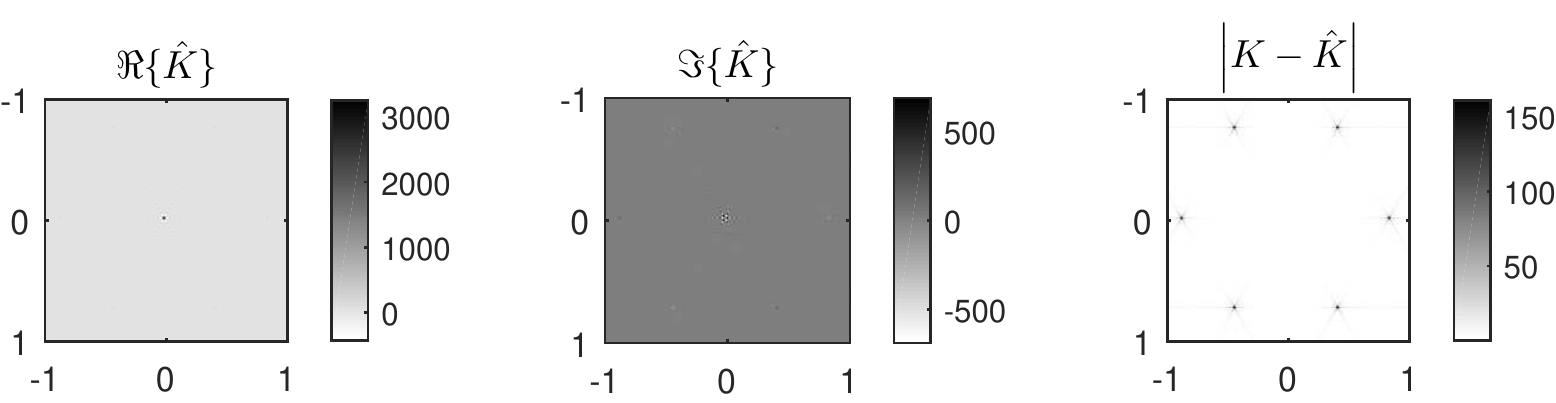}\tabularnewline
\hline 
{\scriptsize{}{}7}  &
\includegraphics[scale=0.45]{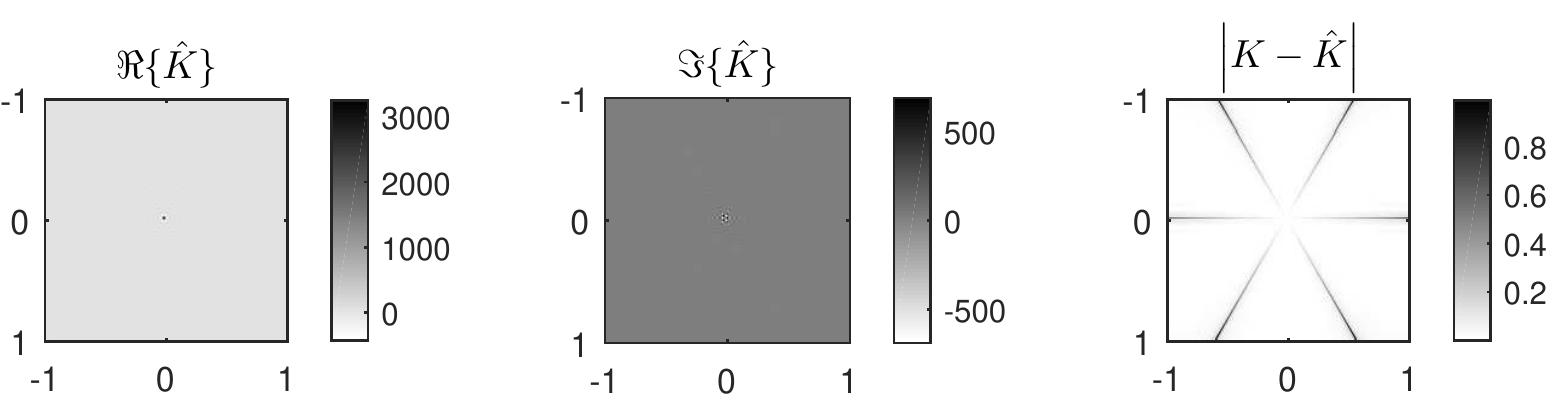}\tabularnewline
\hline 
\end{tabular}

\caption{Approximation of the kernel $K_{\triangle}\left(x,y\right)$ by $\tilde{K}_{\triangle,m}\left(2^{-m}x,2^{-m}y\right)$
using the RI quadratures in Figure \ref{fig:Nodes-for-equilateral}.
The horizontal axis is $x$ and the vertical axis is $y$. \label{fig:RS-quad-approx-K_triangle}}
\end{figure}

\subsection{Tetrahedron limited functions}

We say a function $f\left(x,y,z\right)$, $\left(x,y,z\right)\in\mathbb{R}^{3}$
is tetrahedron limited if its Fourier transform $\hat{f}\left(k_{x},k_{y},k_{z}\right)$
is supported within a tetrahedral region $T\subset\mathbb{R}^{3}$:
\begin{align}
f\left(x,y\right) & =\int_{T}\hat{f}\left(k_{x},k_{y},k_{z}\right)\mathrm{e}^{2\pi\mathrm{i}\left(k_{x}x+k_{y}y+k_{z}z\right)}dk_{z}dk_{y}dk_{x}\label{eq:fourier-representation-1}
\end{align}
Without loss of generality, let $T$ be parametrized by 
\begin{align}
T & =\left\{ \left(k_{x},k_{y},k_{z}\right)\,|\,0\le k_{z}\le h,\,0\le k_{y}\le\Delta pk_{z},\,\left|k_{x}\right|\le k_{y}s,\right\} 
\end{align}
for some $\Delta p,s,h\in\mathbb{R}^{+}$. Define $K_{\triangleleft}\left(x,y,z\right)$
to be 
\begin{align}
 & K_{\triangleleft}\left(x,y,z\right)\nonumber \\
 & =\int_{0}^{h}\int_{0}^{k_{z}\Delta p}\int_{-k_{y}s}^{k_{y}s}\mathrm{e}^{\mathrm{i}2\pi\left(k_{x}x+k_{y}y+k_{z}z\right)}dk_{x}dk_{y}dk_{z}\nonumber \\
 & =-\frac{h^{2}\Delta p}{2\pi x}\left\{ \begin{array}{l}
\frac{\mbox{expc}\left(\mathrm{i}2\pi h\left[z+\Delta p\left(y+sx\right)\right]\right)-\mbox{expc}\left(\mathrm{i}2\pi hz\right)}{2\pi h\Delta p\left(y+sx\right)}\\
-\frac{\mbox{expc}\left(\mathrm{i}2\pi h\left[z+\Delta p\left(y-sx\right)\right]\right)-\mbox{expc}\left(\mathrm{i}2\pi hz\right)}{2\pi h\Delta p\left(y-sx\right)}
\end{array}\right\} 
\end{align}
where $\mbox{expc}\left(x\right)=\left(\exp\left(x\right)-1\right)x^{-1}$,
implying $\mbox{expc}\left(\mathrm{i}x\right)=\mbox{sinc}\left(x\right)+\mathrm{i}\mbox{cosinc}\left(x\right)$,
with $\mbox{cosinc}\left(x\right)=\left(1-\cos\left(x\right)\right)x^{-1}$.
For an equilateral tetrahedron, choose $h=\sqrt{\frac{2}{3}}$, $\Delta p=\frac{\sqrt{3}}{6}h$,
$s=\sqrt{3}$ and add the resulting kernel with its $\frac{2}{3}\pi$
and $\frac{4}{3}\pi$ rotated versions around the $z$-axis (see Figure
\ref{fig:K-tetra}).

\begin{figure}
\center\includegraphics[scale=0.45]{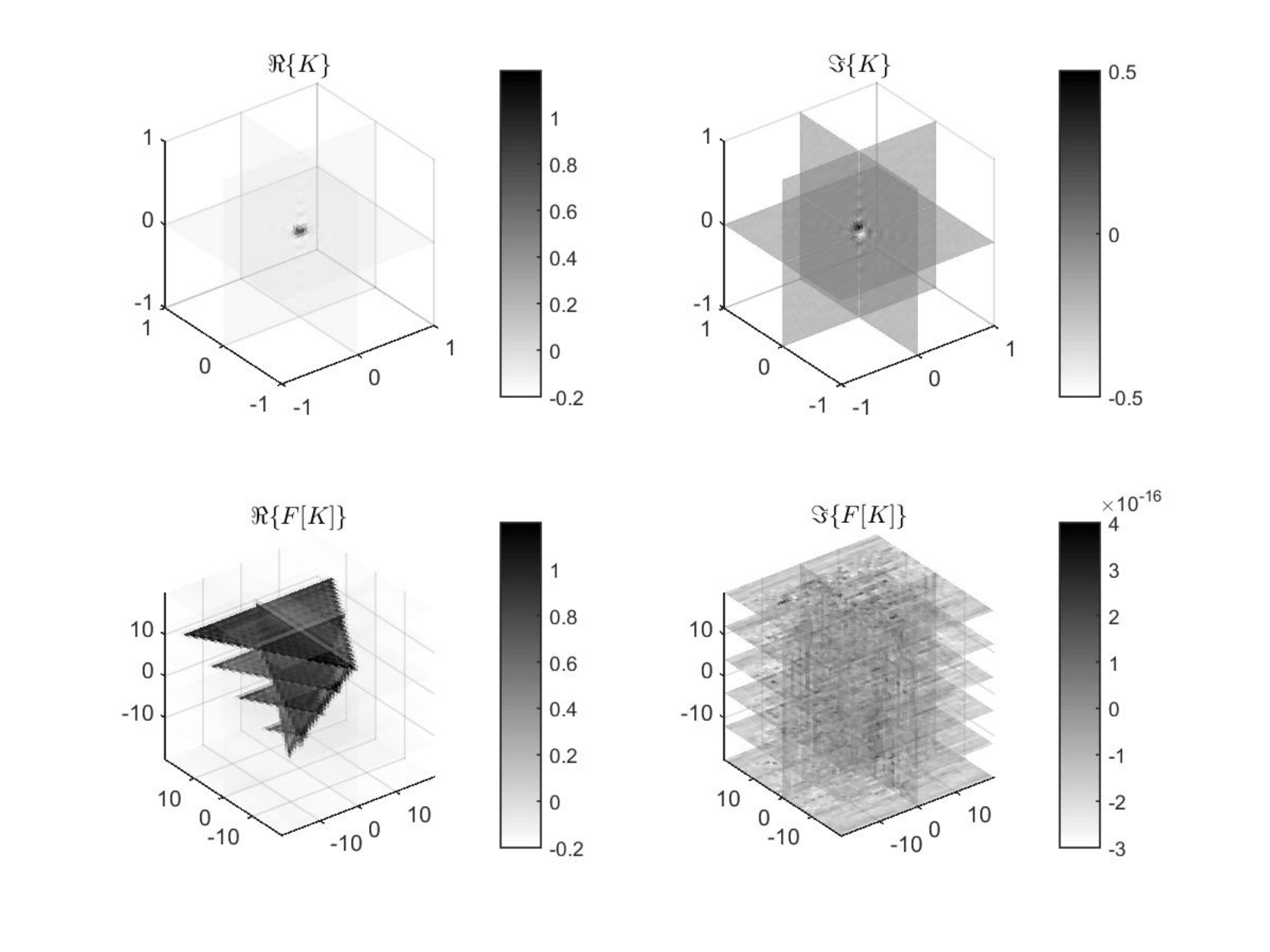}\caption{Real, imaginary parts of $K\left(x,y,z\right)=B^{-3}\sum_{n=0}^{2}K_{\triangleleft}\left(Bx_{n},By_{n},Bz_{n}\right)\exp\left(\mathrm{i}2\pi B\left[-\frac{1}{2}z+\left(\frac{1}{2}+\frac{\sqrt{3}}{3}y\right)\right]\right)$
and its Fourier transform for $h=\sqrt{\frac{2}{3}}$, $\Delta p=\frac{\sqrt{3}}{6}h$,
$s=\sqrt{3}$ and $B=20$. Here $\left[x_{n},y_{n},z_{n}\right]^{T}=R_{\left[0,0,1\right]}\left(\frac{2}{3}\pi n\right)\left[x,y,z\right]^{T}$
for $n=0,1,2$. \label{fig:K-tetra}}
\end{figure}

Similar to the triangle-limited case, construction of the nodes is
equivalent to discretization of the Fourier integral using cascaded
Gauss-Legendre quadratures that can accurately approximate the representation
kernel within a region of interest:

\begin{align}
 & K_{\triangleleft}\left(x,y,z\right)\nonumber \\
 & =\hspace{-0.1cm}\int_{0}^{h}\hspace{-0.1cm}\int_{0}^{k_{z}\Delta p}\hspace{-0.2cm}\int_{-k_{y}s}^{k_{y}s}\hspace{-0.2cm}\mathrm{e}^{\mathrm{i}2\pi\left(k_{x}x+k_{y}y+k_{z}z\right)}dk_{x}dk_{y}dk_{z}\nonumber \\
 & =-\frac{h\Delta ps}{4\pi^{2}}\partial_{y}\partial_{z}\hspace{-0.1cm}\int_{-1}^{1}\hspace{-0.1cm}\int_{0}^{1}\hspace{-0.1cm}\int_{0}^{1}\hspace{-0.2cm}\mathrm{e}^{\mathrm{i}2\pi h\left(\Delta p\left[s\,k_{x}x+y\right]k_{y}+z\right)k_{z}}dk_{z}dk_{y}dk_{x}\nonumber \\
 & =2h^{3}\Delta p^{2}s\hspace{-0.1cm}\sum_{m,n,l}\hspace{-0.1cm}\left\{ \hspace{-0.2cm}\begin{array}{l}
\alpha_{m}\beta_{m,n}\gamma_{m,n,l}k_{z}^{2}\left[m\right]k_{y}\left[m,n\right]\\
\times\mathrm{e}^{\mathrm{i}2\pi h\left(\Delta p\left[s\,\left(2k_{x}\left[m,n,l\right]-1\right)x+y\right]k_{y}\left[m,n\right]+z\right)k_{z}\left[m\right]}
\end{array}\hspace{-0.2cm}\right\} 
\end{align}
where $\left(\alpha_{m},k_{z}\left[m\right]\right)$, $\left(\beta_{m,n},k_{y}\left[m,n\right]\right)$
and $\left(\gamma_{m,n,l},k_{x}\left[m,n,l\right]\right)$ are quadratures
for approximating $\mbox{sinc}\left(Bx\right)$ as a sum of cosines
(see (\ref{eq:sinc_in_cos_approx})) for $B$ equal to $2\pi h\left(\Delta p\left(Y+sX\right)+Z\right)$
and $2\pi h\Delta p\,k_{z}\left[m\right]\left(Y+sX\right)$ and $4\pi h\Delta p\,s\,k_{z}\left[m\right]k_{y}\left[m,n\right]X$,
respectively, where $X=\max_{\left(x,y,z\right)\in S+S}\left|x\right|$,
$Y=\max_{\left(x,y,z\right)\in S+S}\left|y\right|$ and $Z=\max_{\left(x,y,z\right)\in S+S}\left|z\right|$.

Let us consider a unit tetrahedron, i.e. a tetrahedron with all sides
equal to one. In order to construct nodes that satisfy the symmetries
of the unit tetrahedron, first construct nodes for the sub-tetrahedron
with $s=\sqrt{3}$, $\Delta p=\sqrt{2}$, $h=\sqrt{24}^{-1}$ (see
Figure \ref{fig:sub-tetra}) and then use the symmetry group of the
regular tetrahedron. Namely, apply rotations $R_{\hat{v}_{4}}\left(\frac{2}{3}\pi n\right)R_{\hat{v}_{1}}\left(\frac{2}{3}\pi m\right)$,
for $n,m=0,1,2$, and $R_{\hat{v}_{4}}\left(\frac{2}{3}\pi n\right)R_{\hat{v}_{2}}\left(\frac{4}{3}\pi\right)$
to the quadrature of sub-tetrahedron. Here $\hat{v}=v/\left|v\right|$
is the unit vector pointing along vector $v$ with $v_{i}$, for $i=1,2,3,4$,
given by 
\begin{align}
v_{1} & =\left[-\frac{1}{2},-\frac{\sqrt{3}}{6},-\frac{1}{6}\sqrt{\frac{3}{2}}\right]\\
v_{2} & =\left[\frac{1}{2},-\frac{\sqrt{3}}{6},-\frac{1}{6}\sqrt{\frac{3}{2}}\right]\\
v_{3} & =\left[0,\frac{\sqrt{3}}{3},-\frac{1}{6}\sqrt{\frac{3}{2}}\right]\\
v_{4} & =\left[0,0,\sqrt{\frac{2}{3}}-\frac{1}{6}\sqrt{\frac{3}{2}}\right].
\end{align}
and 
\begin{align}
R_{u}\left(\theta\right) & =\left[\begin{array}{ccc}
r_{1}\left(\theta\right) & r_{123}\left(-\theta\right) & r_{123}\left(\theta\right)\\
r_{123}\left(\theta\right) & r_{2}\left(\theta\right) & r_{231}\left(-\theta\right)\\
r_{132-\left(\theta\right)} & r_{231}\left(\theta\right) & r_{3}\left(\theta\right)
\end{array}\right]
\end{align}
is the matrix for a rotation by an angle $\theta$ around the unit
vector $u=\left[u_{1},u_{2},u_{3}\right]$ with 
\begin{align}
r_{i}\left(\theta\right) & =\cos\theta+u{}_{i}^{2}\left(1-\cos\theta\right)\\
r_{ijk}\left(\theta\right) & =u_{i}u_{j}\left(1-\cos\theta\right)+u_{k}\sin\theta.
\end{align}
The quadrature generated for the regular tetrahedron using the discussed
steps is presented in Figure \ref{fig:isometries-of-tetra} .

\begin{figure}
\center

\includegraphics[scale=0.35]{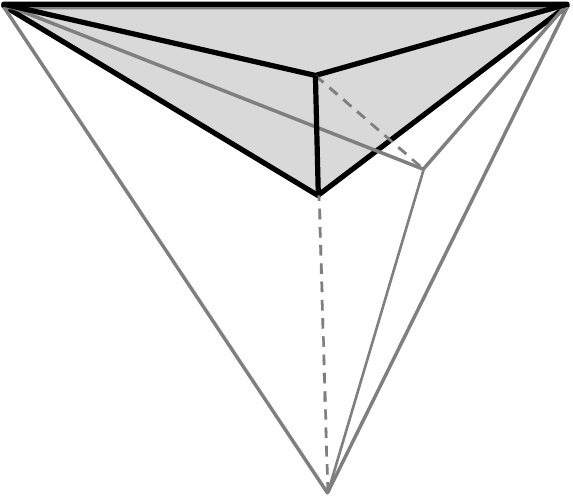}

\includegraphics[scale=0.25]{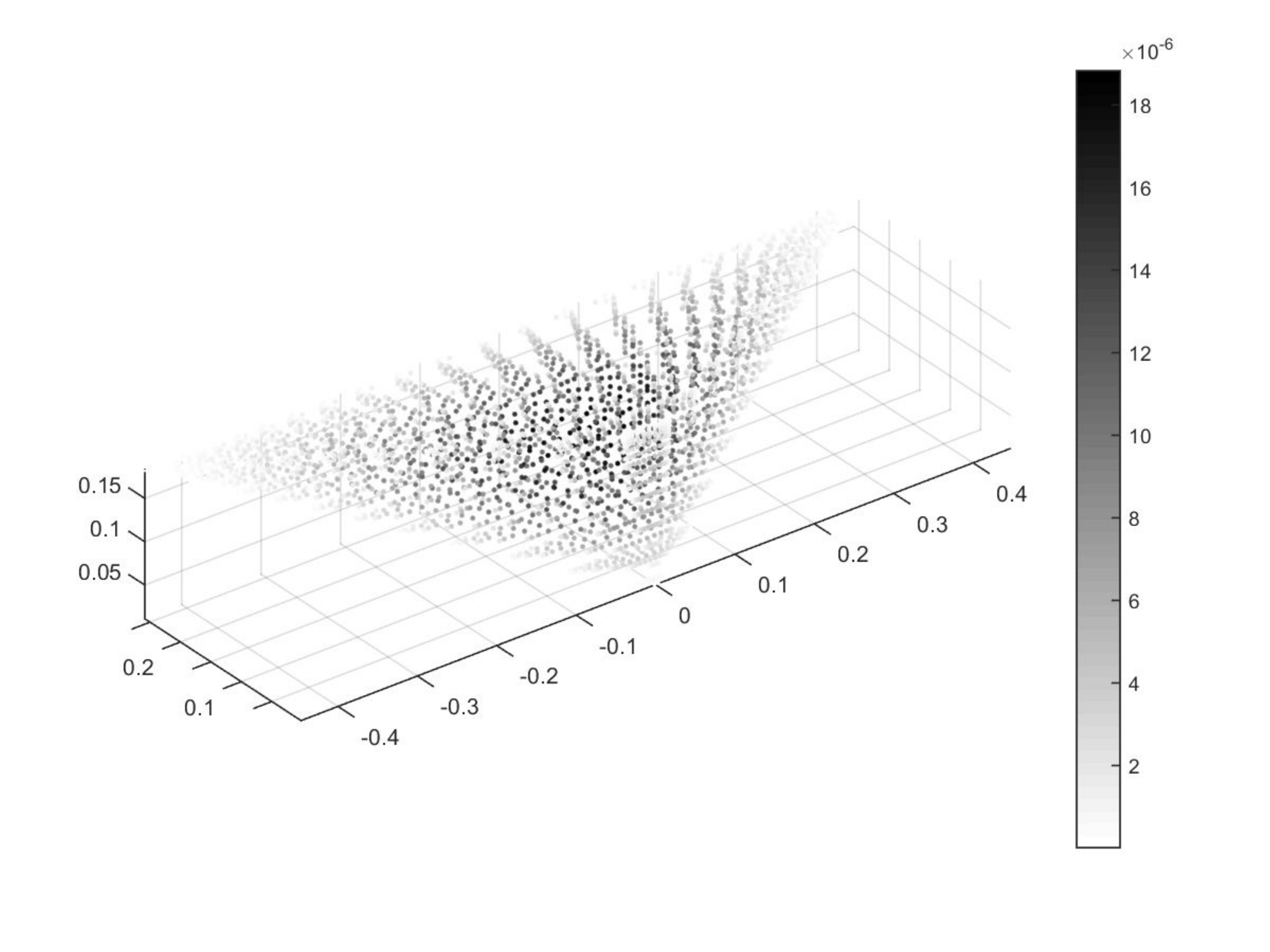}

\caption{Construction of the nodes that satisfies the isometries of the tetrahedron
can be obtained by construct nodes for the shaded tetrahedron (top)
and apply symmetry group of the tetrahedon (see Figure \ref{fig:isometries-of-tetra})
to the constructed nodes (bottom). \label{fig:sub-tetra}}
\end{figure}

\begin{figure}
\center

\includegraphics[scale=0.35]{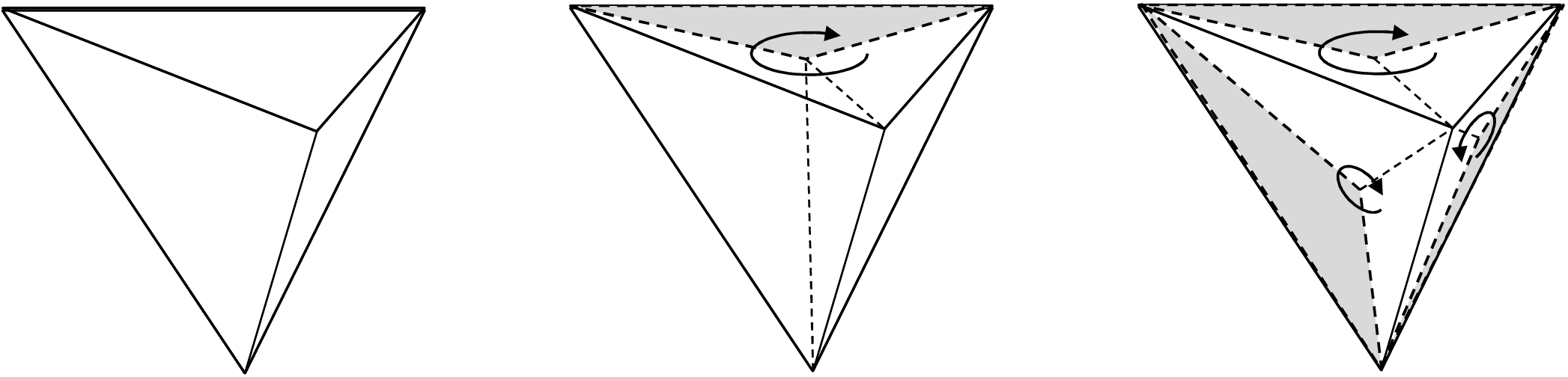}

\includegraphics[scale=0.45]{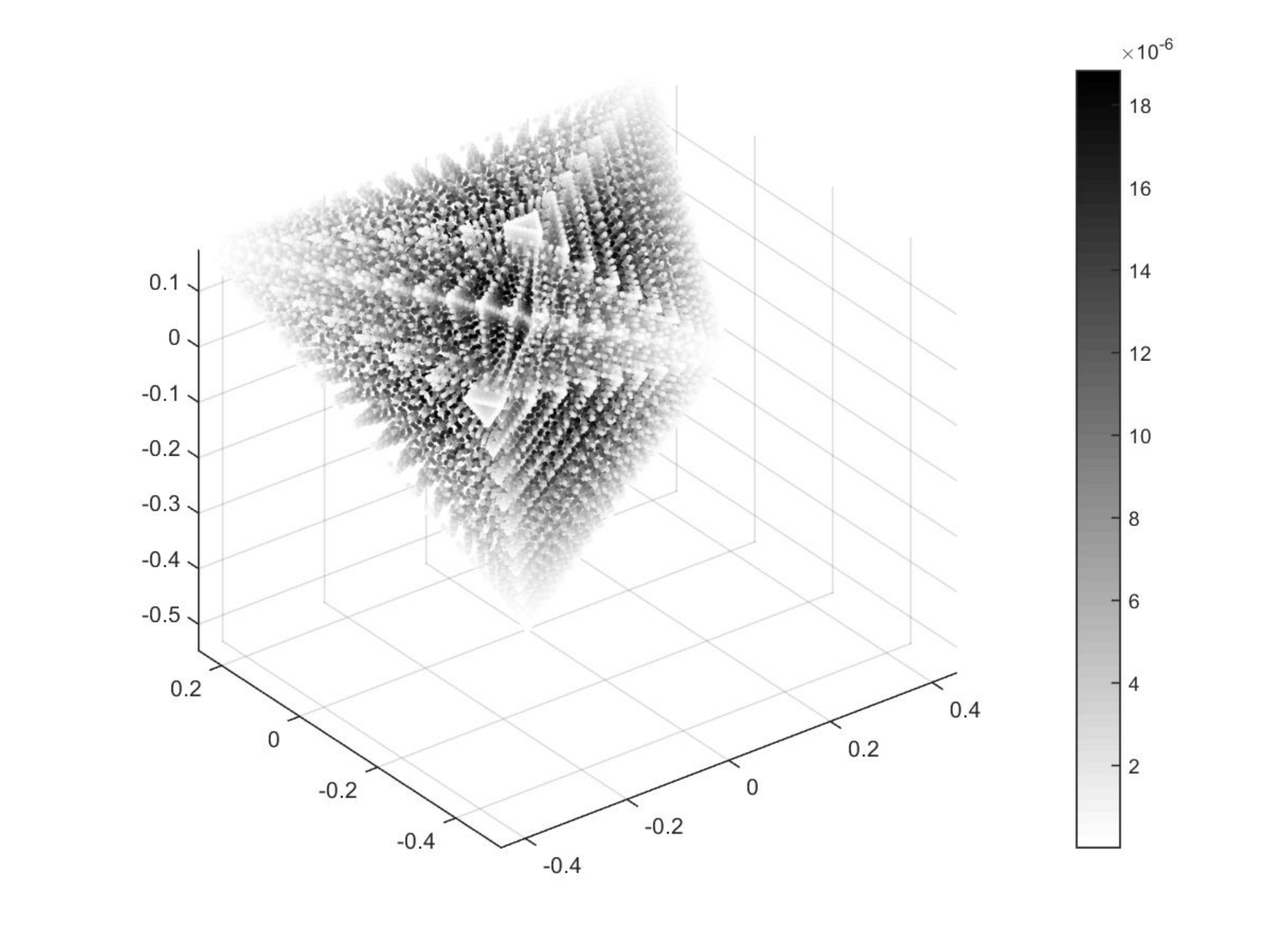}

\caption{A regular tetrahedron (top left) is symmetric under rotation with
respect to one of its faces (top middle). Its rotational symmetries
with respect to 3 out of 4 faces (top right). There are a total of
$4\times3\times2=24$ isometries of the tetrahedron which form the
symmetry group of the tetrahedron. Out of these $12$ of them preserves
orientation which are used to generate the corresponding quadrature
(bottom) from the quadrature of the sub-tetrahedron in Figure \ref{fig:sub-tetra}.
\label{fig:isometries-of-tetra}}
\end{figure}

\section{Cone-limited functions\label{sec:Cone-limited-functions}}

In seismic or electromagnetic signal processing the signal is modeled
through the wave equation. For an acoustic homogeneous medium with
wave speed $c=p^{-1}$, the wave equation provides a dispersion relationship
between the frequency $\omega$ and wave number \textbf{$\mathbf{k}$},
$\left|\mathbf{k}\right|=\omega p$. For a heterogeneous medium, the
dispersion relationship becomes an inequality $\left|\mathbf{k}\right|\le\omega p_{\max}$
where the maximum slowness $p_{\max}=c_{\min}^{-1}$ is the one over
the minimum speed $c_{\min}$ of the heterogeneous medium. Given the
maximum frequency $\omega_{0}$ of the recording system, the Fourier
transform of the measurement is supported inside the cone $C=\left\{ \left(\omega,\mathbf{k}\right)\in\mathbb{R}\times\mathbb{R}^{n}|\omega\in\left[-\omega_{0},\omega_{0}\right],\left|{\bf k}\right|\le\omega p_{\max}\right\} $
for $t\in\mathbb{R}$ and $\mathbf{x}\in\mathbb{R}^{n}$, which is
referred to as the signal cone. Temporal and spatial Fourier transform
of video images also have their Fourier transforms supported effectively
in a similar cone.

We say a function $f\left(t,\mathbf{x}\right)$ is cone-limited, $C$-limited
for short, if its Fourier transform $\hat{f}\left(\omega,\mathbf{k}\right)$
is supported within the cone $C$. $C$-limited functions are invariant
under convolution with the kernel $K\left(t,\mathbf{x}\right)$, whose
Fourier transform, $\hat{K}\left(\omega,\mathbf{k}\right)$, is equal
to one within $C$: 
\begin{align}
K\left(t,\mathbf{x}\right) & =\int_{C}\mathrm{e}^{\mathrm{i}2\pi\left(\omega t-{\bf k}\cdot{\bf x}\right)}d\omega\,d{\bf k}.
\end{align}
For $n$ odd, $K\left(t,\mathbf{x}\right)$ can be represented in
terms of elementary functions. For example, for $n=1$ and $n=3$,
we have 
\begin{align}
K\left(t,x\right) & =\frac{\omega_{0}}{\pi x}\left(\begin{array}{l}
\mathrm{cosinc}\left(2\pi\omega_{0}\left[t+p_{\max}x\right]\right)\\
-\mathrm{cosinc}\left(2\pi\omega_{0}\left[t-p_{\max}x\right]\right)
\end{array}\right)
\end{align}
and 
\begin{align}
K\left(t,\mathbf{x}\right) & =-\frac{\omega_{0}}{\pi r}\partial_{r}\left\{ \frac{1}{r}\left[\begin{array}{l}
\mbox{cosinc}\left(2\pi\omega_{0}\left[t+p_{\max}r\right]\right)\\
-\mbox{cosinc}\left(2\pi\omega_{0}\left[t-p_{\max}r\right]\right)
\end{array}\right]\right\} ,
\end{align}
respectively. On the other hand, for $n$ even, $K\left(t,\mathbf{x}\right)$
is a multivariate special function. For example, for $n=2$ , we have

\begin{align}
K\left(t,\mathbf{x}\right) & =\int_{0}^{2\pi}\int_{0}^{p_{\max}}\int_{-\omega_{0}}^{\omega_{0}}\mathrm{e}^{\mathrm{i}2\pi\omega\left(t-pr\cos\theta\right)}\omega^{2}p\,d\omega\,dp\,d\theta\nonumber \\
 & =\frac{2\omega_{0}^{2}\,p_{\max}}{\pi r}\int_{0}^{1}J_{1}(2\pi\omega r\omega_{0}p_{\max})\cos(2\pi\omega\omega_{0}t)\,\omega\,d\omega\label{eq:integral_J1*cos*w}
\end{align}
where $\left|\mathbf{x}\right|=r$ and $J_{n}\left(t\right)$ is the
$n^{th}$ order Bessel function of the first kind. Becuase all the
cases for $n$ even, requires evaluation of an integral of the form
(\ref{eq:integral_J1*cos*w}), for the ease of our discussion we will
focus on the case $n=2$.

Following \cite{YF2014}, $J_{1}\left(x\right)$ can be approximate
by $J_{1}(x)\approx\sum_{m=1}^{M}\alpha_{m}\mbox{cosinc}(\gamma_{m}x)$,
where $\left(\alpha_{m},\gamma_{m}\right)$ satisfies a moment problem.
(see Table \ref{tab:Example-of-moment-problems} in Appendix \ref{sec:Generalization-of-Pad=00003D0000E9})
. Consequently, we have 
\begin{align}
K\left(t,\mathbf{x}\right) & \approx\tilde{K}\left(t,\mathbf{x}\right)\nonumber \\
 & =\frac{\omega_{0}}{\pi^{2}}\sum_{m=1}^{M}\frac{\alpha_{m}}{\gamma_{m}}\left(\begin{array}{l}
\mbox{sinc}(2\pi\omega_{0}t)\\
-\frac{1}{2}\left[\begin{array}{l}
\mathrm{sinc}\left(2\pi\omega_{0}\left[\gamma_{m}p_{\max}r-t\right]\right)\\
+\mathrm{sinc}\left(2\pi\omega_{0}\left[\gamma_{m}p_{\max}r+t\right]\right)
\end{array}\right]
\end{array}\right)
\end{align}
The least square error is given by 
\begin{align}
\int\left|K(t,\mathbf{x})-\tilde{K}(t,\mathbf{x})\right|^{2}dt\,d\mathbf{x} & =\frac{4\omega_{0}^{3}\,p_{\max}^{2}}{\pi^{2}3}{\scriptscriptstyle \left(\frac{1}{2}-\sum_{m=1}^{M}\alpha_{m}\gamma_{m}+\sum_{m,m'=1}^{M}\alpha_{m}w_{m,m'}\alpha_{m'}\right)},
\end{align}
where 
\begin{align}
w_{m,m'}=\int_{\mathbb{R}^{+}}\frac{\sin^{2}(\gamma_{m}r/2)}{\gamma_{m}r/2}\,\frac{\sin^{2}(\gamma_{m'}r/2)}{\gamma_{m'}r/2}\frac{dr}{r},
\end{align}
which can be explicitly computed using integration by parts and the
identities 3.827-3.828 on pages 462-463 of \cite{gradshteyn2007}.

Because

\begin{align}
\tilde{K}_{p}\left(t,\mathbf{x}\right) & =\frac{1}{r^{2}}\left\{ \mathrm{sinc}\left(2\pi\omega_{0}t\right)-\frac{1}{2}\hspace{-0.1cm}\left[\hspace{-0.1cm}\begin{array}{l}
\mathrm{sinc}\left(2\pi\omega_{0}\left[pr-t\right]\right)\\
+\mathrm{sinc}\left(2\pi\omega_{0}\left[pr+t\right]\right)
\end{array}\hspace{-0.1cm}\right]\right\} 
\end{align}
has its Fourier transform 
\begin{multline}
\int\tilde{K}_{p}\left(t,\mathbf{x}\right)\mathrm{e}^{\mathrm{i}2\pi\left(\omega t-\mathbf{k}\cdot\mathbf{x}\right)}dt\,d\mathbf{x}\\
=2\pi^{2}\left|\omega p\right|\chi_{\left[-1,1\right]}\left(\omega\omega_{0}^{-1}\right)\mbox{arcsinh}\left(\frac{\sqrt{\left(\omega p\right)^{2}-\left|\mathbf{k}\right|^{2}}}{\left|\mathbf{k}\right|}\right),\\
0<\left|\mathbf{k}\right|\le\left|\omega p\right|,
\end{multline}
supported over 
\begin{align}
C_{p} & =\left\{ \left(\omega,\mathbf{k}\right)\in\mathbb{R}\times\mathbb{R}^{2}|\omega\in\left[-\omega_{0},\omega_{0}\right],\left|{\bf k}\right|\le\omega p\right\} ,
\end{align}
$\tilde{K}\left(t,\mathbf{x}\right)$ is C-limited within $\tilde{C}=\left\{ \left(\omega,\mathbf{k}\right)\in\mathbb{R}\times\mathbb{R}^{2}|\omega\in\left[-\omega_{0},\omega_{0}\right],\left|{\bf k}\right|\le\omega p_{\max}\,\max_{m}\left\{ \gamma_{m}\right\} \right\} $.
If $\max_{m}\left\{ \gamma_{m}\right\} \approx1$, then the cone-limit
$C$ of $K\left(t,\mathbf{x}\right)$ is approximated by the cone-limit
$\tilde{C}$ of $\tilde{K}\left(t,\mathbf{x}\right)$ which is the
case in practice. For $\omega_{0}=50$ and $p_{\max}=1$, we present
$\tilde{K}\left(t,\mathbf{x}\right)$ and its Fourier transform in
Figure \ref{fig:K-cone}.

For $n=2$, discretizaton of the integral representation of $K\left(t,\mathbf{x}\right)$
can be obtained by 
\begin{align}
K\left(t,\mathbf{x}\right) & =2p_{\max}^{2}\omega_{0}^{3}\int_{-1}^{1}\int_{0}^{1}\int_{-1}^{1}\mathrm{e}^{\mathrm{i}2\pi\phi\left(\omega,p,\tau;t,x,y\right)}\frac{\omega^{2}p}{\sqrt{1-\tau^{2}}}d\omega\,dp\,d\tau\nonumber \\
 & =2p_{\max}^{2}\omega_{0}^{3}\sum_{m,n,l}a_{m,n,l}\mathrm{e}^{\mathrm{i}2\pi\phi\left[m,n,l\right]\left(t,x,y\right)}
\end{align}
where 
\begin{align}
{\scriptstyle a}_{m,n,l} & {\scriptstyle =\alpha_{m}\beta_{m,n}\gamma_{m,n,l}\frac{\omega\left[m\right]^{2}p\left[m,n\right]}{\sqrt{1-\left(\tau\left[m,n,l\right]\right)^{2}}}}\\
{\scriptstyle \phi\left(\omega,p,\tau;t,x,y\right)} & {\scriptstyle =\omega_{0}\omega\left(t-p_{\max}p\left[x\tau+y\sqrt{1-\tau^{2}}\right]\right)}\\
{\scriptstyle \phi\left[m,n,l\right]\left(t,x,y\right)} & {\scriptstyle =\omega_{0}\omega\left[m\right]\left(t-p_{\max}p\left[m,n\right]\left[x\tau\left[m,n,l\right]+y\sqrt{1-\left(\tau\left[m,n,l\right]\right)^{2}}\right]\right)}
\end{align}
with $\left(\alpha_{m},\omega\left[m\right]\right)$ and $\left(\beta_{m,n},p\left[m,n\right]\right)$
are quadratures for approximating $\mbox{sinc}\left(Bx\right)$ as
a sum of cosines (see (\ref{eq:sinc_in_cos_approx})) for $B$ equal
to $4\pi\omega_{0}\left(T+p_{\max}R\right)$ and $2\pi\omega_{0}\omega\left[m\right]p_{\max}R$,
respectively, and $\left(\gamma_{m,n,l},\tau\left[m,n,l\right]\right)$
is the quadrature for approximating $J_{0}\left(Bx\right)$ as a sum
of cosines for $B$ equal to $\omega_{0}\omega\left[m\right]p_{\max}p\left[m,n\right]R$,
where $R=\max_{\left(t,x,y\right)\in S+S}\sqrt{x^{2}+y^{2}}$, and
$T=\max_{\left(t,x,y\right)\in S+S}\left|t\right|$, for some region
of interest $S\subset\mathbb{R}\times\mathbb{R}^{2}$. While $\left(\alpha_{m},\omega\left[m\right]\right)$
and $\left(\beta_{m,n},p\left[m,n\right]\right)$ are equivalent to
Gauss-Legendre quadrature, and computation of $\left(\gamma_{m,n,l},\tau\left[m,n,l\right]\right)$
requires solving the following moment problem related to the approximation
$J_{0}(x)\approx\sum_{m=1}^{M}\alpha_{m}\cos(\gamma_{m}x)$, which
is equivalent to finding the Clenshaw-Curtis quadrature (see Table
\ref{tab:Example-of-moment-problems} in Appendix \ref{sec:Generalization-of-Pad=00003D0000E9}).

\begin{figure}
\center\includegraphics[scale=0.45]{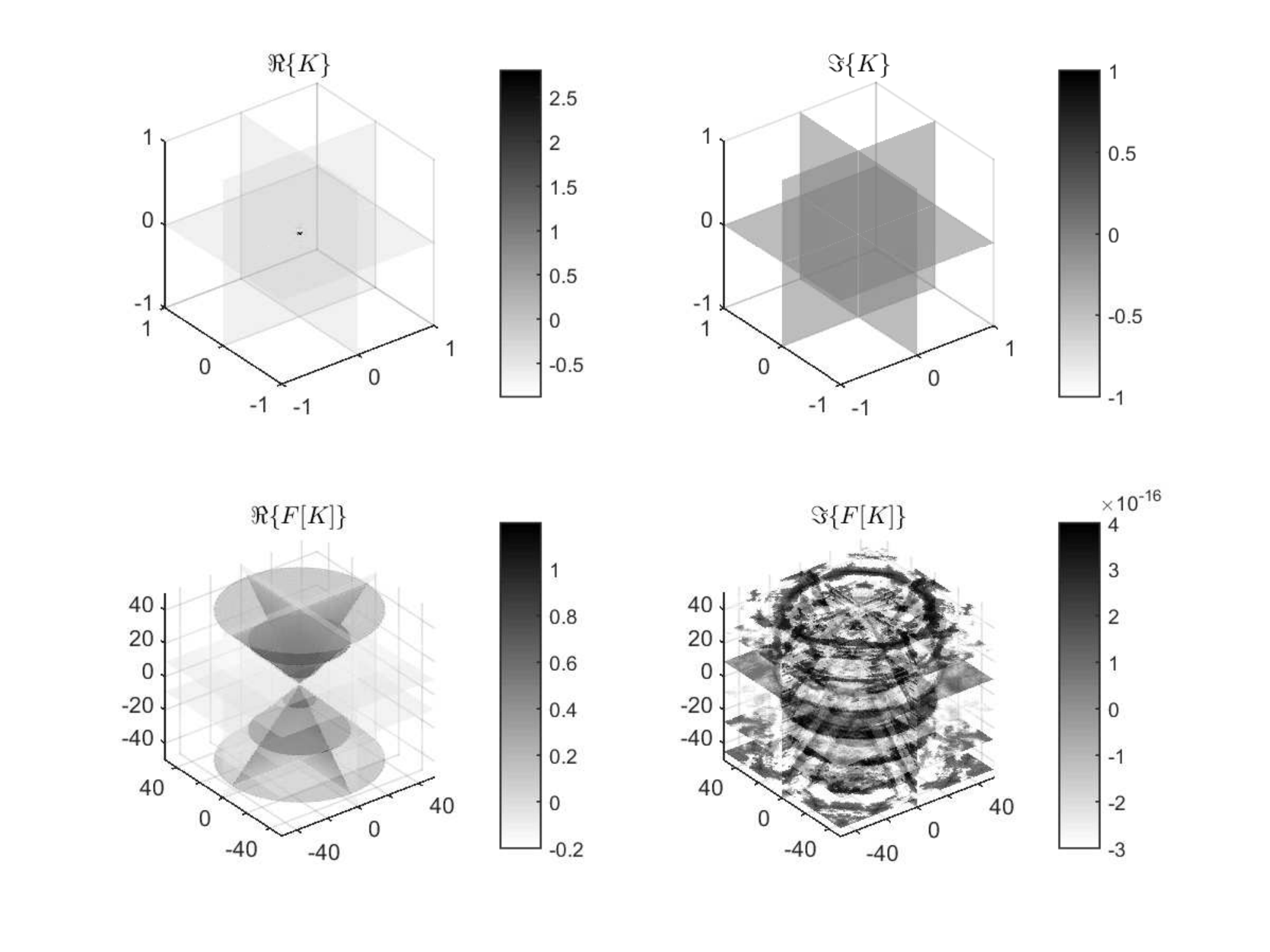}\caption{Real, imaginary parts of $\tilde{K}\left(t,\mathbf{x}\right)$ and
its Fourier transform for $\omega_{0}=50$ and $p_{\max}=1$. \label{fig:K-cone}}
\end{figure}

\begin{figure}
\center

\includegraphics[scale=0.45]{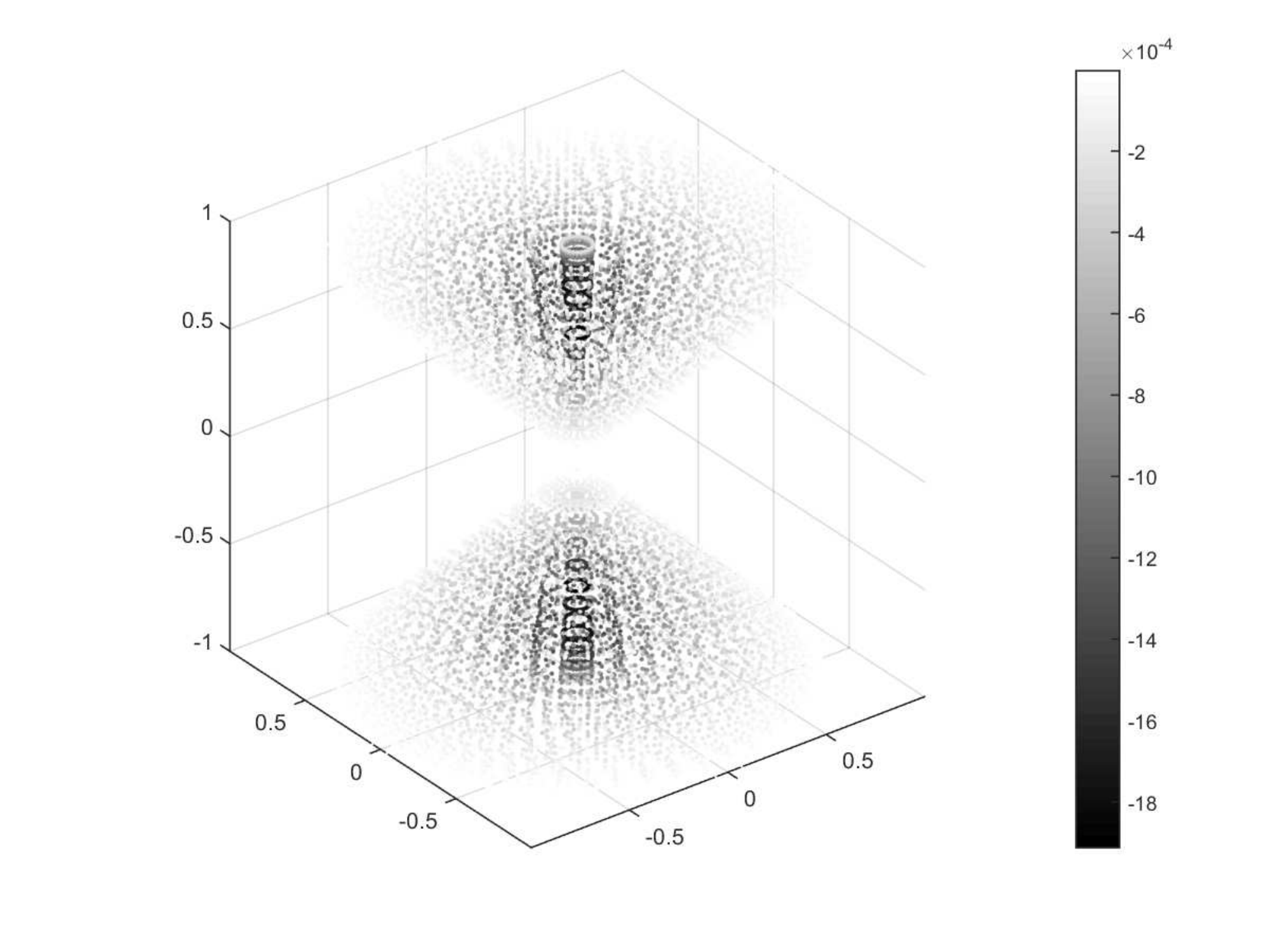}

\caption{Quadrature $\left(a_{mnl},\omega_{0}\left(\omega\left[m\right],p_{\max}\mathbf{p}\left[m,n,l\right]\right)\right)$
for $C$-limited functions for $\omega_{0}=1$ and $p_{\max}=1$.
\label{fig:quad_cone}}
\end{figure}

Note that quadrature for $B$-limited functions, whose Fourier transforms
are supported within a ball $B=\left\{ \mathbf{k}\in\mathbb{R}^{n}|\left|{\bf k}\right|\le k_{\max}\right\} $,
can be generated in a similar fashion (see Figure \ref{fig:quad_ball_3D}).

\begin{figure}
\center

\includegraphics[scale=0.45]{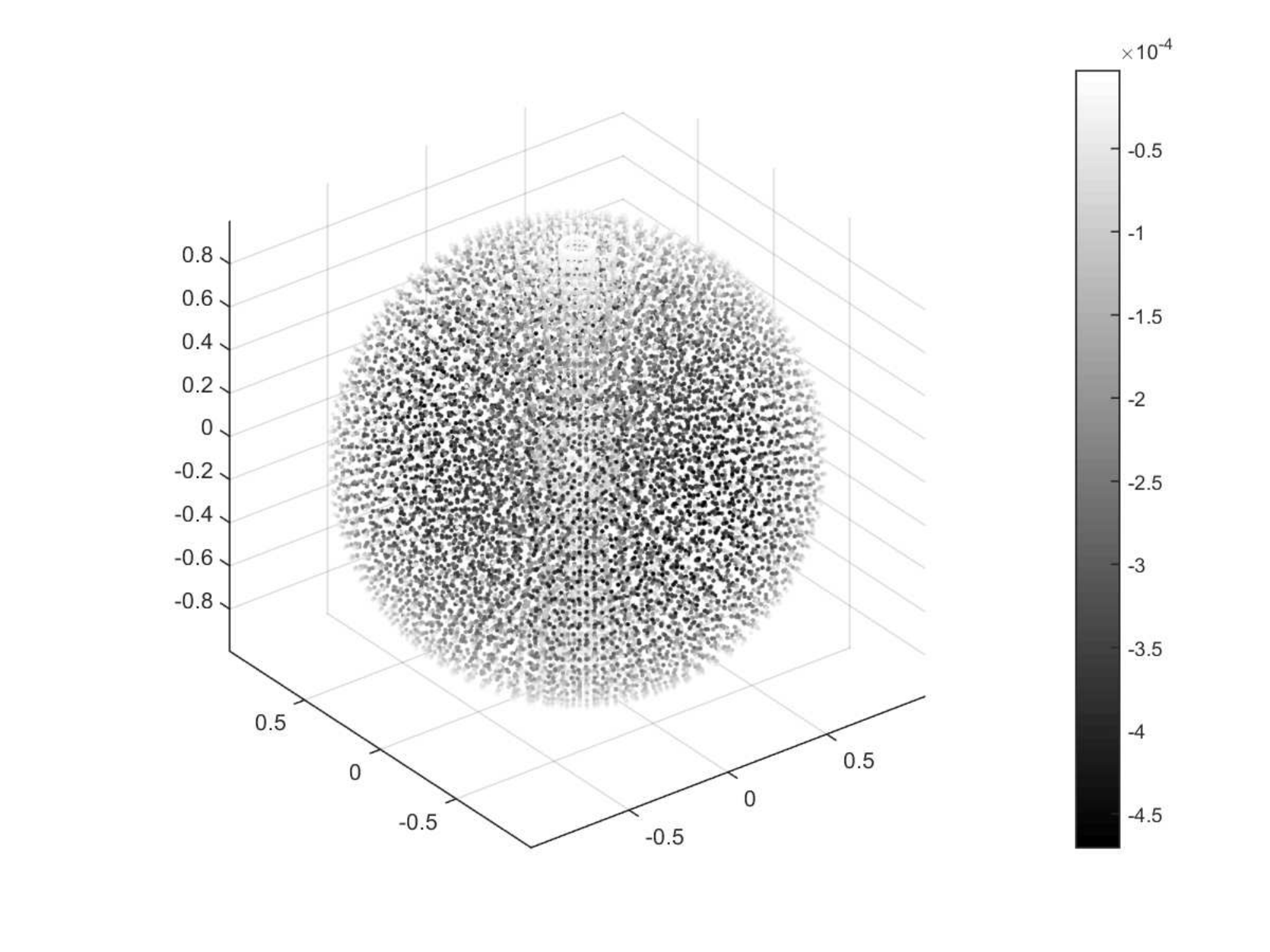}

\caption{Quadrature $\left(a_{mnl},k_{\max}\mathbf{k}_{m,n,l}\right)$ for
$C$-limited functions for $k_{\max}=1$. \label{fig:quad_ball_3D}}
\end{figure}

\end{document}